\newtheorem{theorem}{Theorem}[section]
\newtheorem{lemma}{Lemma}[section]
\newtheorem{corollary}{Corollary}[section]
\newtheorem{proposition}{Proposition}[section]
\newcommand{\prf}{\begin{proof}{Proof}}
\def\A{{\cal A}} \def\a{{\alpha}}
 \def\b{{\beta}}                         
 \def\d{{\delta}}      
                 \def\e{{\varepsilon}}                   
 \def\g{{\gamma}}
 \def\l{{\lambda}}     
 \def\m{{\mu}}                          
                 \def\n{{\nu}}
 \def\q{{\theta}}                       
 \def\r{{\rho}}				\def\RR{\mathbb{R}}
                 \def\s{{\sigma}}
\def\X{{\cal X}} 				 
                         \def\ZZ{\mathbb{Z}}
\def\sumin{{\sum_{i=1}^n}}
\def\sumik{{\sum_{i=1}^k}}
\def\sumjk{{\sum_{j=1}^k}}
\def\half{{\frac12}}
\def\thalf{{\textstyle\half}}
\def\ra{\rightarrow}
\def\da{\downarrow}
\def\dotfil{\leaders\hbox to 1em{\hss.\hss}\hfill}
\def\minqquad{\hskip-2em\relax}
\def\qqqquad{\hskip4em\relax}
\def\hexnumber#1{\ifcase#1 0\or 1\or 2\or 3\or 4\or 5\or 6\or 7\or 8\or 9\or A\or B\or C\or D\or E\or F\fi}
\def\BL{{\rm B\kern -0.5pt L}}
\def\UC{{\rm U\kern-0.5pt C}}
\def\lin{\mathop{\rm lin\,}\nolimits}
\def\tdot#1{\kern1.2pt\dot{\vphantom{#1}}\kern-1.2pt
        \dot#1\kern0.8pt\dot{\vphantom{#1}}\kern-0.8pt}
\def\E{\mathord{\rm E}}
\def\Pr{\mathord{\rm P}}
\def\var{\mathop{\rm var}\nolimits}
\mathchardef\given="626A
\def\generalweak#1{\ {\mathchoice{\buildrel #1\over \rightsquigarrow}%
{\raise-2pt\hbox{$\buildrel #1\over \rightsquigarrow$}}{}{}}\ }
\def\generalprob#1{\ {\mathchoice{\raise-1.5pt\hbox{$\buildrel#1\over\ra$}}
{\raise-2pt\hbox{$\buildrel#1\over\ra$}}{}{}}\ }
\def\prob{\generalprob{\small P}}
\def\Pgg{\ {\mathchoice{\raise-1.5pt\hbox{$\buildrel {\small P}\over\gg$}}
{\raise-2pt\hbox{$\buildrel {\small P}\over\gg$}}{}{}}\ }
\def\beginskip{\begingroup\long\def\eet##1\endskip{\endgroup}\eet}
\def\endskip{}
\def\boxit#1{\vbox{\hrule\hbox{\vrule\kern1pt\vbox{\kern1pt #1\kern1pt}\kern1pt\vrule}\hrule}}
\def\eet#1{}
\newcommand{\LEB}{\l}
\def\dsum{\mathop{\mathchar"1350\mathchar"1350}\limits}
\def\dint{\mathop{\int\!\!\int}\limits}
\let\sd\sdev
\journal{Stochastic Processes and its Applications}
\begin{document}

\begin{frontmatter}



\title{Asymptotic Normality of Quadratic Estimators}


\address{}

\author{James Robins}
\ead{robins@hsph.harvard.edu}
\author{Lingling Li}\ead{lili@hsph.harvard.edu}
\author{Eric Tchetgen Tchetgen}\ead{etchetge@hsph.harvard.edu}
\author{Aad van der Vaart\fnref{erc}}\ead{avdvaart@math.leidenuniv.nl}
\fntext[erc]{The research leading to these results has 
received funding from the European Research Council 
under\  ERC Grant Agreement 320637.}

\address{Departments of Biostatistics and Epidemiology\\
School of Public Health\\
Harvard University\fnref{label3}}
\address{Mathematical Institute\\
Leiden University}

\begin{abstract}
We prove conditional asymptotic normality of a class of 
quadratic U-statistics that are
dominated by their degenerate second order part and
have kernels that change with the number of observations. 
These statistics arise in the construction of estimators
in high-dimensional semi- and non-parametric models,
and in the construction of nonparametric confidence sets.
This is illustrated by estimation of the
integral of a square of a density or regression function,
and estimation of the mean response with missing data.
We show that estimators are asymptotically normal
even in the case that the rate is slower than the square
root of the observations.
\end{abstract}

\begin{keyword}
Quadratic functional\sep Projection estimator\sep Rate of convergence \sep U-statistic.
\MSC 62G05 \sep 62G20.
\end{keyword}

\end{frontmatter}

\section{Introduction}
Let $(X_1,Y_1),\ldots, (X_n,Y_n)$ be i.i.d.\ random vectors, taking
values in sets $\X\times\RR$, for an arbitrary 
measurable space $(\X,\A)$ and $\RR$ equipped with the
Borel sets. For given symmetric, measurable functions 
$K_n: \X\times\X\to \RR$ consider the $U$-statistics
\begin{equation}
U_n={1\over n(n-1)}\dsum_{1\le r\not=s\le n}K_n(X_r,X_s)Y_rY_s.
\label{EqU}
\end{equation}
Would the kernel $(x_1,y_1,x_2,y_2)\mapsto K_n(x_1,x_2)y_1y_2$ 
of the $U$-statistic be independent of $n$ and have 
a finite second moment,
then either the sequence $\sqrt n(U_n-\E U_n)$ would be asymptotically normal
or the sequence $n (U_n-\E U_n)$ would converge in distribution
to Gaussian chaos. The two cases can be described in terms
of the Hoeffding decomposition $U_n=\E U_n+U_n^{(1)}+U_n^{(2)}$
of $U_n$, where $U_n^{(1)}$ is the best approximation of $U_n-\E U_n$ by
a sum of the type $\sumin h(X_r,Y_r)$ and $U_n^{(2)}$ is the remainder,
a degenerate $U$-statistic (compare (\ref{EqHoeffding})
in Section~\ref{SectionProofs}). 
For a fixed kernel $K_n$ the linear term $U_n^{(1)}$ dominates
as soon as it is nonzero, in which case  asymptotic normality pertains;
in the other case $U_n^{(1)}=0$ and 
the $U$-statistic possesses a nonnormal limit distribution.

If the kernel depends on $n$, then the separation between the linear and
quadratic cases blurs.  In this paper we are interested in this
situation and specifically in kernels $K_n$ that concentrate as
$n\ra\infty$ more and more near the diagonal of $\X\times \X$.  In our
situation the variance of the $U$-statistics is dominated by the
quadratic term $U_n^{(2)}$.  However, we show that the sequence
$(U_n-\E U_n)/\s(U_n)$ is typically still asymptotically normal.
The intuitive explanation is that the $U$-statistics
behave asymptotically as ``sums across the diagonal $r=s$'' and thus
behave as sums of independent variables. Our formal proof is based on
establishing conditional asymptotic normality given a binning of the
variables $X_r$ in a partition of the set $\X$.

Statistics of the type (\ref{EqU}) arise in many problems of estimating a
functional on a semiparametric model, with $K_n$ the kernel of a
projection operator  (see \cite{FreedmanFestschrift}). As
illustrations we consider in this paper the problems of estimating
$\int g^2(x)\,dx$ or $\int f^2(x)\,dG(x)$, where $g$ is a density and
$f$ a regression function, and of estimating the mean treatment effect in
missing data models.  Rate-optimal estimators in the first of these three problems
were considered by \cite{BickelRitov88, BirgeMassart95, Laurent96, Laurent97, LaurentMassart00}, among others.
In Section~\ref{SectionExamples} we prove
 asymptotic normality of the estimators in \cite{Laurent96,Laurent97}, also in the
case that the rate of convergence is slower than $\sqrt n$, usually
considered to be the ``nonnormal domain''.  For the second and third problems
estimators of the form (\ref{EqU}) were derived 
in \cite{FreedmanFestschrift,RobinsMetrika,vdVStatSci,RobinsAOS}
using the theory of second-order estimating equations. Again we show that these are
asymptotically normal, also in the case that the rate is slower than $\sqrt n$.

Statistics of the type (\ref{EqU}) also arise in the construction of 
adaptive confidence sets, as in \cite{RobinsvdV06}, where
the asymptotic normality can be used to set precise confidence limits.

Previous work on $U$-statistics with kernels that depend on $n$ 
includes \cite{Weber,BhattacharyaGhosh,JammaladakaRaoJanson,deJong87,deJong90}.
These authors prove unconditional asymptotic normality using
the martingale central limit theorem, under somewhat different conditions. Our proof uses
a Lyapounov central limit theorem (with moment $2+\e$) combined with a conditioning argument,
and an inequality for moments of $U$-statistics due to E.\ Gin\'e.
Our conditions relate directly to the contraction of the kernel, and can be verified 
for a variety of kernels. The conditional form of our limit result
should be useful to separate different roles for the observations, such
as for constructing preliminary estimators and for constructing estimators of functionals.
Another line of research (as in as in \cite{Mikosch}) is concerned
with $U$-statistics that are well approximated by their projection
on the initial part of the eigenfunction expansion.
This has no relation to the present work, as here
the kernels explode and the $U$-statistic is 
asymptotically determined by the (eigen) directions ``added''
to the kernel as the number of observations increases.
By making special choices of kernel and variables $Y_i$,
the statistics (\ref{EqU}) can reduce to certain chisquare statistics,
studied in \cite{Morris, Ermakov}.

The paper is organized as follows.  In Section~\ref{SectionMainResult} we
state the main result of the paper, the asymptotic normality of
$U$-statistics of the type (\ref{EqU}) under general conditions on the
kernels $K_n$. Statistical applications are given in Section~\ref{SectionExamples}.
In Section~\ref{SectionProjectionKernels} 
the conditions of the main theorem shown to be satisfied by a variety of popular kernels,
including wavelet, spline, convolution, and Fourier kernels.
The proof of the main result is given in Section~\ref{SectionProofs}, while proofs for 
Section~\ref{SectionProjectionKernels} are given in an appendix.

The notation $a\lesssim b$ means $a\le C b$ for a 
constant $C$ that is fixed in the context. The notations
$a_n\sim b_n$ and $a_n\ll b_n$ mean that $a_n/b_n\ra 1$ and
$a_n/b_n\ra 0$, as $n\ra\infty$. 
The space $L_2(G)$ is the set of
measurable functions $f: \X\to \RR$ that are square-integrable relative
to the measure $G$ and $\|f\|_G$ is the corresponding norm.
The product $f\times g$ of two functions is to be understood
as the function $(x_1,x_2)\mapsto f(x_1)g(x_2)$, whereas the product
$F\times G$ of two measures is the product measure.

\section{Main result}
\label{SectionMainResult}
In this section we state the main result of the paper, 
the asymptotic normality of the $U$-statistics (\ref{EqU}), under
general conditions on the kernels $K_n$ and distributions of 
the vectors $(X_r,Y_r)$. For $q>0$ let
\begin{align*}
\m(x)&=\E \bigl(Y_1\given X_1=x\bigr),\\
\m_q(x)&=\E \bigl(|Y_1|^q\given X_1=x\bigr)
\end{align*}
be versions of the conditional (absolute) moments of $Y_1$ given $X_1$.
For simplicity we assume that $\m_1$ and and $\m_2$ are uniformly bounded.
The marginal distribution of $X_1$ is denoted by $G$.

The kernels are assumed to be 
measurable maps $K_n: \X\times\X\to \RR$ that are symmetric
in their two arguments and satisfy 
$\int\!\!\int K_n^2\,d(G\times G)<\infty$ for
every $n$. Thus the corresponding  {\sl kernel operators}
(with abuse of notation denoted by the same symbol)
\begin{equation}
K_nf(x)=\int f(v)K_n(x,v)\,dG(v),
\label{EqKernelOperator}
\end{equation}
are continuous, linear operators
$K_n: L_2(G)\to L_2(G)$. We assume that their operator norms $\|K_n\|
=\sup\{\|K_nf\|_G: \|f\|_G=1\}$ are uniformly bounded:
\begin{equation}
\sup_n \|K_n\|<\infty.
\label{EqBoundedNorms}
\end{equation}
By the Banach-Steinhaus theorem this is certainly the case if $K_n f\ra f$ in $L_2(G)$
as $n\ra\infty$ for every $f\in L_2(G)$.
The operator norms $\|K_n\|$ are
typically much smaller than the $L_2(G\times G)$-norms of the kernels.
The squares of the latter are typically of the same order of magnitude
as the square  $L_2(G\times G)$-norms weighted by $\m_2\times\m_2$, 
which we denote by
\begin{equation}
k_n:= \int\!\!\int K_n^2(x,y)\ (\m_2\times\m_2)(x,y)\,d(G\times G)(x,y).
\label{EqDefinitionkn}
\end{equation}
We consider the situation that these square weighted norms are 
strictly larger than $n$:
\begin{equation}
{k_n\over n}\ra \infty.
\label{EqKnToInfinity}
\end{equation}
Under condition (\ref{EqKnToInfinity}) the variance of the $U$-statistic
(\ref{EqU}) is dominated by the variance of the quadratic part of
its Hoeffding decomposition.
In contrast, if $k_n=n$, the linear and quadratic parts contribute
variances of equal order. This case can be handled by the
methods of this paper, but requires  a special discussion, which
we omit. The remaining case $k_n\ll n$ 
leads to asymptotically linear $U$-statistics,
and is well understood.

The remaining conditions concern the concentration of the kernels $K_n$  to the diagonal
of $\X\times\X$. We assume that there exists a sequence of finite partitions
$\X=\cup_m\X_{n,m}$ in measurable sets such that 
\begin{align}
{1\over k_n}\sum_m\int_{\X_{n,m}}\!\!\int_{\X_{n,m}} K_n^2\
(\m_2\times\m_2)\,d(G\times G)&\ra 1,
\label{EqConditionOne}\\
{1\over k_n}\max_m\int_{\X_{n,m}}\!\!\int_{\X_{n,m}} K_n^2\ 
(\m_2\times\m_2)\,d(G\times G)&\ra 0,
\label{EqConditionTwo}\\
\max_m G(\X_{n,m})&\ra0,
\label{EqConditionTwoHalf}\\
\liminf_{n\ra\infty} n\, \min_m G(\X_{n,m})&>0.
\label{EqConditionThree}
\end{align}
The sum in the first condition (\ref{EqConditionOne}) is  the integral of the square kernel 
(weighted by the function $\m_2\times\m_2$) over the set $\cup_m(\X_{n,m}\times\X_{n,m})$
(shown in Figure~\ref{FigurePartition}). The condition requires this to be
asymptotically equivalent to the integral $k_n$ of this same function
over the whole product space $\X\times \X$. 
The other conditions implicitly require that the partitioning sets
are not too different and not too numerous. 

\begin{figure}
\centerline{\resizebox{1.6in}{!}{\includegraphics{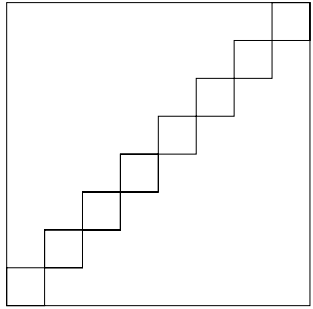}}}
\caption{The diagonal of $\X\times\X$ covered by the set 
$\cup_m(\X_{n,m}\times\X_{n,m})$.}
\label{FigurePartition}
\end{figure}

A final condition requires implicitly that the partitioning is fine enough. 
For some $q>2$, the partitions should satisfy
\begin{equation}
\frac1{k_n^{q/2}}\max_m\Bigl({G(\X_{n,m})\over n}\Bigr)^{q/2-1}
\sum_m\int_{\X_{n,m}}\!\!\int_{\X_{n,m}}\!
\!|K_n|^q (\m_q\times\m_q)\,d(G\times G)\!\ra 0.
\label{EqConditionFour}
\end{equation}
This condition will typically force the number of partitioning
sets to infinity at a rate depending on $n$ and $k_n$
(see Section~\ref{SectionProjectionKernels}). In the proof
it serves as a Lyapounov condition to enforce normality.

The existence of partitions satisfying the preceding conditions depends
mostly on the kernels $K_n$, and is established for various
kernels in Section~\ref{SectionProjectionKernels}. The following theorem
is the main result of the paper. Its proof is deferred to
Section~\ref{SectionProofs}.

Let $I_n$ be the vector with as coordinates $I_{n,1},\ldots, I_{n,n}$ 
the indices of the partitioning sets containing $X_1,\ldots, X_n$, i.e.\
$I_{n,r}=m$ if $X_r\in \X_{n,m}$. Recall that the bounded Lipschitz
distance generates the weak topology on probability measures.

\begin{theorem}
\label{TheoremMainResult}
Assume that the function $\m_2$ is uniformly bounded.
If (\ref{EqBoundedNorms}) and (\ref{EqKnToInfinity}) hold and there exist
finite partitions $\X=\cup_m\X_{n,m}$ such that 
(\ref{EqConditionOne})--(\ref{EqConditionFour}) hold,
then the bounded Lipschitz distance between the conditional law
of $(U_n-\E U_n)/\s(U_n)$ given $I_n$ and the standard normal distribution
tends to zero in probability.  Furthermore $\var U_n\sim 2k_n/n^2$
for $k_n$ given in (\ref{EqDefinitionkn}).
\end{theorem}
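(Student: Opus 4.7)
The plan is to combine the Hoeffding decomposition of $U_n$ with a conditional central limit theorem given the binning vector $I_n$, where the Lyapounov condition will be supplied by (\ref{EqConditionFour}) via a moment inequality of Gin\'e. I would start by writing $U_n-\E U_n=U_n^{(1)}+U_n^{(2)}$ in the Hoeffding decomposition, with $U_n^{(1)}$ a sum of i.i.d.\ terms and $U_n^{(2)}$ a completely degenerate U-statistic with kernel $\tilde K_n$. A direct variance computation using (\ref{EqBoundedNorms}) and the boundedness of $\m_2$ yields $\var U_n^{(1)}=O(1/n)$, while orthogonality of the Hoeffding terms gives $\var U_n^{(2)}=2(n(n-1))^{-1}\int\!\!\int \tilde K_n^{2}\,d(G\times G)$, whose leading term equals $k_n$. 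Hence $\var U_n\sim 2k_n/n^2$, proving the second claim, and because $n/k_n\to 0$ by (\ref{EqKnToInfinity}), $U_n^{(1)}/\s(U_n)\to 0$ in $L^2$, so it suffices to prove conditional asymptotic normality of $U_n^{(2)}/\s(U_n^{(2)})$ given $I_n$.

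Next I would decompose $U_n^{(2)}=A_n+B_n$, where $A_n$ collects the pairs $(r,s)$ with $I_{n,r}=I_{n,s}$ and $B_n$ the cross-block remainder. Condition (\ref{EqConditionOne}) says that the $\m_2\times\m_2$-weighted $L_2$-mass of $K_n$ concentrates on $\bigcup_m(\X_{n,m}\times\X_{n,m})$, which by the degenerate U-statistic variance formula translates into $\var B_n=o(k_n/n^2)$; a further second-moment estimate that integrates out the $Y$-variables first upgrades this to $\var(B_n\mid I_n)=o_P(k_n/n^2)$. Write $A_n=(n(n-1))^{-1}\sum_m T_{n,m}$ with $T_{n,m}$ the within-block degenerate sum on $\X_{n,m}\times\X_{n,m}$. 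The key structural fact is that, conditionally on $I_n$, the data in distinct blocks are independent, so the $T_{n,m}$ are conditionally independent and $A_n-\E[A_n\mid I_n]$ is a sum of conditionally independent mean-zero variables.

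A conditional Lyapounov CLT then requires three ingredients: (a) $\var(A_n\mid I_n)\to 2k_n/n^2$ in probability, (b) the residual conditional mean $\E[U_n-\E U_n\mid I_n]$ is $o_P(\s(U_n))$, and (c) a Lyapounov bound on $(2+\e)$-th conditional moments. For (a), the block counts $n_m=\#\{r:I_{n,r}=m\}$ concentrate around $nG(\X_{n,m})$ by Bernstein (using (\ref{EqConditionThree}) to prevent very small blocks and (\ref{EqConditionTwoHalf}) for uniformity), and combining this with (\ref{EqConditionOne})--(\ref{EqConditionTwo}) produces the claimed limit. For (b), the $U_n^{(1)}$-part is $O_P(n^{-1/2})=o_P(\s(U_n))$ from $n\ll k_n$, while the $U_n^{(2)}$-contribution is handled by rewriting $\E[U_n^{(2)}\mid I_n]$ as a second-order function of $I_n$ and bounding its variance using complete degeneracy of $h_2$ together with (\ref{EqConditionOne}). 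For (c), I would invoke Gin\'e's moment inequality for degenerate U-statistics block by block, giving
$$\E\bigl[|T_{n,m}-\E[T_{n,m}\mid I_n]|^{q}\bigm| I_n\bigr]\lesssim n_m^{q}\int\!\!\int_{\X_{n,m}\times\X_{n,m}}|K_n|^{q}(\m_q\times\m_q)\,d(G\times G)$$
plus lower-order hypercontractive terms, with $q=2+\e$. Summing over $m$, replacing $n_m$ by $nG(\X_{n,m})$, and dividing by the variance power $k_n^{1+\e/2}$ produces precisely the ratio in (\ref{EqConditionFour}), which tends to zero by assumption; condition (\ref{EqConditionTwo}) serves as the uniform-smallness check on individual summands needed to activate the CLT.

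The main obstacle I expect is the Lyapounov verification in step (c): Gin\'e's moment inequality for degenerate U-statistics produces several terms beyond the leading $L_q$ kernel norm — in particular hypercontractive contributions involving the linear projection of $K_n$ — and one must check that each is dominated by the quantity in (\ref{EqConditionFour}) together with (\ref{EqBoundedNorms}). A secondary technical annoyance is step (b) together with the $B_n$-control, where one has to upgrade unconditional second-moment bounds to in-probability bounds conditional on $I_n$; this is done by a Chebyshev argument applied after integrating the $Y$-variables out, so that the remaining randomness lies purely in the allocation indicators and admits the same Bernstein/variance estimates used in step (a).
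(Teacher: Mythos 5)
Your high-level plan — condition on the binning vector $I_n$, exploit the conditional independence of within-block contributions, and close with a conditional Lyapounov CLT whose moment condition comes from a Gin\'e-type inequality — is indeed the paper's route. But there is a genuine gap at the step you yourself flag as ``the main obstacle.'' You reduce first (unconditionally) to $U_n^{(2)}$, split that into within-block and cross-block parts, and then try to apply the degenerate-$U$-statistic moment inequality to $T_{n,m}-\E[T_{n,m}\given I_n]$. The problem is that $T_{n,m}-\E[T_{n,m}\given I_n]$ is conditionally \emph{centered} but not conditionally \emph{degenerate}: the kernel of $U_n^{(2)}$ is degenerate with respect to the unconditional law $G$, not with respect to the conditional law $G_{n,m}=G(\cdot\cap\X_{n,m})/p_{n,m}$, so restricting it to a block does not produce a conditionally degenerate kernel. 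The displayed bound $\E[|T_{n,m}-\E[T_{n,m}\given I_n]|^{q}\given I_n]\lesssim n_m^{q}\int\!\!\int|K_n|^{q}(\m_q\times\m_q)$ therefore cannot be read off from Gin\'e's inequality. What is missing is the \emph{conditional} Hoeffding decomposition of each within-block sum, after which the conditionally linear part must be shown negligible relative to $\sd(V_n\given I_n)$ (the paper's (\ref{EqVnOneToZero}), using the Marcinkiewicz--Zygmund half of Corollary~\ref{CorollaryMoments}), while Gin\'e's bound (Corollary~\ref{CorollaryMoments}, second assertion) is applied only to the conditionally degenerate remainder to verify Lyapounov (\ref{EqLyapounov}). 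Your preliminary unconditional reduction to $U_n^{(2)}$ does not dispense with this; the paper in fact skips it and goes straight from $U_n$ to $V_n=\sum_m V_{n,m}$, doing the (single, conditional) Hoeffding decomposition inside each block, which is cleaner.

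A secondary underspecification is your appeal to ``Bernstein'' for block-count concentration. The $N_{n,m}$ are components of one multinomial vector, hence dependent, and the quadratic form in the $N_{n,m}(N_{n,m}-1)$'s that governs $\var(V_n\given I_n)$ and $\E(V_n\given I_n)$ does not reduce to a sum of independent terms. The paper handles this with a Poissonization coupling (Lemma~\ref{LemmaLLNMultinomialForm}): compare the multinomial vector to an independent Poisson vector, prove the law of large numbers for the Poissonized quadratic form, and transfer it back using a conditioning-plus-coupling argument. Some device of this kind (or the martingale route of \cite{Morris}) is needed; a naive per-coordinate Bernstein bound does not control the quadratic form. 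Your Chebyshev-after-integrating-out-$Y$ idea for upgrading the unconditional $\var B_n=o(k_n/n^2)$ to a conditional statement is plausible but would also need this multinomial machinery. The variance computation and the claim $\var U_n\sim 2k_n/n^2$, on the other hand, match the paper's (\ref{EqVarU})--(\ref{EqVarUKns}) and are correct.
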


The conditional convergence in distribution implies the unconditional convergence. It expresses that
the randomness in $U_n$ is asymptotically determined by the fine positions of the $X_i$ within the
partitioning sets, the numbers of observations falling in the sets being fixed by $I_n$. 


In most of our examples the kernels are pointwise bounded above by
a multiple of $k_n$, and (\ref{EqDefinitionkn}) arises, because
the area where $K_n$ is significantly different from zero is of the order
$k_n^{-1}$. Condition (\ref{EqConditionFour}) can then be simplified to
\begin{equation}
\max_m G(\X_{n,m})\frac {k_n}n\ra 0.
\label{EqConditionFourSimplified}
\end{equation}

\begin{lemma}
\label{LemmaConditionFourSimplified}
Assume that the functions $\m_2$ and $\m_q$ are bounded away from zero
and infinity, respectively. If $\|K_n\|_\infty\lesssim k_n$,
then (\ref{EqConditionFour}) is implied by 
(\ref{EqConditionFourSimplified}).
\end{lemma}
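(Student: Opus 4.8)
The plan is to show that the left-hand side of (\ref{EqConditionFour}) is bounded by a constant multiple of the left-hand side of (\ref{EqConditionFourSimplified}) raised to a positive power, so that the latter tending to zero forces the former to do so as well. The starting point is the elementary bound $|K_n|^q = |K_n|^{q-2}\,K_n^2 \le \|K_n\|_\infty^{q-2}\,K_n^2 \lesssim k_n^{q-2}\,K_n^2$, valid since $q>2$ and using the hypothesis $\|K_n\|_\infty\lesssim k_n$. Since $\m_q$ is bounded above and $\m_2$ is bounded below, we have $\m_q\times\m_q \lesssim 1 \lesssim \m_2\times\m_2$ pointwise, so
\[
\sum_m\int_{\X_{n,m}}\!\!\int_{\X_{n,m}}|K_n|^q\,(\m_q\times\m_q)\,d(G\times G)
\lesssim k_n^{q-2}\sum_m\int_{\X_{n,m}}\!\!\int_{\X_{n,m}}K_n^2\,(\m_2\times\m_2)\,d(G\times G).
\]
By condition (\ref{EqConditionOne}) the sum on the right is asymptotically $k_n$, hence in particular bounded by a multiple of $k_n$ for large $n$, so the whole expression is $\lesssim k_n^{q-1}$.

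Next I would substitute this estimate into (\ref{EqConditionFour}). The prefactor is $k_n^{-q/2}\max_m\bigl(G(\X_{n,m})/n\bigr)^{q/2-1}$, so multiplying by the $\lesssim k_n^{q-1}$ bound just obtained gives, up to a constant,
\[
k_n^{-q/2}\,k_n^{q-1}\max_m\Bigl(\frac{G(\X_{n,m})}{n}\Bigr)^{q/2-1}
= k_n^{q/2-1}\max_m\Bigl(\frac{G(\X_{n,m})}{n}\Bigr)^{q/2-1}
= \Bigl(\max_m G(\X_{n,m})\,\frac{k_n}{n}\Bigr)^{q/2-1}.
\]
Here I pulled the $(q/2-1)$-th power through the maximum, which is legitimate because $q/2-1>0$ and all quantities are nonnegative. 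Since the base $\max_m G(\X_{n,m})\,k_n/n$ tends to zero by (\ref{EqConditionFourSimplified}) and the exponent $q/2-1$ is a fixed positive number, the right-hand side tends to zero, which is exactly (\ref{EqConditionFour}).

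There is no serious obstacle here; the argument is a chain of pointwise bounds and one application of (\ref{EqConditionOne}) to control the weighted $L_2$-mass by $k_n$. The only point requiring a small amount of care is the bookkeeping of the exponents of $k_n$: one must check that the power of $k_n$ coming from $\|K_n\|_\infty^{q-2}$ combined with the one coming from bounding the double sum by $k_n$ exactly cancels the $k_n^{-q/2}$ prefactor down to $k_n^{q/2-1}$, so that the resulting expression genuinely assembles into the $(q/2-1)$-th power of the quantity in (\ref{EqConditionFourSimplified}). A minor technical remark is that (\ref{EqConditionOne}) only gives the bound $\sum_m\int\!\!\int K_n^2(\m_2\times\m_2)\lesssim k_n$ for $n$ large enough, so the conclusion is an asymptotic one, which is all that is needed.
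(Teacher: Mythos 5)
Your proof is correct and follows essentially the same route as the paper's: bound $|K_n|^q$ pointwise by $k_n^{q-2}K_n^2$ via $\|K_n\|_\infty\lesssim k_n$, use the boundedness of $\m_2$ and $\m_q$ to swap weights, obtain $\lesssim k_n^{q-1}$ for the sum, and then assemble the exponents into $\bigl(\max_m G(\X_{n,m})\,k_n/n\bigr)^{q/2-1}$. The only minor difference is that you invoke (\ref{EqConditionOne}) to bound the restricted sum by a multiple of $k_n$, whereas the paper dispenses with this by simply observing that the sum over the disjoint sets $\X_{n,m}\times\X_{n,m}$ is at most the full integral, which equals $k_n$ by definition (\ref{EqDefinitionkn}) up to constants.
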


\begin{proof}
The sum in (\ref{EqConditionFour}) is bounded up to a constant by
$\int\int |K_n|^q\,d(G\times G)$, which is bounded above
by a constant times $k_n^{q-2}\int\int K_n^2\,d(G\times G)\lesssim k_n^{q-1}$,
by the definition of $k_n$.
\end{proof}

\section{Statistical applications}
\label{SectionExamples}
In this section we give examples of statistical problems
in which statistics of the type (\ref{EqU}) arise as estimators.

\subsection{Estimating the integral of the square of a density}
\label{SectionDensityEstimation}
Let $X_1,\ldots, X_n$ be i.i.d.\ random variables with a density $g$
relative to a given measure $\n$ on a measurable space $(\X,\A)$. The problem of
estimating the functional  $\int g^2\,d\n$ has been addressed
by many authors, including \cite{BickelRitov88},
\cite{LaurentMassart00} and \cite{KerkPicard96}.
The estimators proposed by \cite{Laurent96,Laurent97}, which are particularly elegant, are based on an expansion of
$g$ on an orthonormal basis $e_1,e_2,\ldots$ of
the space $L_2(\X,\A,\n)$, so that
$\int g^2\,d\n=\sum_{i=1}^\infty \q_i^2$,
for $\q_i=\int ge_i\,d\n$ the Fourier coefficients of $g$.
Because $\E  e_i(X_1)e_i(X_2)=\q_i^2$, the square Fourier coefficient
$\q_i^2$ can be estimated unbiasedly by the $U$-statistic with kernel
$(x_1,x_2)\mapsto e_i(x_1)e_i(x_2)$.
Hence the truncated sum of squares $\sumik \q_i^2$ can be estimated
unbiasedly by
$$U_n=\sumik{1\over n(n-1)}\dsum_{r\not=s}e_i(X_r)e_i(X_s).$$
This statistic is of the type  (\ref{EqU}) with kernel
$K_n(x_1,x_2)=\sumik e_i(x_1)e_i(x_2)$ and the variables $Y_1,\ldots, Y_n$
taken equal to unity.

The estimator $U_n$ is unbiased for the truncated series $\sumik \q_i^2$, but
biased for the functional of interest $\int g^2\,d\n=\sum_{i=1}^\infty\q_i^2$.
The variance of the estimator can be computed to be of the order 
$k/n^2\vee1/n$ (cf.\ (\ref{EqVarU}) below). If the Fourier coefficients are known to 
satisfy $\sum_{i=1}^\infty \q_i^2i^{2\b}\le 1$, then the bias
can be bounded by $\sum_{i=k+1}^\infty\q_i^2
\le k^{-2\b}$, and trading square bias versus the variance leads to the choice  $k=n^{1/(2\b+1/2)}$.

In the case that $\b>1/4$, the mean square error of the estimator is
$1/n$ and the sequence $\sqrt n (U_n-\int g^2\,d\n)$ can be shown to
be asymptotically linear in the efficient influence function $2(g-\int g^2\,d\n)$  (see
(\ref{EqHoeffding}) with $\m(x)=\E(Y_1\given X_1=x)\equiv 1$ and 
\cite{Laurent96}, \cite{Laurent97}). 
More interesting from our present perspective is the
case that $0<\b< 1/4$, when the mean square
error is of order $n^{-4\b/(2\b+1/2)}\gg 1/n$,
and the variance of  $U_n$ is dominated by its second-order term. By
Theorem~\ref{TheoremMainResult} the estimator, centered at its
expectation, and with the orthonormal basis $(e_i)$ one of
the bases discussed in
Section~\ref{SectionProjectionKernels}, is still asymptotically normally
distributed.

\subsection{Estimating the integral of the square of a regression function}
\label{SectionRegression}
Let $(X_1,Y_1),\ldots, (X_n,Y_n)$ be i.i.d.\ random vectors
following the regression model $Y_i=b(X_i)+\e_i$ for unobservable
errors $\e_i$ that satisfy $\E(\e_i\given X_i)=0$. 
It is desired to estimate $\int b^2\,dG$ for $G$ the marginal
distribution of $X_1,\ldots, X_n$. 

If the distribution $G$ is
known, then an appropriate estimator can take
exactly the form (\ref{EqU}), for $K_n$ the kernel of an orthonormal projection 
on a suitable $k_n$-dimensional space in $L_2(G)$.
Its asymptotics are as in Section~\ref{SectionDensityEstimation}.

Because an orthogonal projection in $L_2(G)$ 
can only be constructed if $G$ is known, the preceding estimator
is not available if $G$ is unknown. If the regression function $b$ is regular
of order $\b\ge 1/4$, then the parameter can be estimated
at $\sqrt n$-rate (see \cite{FreedmanFestschrift}).
In this section we consider an estimator that is appropriate if 
$b$ is regular of order $\b<1/4$ and the design distribution
$G$ permits a Lebesgue density $g$ that is bounded away from zero and sufficiently smooth.

Given initial estimators $\hat b_n$ and $\hat g_n$ for the regression
function $b$ and design density $g$, we consider the estimator
\begin{align}
T_n&={1\over n} \sum_{r=1}^n \Bigl(\hat b_n(X_r)^2
  +2\hat b_n(X_r)\bigl(Y_r-\hat b_n(X_r)\bigr)\Bigr)\nonumber\\
&\qquad  +{1\over n(n-1)}\dsum_{1\le r\not=s\le n} \bigl(Y_r-\hat
  b_n(X_r)\bigr)K_{k_n,\hat g_n}(X_r,X_s) \bigl(Y_s-\hat   b_n(X_s)\bigr).
\label{EqRegressionEstimator}
\end{align} 
Here $(x_1,x_2)\mapsto K_{k,g}(x_1,x_2)$ is a projection kernel in the
space $L_2(G)$. For definiteness we construct this in the form
(\ref{EqProjectionKernelg}), where the basis $e_1,\ldots,e_k$ may be the
Haar basis, or a general wavelet basis, as discussed in
Section~\ref{SectionProjectionKernels}.  Alternatively, we could use
projections on the Fourier or spline basis, or convolution kernels, but
the latter two require twicing (see (\ref{EqTwicingKernel})) to control bias, and the arguments 
given below must be adapted.


The initial estimators $\hat b_n$ and $\hat g_n$ may be
fairly arbitrary rate-optimal estimators if constructed
from an independent sample of observations.  (e.g.\ after
splitting the original sample in parts used to construct
the initial estimators and the estimator (\ref{EqRegressionEstimator})). 
We assume this in the following theorem, and
also assume that the norm of $\hat b_n$ in $C^\b[0,1]$ is
bounded in probability, or alternatively, if the projection is on
the Haar basis, that this estimator is in the linear
span of $e_1,\ldots, e_{k_n}$. This is typically not
a loss of generality.

Let $\hat{\E}$ and $\hat{\var}$ denote expectation and variance
given the additional observations.
Set $\m_q(x)=\E (|\e_1|^q\given X_1=x)$ and let 
$\|\cdot\|_3$ denote the $L_3$-norm relative to Lebesgue measure.

\begin{corollary}
\label{CorollaryEstimatorIntegral}
Let $\hat b_n$ and $\hat g_n$ be estimators based on
independent observations that converge to $b$ and $g$
in probability relative to the uniform norm and satisfy
$\|\hat b_n-b\|_3=O_P(n^{-\b/(2\b+1)})$ and
$\|\hat g_n-g\|_3=O_P(n^{-\g/(2\g+1)})$.
Let $\m_q$ be  finite and uniformly bounded for some $q>2$.
Then for $b\in C^\b[0,1]$ and strictly positive $g\in C^\g[0,1]$,
with $\g\ge\b$, 
and  for $k_n$ satisfying (\ref{EqKnToInfinity}),
\begin{align*}
\Bigl|\hat{\E}_{b,g} T_n-\int b^2 dG\Bigr|
&=O_P\Bigl({1\over k_n}\Bigr)^{2\b}+
O_P\Bigl({1\over n}\Bigr)^{2\b/(2\b+1)+\g/(2\g+1)},\\
\hat{\var}_{b,g} T_n&= {2\over n^2}\int\!\!\int(\m_2\times \m_2)K_{k_n,g}^2\,
d(G\times G)\bigl(1+o_P(1)\bigr)=O_P\Bigl({k_n\over n^2}\Bigr).
\end{align*} 
Furthermore,
the sequence $(T_n-\hat{\E}_{b,g}T_n)/\hat{\sd}_{b,g} (T_n)$ tends
in distribution to the standard normal distribution.
\end{corollary}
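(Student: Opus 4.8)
The plan is to derive Corollary~\ref{CorollaryEstimatorIntegral} from Theorem~\ref{TheoremMainResult} by first isolating the genuinely quadratic $U$-statistic buried inside $T_n$ and showing that all the remaining pieces are asymptotically negligible relative to its standard deviation $\sqrt{2k_n}/n$. Conditionally on the auxiliary sample, $\hat b_n$ and $\hat g_n$ are fixed functions, so the double sum in (\ref{EqRegressionEstimator}) is exactly of the form (\ref{EqU}) with kernel $K_n=K_{k_n,\hat g_n}$ and response variables $\e_r$ (since $Y_r-\hat b_n(X_r)=\e_r+(b-\hat b_n)(X_r)$, I will split off the $b-\hat b_n$ contamination as a separate remainder). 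The linear term $n^{-1}\sum_r(\hat b_n(X_r)^2+2\hat b_n(X_r)\e_r+\cdots)$ has conditional variance $O_P(1/n)$, which under (\ref{EqKnToInfinity}) is $o_P(k_n/n^2)$, so it does not contribute to the limiting normal law, though it must be accounted for in the bias and variance formulas.

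\medskip

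\noindent\textbf{Step 1: reduction to the clean quadratic statistic.} Write $\d_r=(b-\hat b_n)(X_r)$ and expand the double sum of (\ref{EqRegressionEstimator}) with $Y_r-\hat b_n(X_r)=\e_r+\d_r$ into four double sums: the ``pure'' term $\sum_{r\ne s}\e_rK_{k_n,\hat g_n}(X_r,X_s)\e_s$, two cross terms with one factor $\e$ and one factor $\d$, and the term $\sum_{r\ne s}\d_rK_{k_n,\hat g_n}(X_r,X_s)\d_s$. I would bound the conditional second moments of the three contaminated sums using the Hoeffding decomposition and the operator-norm bound (\ref{EqBoundedNorms}) on $K_{k_n,\hat g_n}$: the $\d$--$\d$ term is controlled by $\|K_{k_n,\hat g_n}\|\,\|\d\|_{L_2(G)}^2$ plus a degenerate remainder bounded by $(k_n/n^2)\|\d\|_\infty^2$ times lower-order factors, and the rate assumption $\|\hat b_n-b\|_3=O_P(n^{-\b/(2\b+1)})$ together with $\|\hat b_n-b\|_\infty=o_P(1)$ makes each of these $o_P(k_n/n^2)$ once one checks $\b<1/4$ is compatible with (\ref{EqKnToInfinity}); the cross terms are handled similarly by Cauchy--Schwarz. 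One also has to replace $K_{k_n,\hat g_n}$ by $K_{k_n,g}$ in the variance statement, using $\|\hat g_n-g\|_3=O_P(n^{-\g/(2\g+1)})$ and the explicit form (\ref{EqProjectionKernelg}) of the projection kernel to bound $\int\!\!\int(\m_2\times\m_2)(K_{k_n,\hat g_n}^2-K_{k_n,g}^2)\,d(G\times G)=o_P(k_n)$.

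\medskip

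\noindent\textbf{Step 2: apply the main theorem to the pure term.} With the contaminations stripped away, the statistic $T_n$ equals a linear term of conditional variance $O_P(1/n)$ plus $U_n$ of the form (\ref{EqU}) with kernel $K_{k_n,g}$ (up to the $o_P$ errors of Step 1) and errors $\e_r$ with $\m_q$ uniformly bounded. I would invoke the results of Section~\ref{SectionProjectionKernels} to produce partitions $\X=\cup_m\X_{n,m}$ satisfying (\ref{EqConditionOne})--(\ref{EqConditionFour}) for the wavelet/Haar projection kernel $K_{k_n,g}$ --- this is exactly what that section is designed to deliver, and for the Haar basis $\|K_{k_n,g}\|_\infty\lesssim k_n$ so Lemma~\ref{LemmaConditionFourSimplified} reduces (\ref{EqConditionFour}) to (\ref{EqConditionFourSimplified}), which holds because the natural partition has $\max_m G(\X_{n,m})\asymp 1/k_n$ and then $(k_n/n)\max_m G(\X_{n,m})\asymp 1/n\to 0$. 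Theorem~\ref{TheoremMainResult} then gives $(U_n-\hat\E U_n)/\hat\sd(U_n)\weak N(0,1)$ with $\hat\var U_n\sim 2k_n/n^2$, and since the discarded linear and contamination terms are $o_P(\sqrt{k_n}/n)$ in conditional standard deviation, Slutsky yields the asymptotic normality of $(T_n-\hat\E_{b,g}T_n)/\hat\sd_{b,g}(T_n)$. The variance formula in the corollary is read off from $\hat\var U_n\sim (2/n^2)\int\!\!\int(\m_2\times\m_2)K_{k_n,g}^2\,d(G\times G)$ (this is precisely $2k_n/n^2$ by (\ref{EqDefinitionkn})), and the bias formula comes from computing $\hat\E_{b,g}T_n$ directly: the design-known unbiasedness identity $\E[\e_rK_{k_n,g}(X_r,X_s)\e_s]=\int\!\!\int \m_2\, K_{k_n,g}\,d(G\times G)$ combined with the first-order correction $2\hat b_n(Y-\hat b_n)$ produces a bias equal to the projection error $\|b-\Pi_{k_n}b\|_G^2=O((1/k_n)^{2\b})$ for $b\in C^\b$, plus the product-of-rates term $O_P(n^{-2\b/(2\b+1)-\g/(2\g+1)})$ coming from the interaction of $\hat b_n-b$ and $\hat g_n-g$ in the second-order estimating-equation bias.

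\medskip

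\noindent\textbf{Main obstacle.} The delicate point is Step 1: showing that the double sum driven by $\d_r=(b-\hat b_n)(X_r)$ is $o_P(\sqrt{k_n}/n)$ in conditional standard deviation. Its degenerate part has conditional variance of order $(k_n/n^2)$ times an average of $\d_r^2\d_s^2$-type quantities, and controlling this requires the moment inequality for $U$-statistics (attributed to Giné in the introduction) rather than a crude bound, together with careful use of both the $L_3$-rate and the uniform consistency of $\hat b_n$; getting the exponents to close for all $\b<1/4$ and all admissible $k_n$ satisfying (\ref{EqKnToInfinity}) is where the real work lies. The rest is bookkeeping: matching the variance to (\ref{EqDefinitionkn}), identifying the bias with the squared projection error plus the second-order remainder, and a routine Slutsky/negligibility argument.
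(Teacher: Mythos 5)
Your proposal reaches the right conclusion, but it routes the central limit part through a decomposition that the paper deliberately avoids. You expand $Y_r-\hat b_n(X_r)=\e_r+\d_r$ into four double sums and then try to prove that the three $\d$-contaminated sums and the kernel replacement $K_{k_n,\hat g_n}\to K_{k_n,g}$ are all negligible relative to $\sqrt{k_n}/n$. This can be made to work, but it costs several Cauchy--Schwarz and Giné-type moment bounds and requires also controlling the cross-covariances between the pure $\e\e$ term and the contaminated pieces in order to justify the asymptotic variance formula. The paper instead applies Theorem~\ref{TheoremMainResult} \emph{directly} to the quadratic part with kernel $K_{k_n,\hat g_n}$ and responses $Y_r-\hat b_n(X_r)$: after passing to a.s.-convergent subsequences so that $\hat b_n,\hat g_n$ may be treated as deterministic, one checks that the weight functions $\m_q(\hat b_n)(x)=\E(|\e_1+(b-\hat b_n)(x)|^q\,|\,X_1=x)$ converge uniformly to $\m_q(b)$, and that $K_{k_n,\hat g_n}=K_{k_n,g}\cdot\sqrt{(g\times g)/(\hat g_n\times\hat g_n)}$ differs from $K_{k_n,g}$ by a multiplicative factor uniformly bounded above and below and converging to one. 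Because conditions (\ref{EqConditionOne})--(\ref{EqConditionFour}) are stable under such uniformly bounded multiplicative perturbations of both kernel and weights, the ``contaminated'' $U$-statistic already satisfies the hypotheses of the main theorem once Proposition~\ref{TheoremWavelets} or~\ref{TheoremFourier} supplies them for $K_{k_n,g}$. This removes your Step~1 entirely: no stripping, no negligibility estimates, no kernel substitution. What your approach buys in exchange is not needing the a.s.-subsequence trick, and it may feel more concrete, but it is strictly longer.

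On the bias, your sketch is essentially the paper's: compute $e(b_1,g_1)=\hat\E_{b,g}T_n(b_1,g_1)$ explicitly and use that $K_{k_n,g}$ is an orthogonal $L_2(G)$-projection to obtain $e(b_1,g_1)-e(b,g)=-\|(I-K_{k_n,g})(b_1-b)\|_G^2+\int\!\!\int(b-b_1)\times(b-b_1)(K_{k_n,g_1}-K_{k_n,g})\,d(G\times G)$. One small imprecision: the first term is the projection error of $b_1-b$, not of $b$, so you need the standing assumption that $\|\hat b_n\|_{C^\b}=O_P(1)$ (or $\hat b_n\in\lin(e_1,\dots,e_{k_n})$ for the Haar case) to conclude the $O_P(1/k_n)^{2\b}$ bound. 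The second term requires the algebraic identity for $1/\sqrt{g_1(x_1)g_1(x_2)}-1/\sqrt{g(x_1)g(x_2)}$ in terms of $h=(\sqrt g-\sqrt{g_1})/\sqrt{gg_1}$ followed by the $L_{3/2}$-boundedness of the wavelet projection; your proposal names the right rate but does not indicate how the two estimator errors actually multiply. None of these are fatal gaps, but they are where the ``bookkeeping'' you defer actually lives.
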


For $k_n=n^{1/(2\b+1/2)}$ the estimator $T_n$ of $\int b^2\,dG$
attains a rate of convergence  of the order $n^{-2\b/(2\b+1/2)}+n^{-2\b/(2\b+1)-\g/(2\g+1)}$.
If $\g> \b/(4\b^2+\b+1/2)$, then this reduces to $n^{-4\b/(1+4\b)}$, which is known to be
the minimax rate when $g$ is known and $b$
ranges over a ball in $C^\b[0,1]$, for $\b\le 1/4$ (see \cite{BirgeMassart95} or \cite{RobinsMinimax}).
For smaller values of $\g$ the estimator 
can be improved by considering third or higher order
$U$-statistics (see \cite{RobinsAOS}).

\beginskip
It is possible to avoid using additional observations
and construct the initial estimators ``within the sample''
at the cost of some complications. 
Given a sequence of finite partitions $\X=\cup_m\X_{n,m}$
into an even number of sets,
we split the sample of observations $(X_1,Y_1),\ldots ,(X_n,Y_n)$
into two sets, depending on whether a variable $X_r$ falls in 
an even-indexed partitioning set
(i.e.\ $X_r\in\X_1:=\cup_m \X_{n,2m}$) or in an odd-indexed set 
(i.e.\ $X_r\in\X_2:=\cup_m \X_{n,2m+1}$).
Next we construct two estimators, $T_{n,1}$ and
$T_{n,2}$, in the same way as $T_n$, but based on the respective
sets of observations and with the initial estimators for
$b$ and $g$ determined by the other set of observations.
The final estimator is $T_n=T_{n,1}+T_{n,2}$.

If we choose the partition $\X=\cup_m\X_{n,m}$ a
coarsening of the partition implicit in the Haar basis at dimension $k_n$,
which determines the Haar kernel $K_n$, then the terms $(r,s)$ in the
double sum in the definition of $T_n$ with $X_r$ and $X_s$ not in
the same group vanish. Thus the splitting of the observations
in two groups then corresponds to a splitting of 
$T_n$ into two parts, given by
\begin{align*}
&\qquad T_{n,i}={1\over n} \sum_{r=1}^n \Bigl(\hat b_{n,i}(X_r)^2
+2\hat b_{n,i}(X_r)\bigl(Y_r-\hat b_{n,i}(X_r)\bigr)\Bigr)1_{\X_i}(X_r)\\
&\!+{1\over n(n-1)}\dsum_{1\le r\not=s\le n}
\bigl(Y_r-\hat b_{n,i}(X_r)\bigr)K_{k_n,\hat g_{n,i}}(X_r,X_s)
1_{\X_i\times\X_i}(X_r,X_s)
\bigl(Y_s-\hat b_{n,i}(X_s)\bigr).
\end{align*}
The estimators $\hat b_{n,i}$ are constructed from the
observations $(X_r,Y_r)$ with $X_r\not\in \X_i$, for $i=1,2$.
Because the Haar partitioning of dimension $k_n$ is much
finer than the resolution level needed to estimate $b$ and $g$
(for which a partition in $n^{1/(2\b+1)}$ sets suffices), the
rates of the estimators $\hat\b_{n,i}$ and $\g_{n,i}$ need not suffer
from splitting the sample according to the sets $\X_i$.

The estimators $T_{n,i}$ have the same structure
as the estimator $T_n$,  and can be studied by the same methods.

\endskip

\subsection{Estimating the mean response with missing data}
Suppose that a typical observation is distributed as
$X=(YA,A,Z)$ for $Y$ and $A$ taking values in the
two-point set $\{0,1\}$ and conditionally independent
given $Z$, with conditional mean functions 
$b(z)=\Pr(Y=1\given Z=z)$ and $a(z)^{-1}=\Pr(A=1\given Z=z)$, and 
$Z$ possessing density $g$ relative to some dominated measure $\n$. 

In \cite{RobinsMetrika} we introduced a quadratic estimator for
the \emph{mean response} $\E Y=\int bg\,d\n$, which attains a better
rate of convergence than the conventional linear estimators.
For initial estimators $\hat a_n$, $\hat b_n$ and $\hat g_n$,
and $K_{k,\hat\a_n,\hat g_n}$ a projection kernel in $L_2(g/a)$, this takes the form
\begin{align*}
&\frac1n\sum_{r=1}^n \Bigl(A_r \hat a_n(Z_r)\bigl(Y_r-\hat b_n(Z_r)\bigr)+\hat b_n(Z_r\Bigr)\\
&\quad-\frac1{n(n-1)}\dsum_{1\le r\not=s\le n}
\Bigl(A_r\bigl(Y_r-\hat b_n(Z_r)\bigr)K_{k_n,\hat\a_n,\hat g_n}(Z_r,Z_s)
\bigl(A_s\hat a_n(Z_s)-1\bigr)\Bigr).
\end{align*}
Apart from the (inessential) asymmetry of the kernel, the quadratic part has the form
(\ref{EqU}). Just as in the preceding section, the estimator can be shown to be asymptotically
normal with the help of Theorem~\ref{TheoremMainResult}.

\section{Kernels}
\label{SectionProjectionKernels}
In this section we discuss examples of kernels that satisfy the conditions of our main
result. Detailed proofs are given in an appendix. 

Most of the examples are kernels of \emph{projections} $K$, 
which are characterised by the identity $Kf=f$, for every $f$ in their range
space. For a projection given by a kernel, the latter is equivalent to $f(x)=\int f(v) K(x,v)\,dG(v)$ for
(almost) every $x$, which suggests that the measure $v\mapsto K(x,v)\,dG(v)$
acts on $f$ as a Dirac kernel located at $x$. Intuitively,
if the projection spaces increase to the full space, so that the identity is true for more
and more $f$, then the kernels $(x,v)\mapsto K(x,v)$ must be increasingly 
dominated by their values near the diagonal, thus 
meeting the main condition of Theorem~\ref{TheoremMainResult}.

For a given orthonormal basis $e_1,e_2,\ldots$ of $L_2(G)$,
the orthogonal projection onto $\lin(e_1,\ldots, e_k)$ is 
the kernel operator $K_k: L_2(G)\to L_2(G)$ with kernel
\begin{equation}
K_k(x_1,x_2)=\sumik e_i(x_1)e_i(x_2).
\label{EqProjectionKernelOnBasis}
\end{equation}
It can be checked that it has operator norm 1, while the
square $L_2$-norm $\int\int K_k^2\,d(G\times G)=k$ of the kernel is $k$.

\beginskip
In applications we may use a projection kernel corresponding
to a measure $G$ that is not the distribution of the 
variables $X_r$. If the latter measure has a well-behaved
density relative to $G$, this does not matter for the validity
of the main result. The following lemma is obvious, but useful.

\begin{lemma}
\label{LemmaProductKernelAndFunction}
If the conditions (\ref{EqBoundedNorms}), (\ref{EqConditionOne})--(\ref{EqConditionFour})
of Theorem~\ref{TheoremMainResult} hold for
kernels $K_n$, then they also hold for the kernels
$(x_1,x_2)\mapsto K_n(x_1,x_2)f_n(x_1,x_2)$ for measurable functions $f_n$
that are uniformly bounded above and below by positive
constants.
\end{lemma}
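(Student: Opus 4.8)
\textbf{Proof proposal for Lemma~\ref{LemmaProductKernelAndFunction}.}
The plan is to verify the conditions one at a time, exploiting that multiplication by a function $f_n$ with $c\le f_n\le C$ changes every integrand $K_n^2$ by at most the multiplicative factor $f_n^2\in[c^2,C^2]$, and similarly changes $|K_n|^q$ by a factor in $[c^q,C^q]$. The only genuine subtlety is the operator-norm condition (\ref{EqBoundedNorms}), since multiplying a kernel by a two-variable function need not correspond to any clean operation on the associated operator; I would handle this separately by a Hilbert--Schmidt / Schur-type bound. All the remaining conditions (\ref{EqConditionOne})--(\ref{EqConditionFour}) involve only the $L_2$- and $L_q$-sizes of $K_n$ on the partition cells, and for those the bounded factor passes through immediately.

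First I would treat (\ref{EqBoundedNorms}). Writing $\tilde K_n(x_1,x_2)=K_n(x_1,x_2)f_n(x_1,x_2)$, the operator $\tilde K_n$ has a square-integrable kernel relative to $G\times G$ (since $|f_n|\le C$ and $\int\!\int K_n^2\,d(G\times G)<\infty$), so it is Hilbert--Schmidt and hence bounded on $L_2(G)$; its Hilbert--Schmidt norm is at most $C$ times that of $K_n$. This already gives a finite operator norm for each $n$, but not a uniform bound, because the Hilbert--Schmidt norm is typically of order $\sqrt{k_n}\to\infty$. To get uniformity one must use the structure of $f_n$. Here I would observe that in all the applications $f_n$ is itself the kernel of a nice (often rank-one-in-each-variable, i.e.\ $f_n(x_1,x_2)=\f_n(x_1)\psi_n(x_2)$ or a ratio of densities) transformation; in that case $\tilde K_n g(x)=\f_n(x)\,K_n\!\bigl(\psi_n g\bigr)(x)$, so $\|\tilde K_n\|\le \|\f_n\|_\infty\,\|\psi_n\|_\infty\,\|K_n\|\le C\sup_n\|K_n\|$. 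For a general bounded $f_n$ one can instead invoke a Schur test: if $\sup_x\int |\tilde K_n(x,v)|\,dG(v)\lesssim 1$ and symmetrically in the other variable, then $\|\tilde K_n\|\lesssim1$; and such an $L_1$-row-sum bound holds for the projection kernels of Section~\ref{SectionProjectionKernels} uniformly in $n$, so it survives multiplication by a bounded $f_n$. I expect this operator-norm step to be the main obstacle, and the cleanest fix is to state the lemma for $f_n$ of product form (or with uniformly bounded Schur norms), which is all the applications need.

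Next, for conditions (\ref{EqConditionOne}) and (\ref{EqConditionTwo}): replacing $K_n$ by $\tilde K_n$ multiplies every integral $\int_{\X_{n,m}}\!\int_{\X_{n,m}}K_n^2(\m_2\times\m_2)\,d(G\times G)$, and also $k_n$ itself (see (\ref{EqDefinitionkn})), by a factor lying in $[c^2,C^2]$. Hence the ratio appearing in (\ref{EqConditionOne}) changes by a factor in $[c^2/C^2,\,C^2/c^2]$ uniformly over all cells, but that is not yet enough for convergence to $1$; the honest statement is that $f_n$ must be assumed to satisfy $f_n\to1$ on the diagonal in the appropriate averaged sense, \emph{or} the conclusion of (\ref{EqConditionOne}) must be weakened to ``bounded away from $0$ and $\infty$,'' which is all the proof of Theorem~\ref{TheoremMainResult} actually uses together with the normalization by $\s(U_n)$. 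Under the latter (weaker, sufficient) reading, (\ref{EqConditionOne}) for $\tilde K_n$ follows from (\ref{EqConditionOne}) for $K_n$ by the sandwich $c^2\le f_n^2\le C^2$. Condition (\ref{EqConditionTwo}) is immediate: the max over $m$ of the cell integrals is multiplied by a factor $\le C^2$ and $k_n$ by a factor $\ge c^2$, so the quotient is at most $(C^2/c^2)$ times the original and still tends to $0$. Conditions (\ref{EqConditionTwoHalf}) and (\ref{EqConditionThree}) involve only $G(\X_{n,m})$ and are untouched by $f_n$, so they hold verbatim. Finally (\ref{EqConditionFour}): the factor $(G(\X_{n,m})/n)^{q/2-1}$ and the maxima over $m$ are unchanged, the sum $\sum_m\int\!\int_{\X_{n,m}}|K_n|^q(\m_q\times\m_q)$ is multiplied by a factor in $[c^q,C^q]$, and $k_n^{q/2}$ by a factor in $[c^q,C^q]$, so the whole expression changes by a bounded factor and still tends to $0$. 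Collecting these observations gives the lemma; I would write it up in two short paragraphs, one for the operator norm and one for the five integral conditions, flagging the product-form (or Schur-bounded) hypothesis on $f_n$ if a fully uniform operator bound is wanted.
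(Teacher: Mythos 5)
Your argument contains one genuine error and one legitimate worry, and they pull in opposite directions; since the paper declares the lemma ``obvious'' and gives no proof, I assess the argument on its own merits.

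The error is your treatment of (\ref{EqConditionOne}). You claim that a multiplicative uncertainty factor in $[c^2/C^2,\,C^2/c^2]$ is ``not yet enough for convergence to $1$'' and propose either an extra hypothesis on $f_n$ or a weakening of the conclusion. Neither is needed, because you have missed the sign structure of the condition. Since $\cup_m(\X_{n,m}\times\X_{n,m})\subset\X\times\X$ and the integrand $K_n^2\,(\m_2\times\m_2)$ is nonnegative, the ratio in (\ref{EqConditionOne}) always lies in $[0,1]$, and the condition is equivalent to
\[
\frac{1}{k_n}\int\!\!\int_{(\cup_m\X_{n,m}\times\X_{n,m})^c}K_n^2\,(\m_2\times\m_2)\,d(G\times G)\;\ra\;0 .
\]
Replacing $K_n$ by $K_nf_n$ multiplies this off-diagonal integral by a factor at most $C^2$ and replaces $k_n$ by $\tilde k_n\ge c^2k_n$, so the new remainder is at most $(C/c)^2$ times the old one and still tends to zero. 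Thus (\ref{EqConditionOne}) transfers in its strong form; your arguments for (\ref{EqConditionTwo}), (\ref{EqConditionTwoHalf}), (\ref{EqConditionThree}) and (\ref{EqConditionFour}) are correct.

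Your worry about (\ref{EqBoundedNorms}) is well founded, and the lemma as printed is in fact too strong. Take $K_n$ to be the Dirichlet kernel of Proposition~\ref{TheoremFourier} with $G=\LEB$ on $[-\pi,\pi]$, so $\|K_n\|=1$, and let $f_n=1+\tfrac12\sign K_n\in\{\tfrac12,\tfrac32\}$. Then $K_nf_n=K_n+\tfrac12|K_n|$ is again a convolution kernel, and since the zeroth Fourier coefficient of $|D_{k_n}|$ is comparable to the Lebesgue constant, its $L_2$-operator norm is bounded below by a positive multiple of $\int|D_{k_n}|\,d\LEB\sim\log k_n\ra\infty$. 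Your proposed Schur-test rescue does not save the general statement either: for the Fourier kernel the row sum $\sup_x\int|K_n(x,v)|\,dG(v)$ is itself of order $\log k_n$ (it is uniformly bounded for the Haar and convolution kernels, but not here). The right repair is the one you hint at: restrict to product-form $f_n(x_1,x_2)=\phi_n(x_1)\psi_n(x_2)$ with $\phi_n,\psi_n$ uniformly bounded, in which case $(K_nf_n)h=\phi_n\,K_n(\psi_nh)$ and hence $\|K_nf_n\|\le\|\phi_n\|_\infty\|\psi_n\|_\infty\|K_n\|$. That covers (\ref{EqProjectionKernelg}), which is the only use of this lemma in the paper, and is the hypothesis under which the lemma is actually true.
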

\endskip

A given orthonormal basis $e_1,e_2,\ldots$ relative to a given dominating  measure, 
can be turned into an orthonormal basis $e_1/\sqrt g,e_2/\sqrt g,\ldots$
of $L_2(G)$, for $g$ a density of $G$. The kernel of the orthogonal projection 
in $L_2(G)$ onto $\lin (e_1/\sqrt g,\ldots,e_k/\sqrt g)$ is
\begin{equation}
K_{k,g}(x_1,x_2)={\sumik e_i(x_1)e_i(x_2)\over \sqrt {g(x_1)}\sqrt {g(x_2)}}.
\label{EqProjectionKernelg}\end{equation}
If $g$ is bounded away from zero and infinity, the conditions
of Theorem~\ref{TheoremMainResult} will hold for this kernel
as soon as they hold for the kernel 
(\ref{EqProjectionKernelOnBasis}) relative to the dominating measure.

The orthogonal projection in $L_2(G)$ onto
the linear span $\lin(f_1,\ldots, f_k)$ of an arbitrary
set of functions $f_i$ possesses the kernel
\begin{equation}
K_k(x_1,x_2)=\sumik\sumjk A_{i,j} f_i(x_1)f_j(x_2),
\label{EqProjectionOnFixedSpace}\end{equation}
for $A$ the inverse of the $(k\times k)$-matrix with $(i,j)$-element
$\langle f_i,f_j\rangle_G$. In statistical applications
this projection has the advantage that it projects
onto a space that does not depend on the (unknown) measure $G$.
For the verification of the conditions of Theorem~\ref{TheoremMainResult} it is useful
to note that the matrix $A$ is well-behaved if 
$f_1,\ldots,f_k$ are orthonormal relative to a
measure $G_0$ that is not too different from $G$: from the identity
$\a^T \bigl(\langle f_i,f_j\rangle_G\bigr)\a=\int(\sumik \a_i f_i)^2\,dG$,
one can verify that the eigenvalues of $A$ are bounded away from zero and
infinity if $G$ and $G_0$ are absolutely continuous with a density
that is bounded away from zero and infinity.

\beginskip
\begin{lemma}
\label{LemmaProjectionOnFixedSpace}
If $f_1,f_2,\ldots$ is a an orthonormal basis for a
measure relative to which $G$ has a density that is bounded 
away from 0 and infinity, then the 
eigenvalues of the matrix $A$ in (\ref{EqProjectionOnFixedSpace}) 
are bounded away from zero and infinity.
\end{lemma}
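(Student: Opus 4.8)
The plan is to read off everything from the quadratic-form identity recorded just before the statement: writing $M$ for the Gram matrix with entries $M_{ij}=\langle f_i,f_j\rangle_G$, so that $A=M^{-1}$, one has $\a^T M\a=\int\bigl(\sum_{i=1}^k\a_i f_i\bigr)^2\,dG$ for every $\a\in\RR^k$. Since $M$ (and hence $A$) is symmetric and positive definite, it suffices to bound the eigenvalues of $M$ away from $0$ and $\infty$ by constants not depending on $k$; the eigenvalues of $A$ then lie in the reciprocal interval.

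First I would name the measure $G_0$ relative to which $f_1,f_2,\ldots$ is orthonormal and set $w=dG/dG_0$, so that by hypothesis there are constants $0<c\le C<\infty$, independent of $k$, with $c\le w\le C$ $G_0$-almost everywhere. For a fixed $\a\in\RR^k$ put $h=\sum_{i=1}^k\a_i f_i$; orthonormality of the $f_i$ in $L_2(G_0)$ gives $\int h^2\,dG_0=\|\a\|^2$.

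Then I would simply sandwich: $\int h^2\,dG=\int h^2 w\,dG_0$, and the pointwise bounds on $w$ give $c\|\a\|^2\le\int h^2\,dG\le C\|\a\|^2$. Together with the quadratic-form identity this is exactly $c\|\a\|^2\le\a^T M\a\le C\|\a\|^2$ for all $\a$, so every eigenvalue of $M$ lies in $[c,C]$; in particular $M$ is invertible and every eigenvalue of $A=M^{-1}$ lies in $[1/C,1/c]$.

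There is essentially no obstacle here; the one point worth flagging is that $c$ and $C$ are the same for all $k$, so the bounds on the spectrum of $A$ are uniform in the dimension of the projection space, which is precisely what is used when verifying the conditions of Theorem~\ref{TheoremMainResult} for the kernels (\ref{EqProjectionOnFixedSpace}).
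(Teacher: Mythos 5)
Your proof is correct and follows essentially the same route as the paper's: both use the quadratic-form identity for the Gram matrix, compare $\int(\sum_i\a_i f_i)^2\,dG$ with the corresponding integral against the dominating measure, and transfer the resulting two-sided bound on the eigenvalues of $A^{-1}$ to $A$. Your write-up is merely more explicit about naming the density $w=dG/dG_0$ and the constants $c,C$, and usefully flags that these are uniform in $k$.
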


\begin{proof}
For any vector $\a\in \RR^k$, we have that 
$\a^T \bigl(\langle f_i,f_j\rangle_G\bigr)\a=\int(\sumik \a_i f_i)^2\,dG$.
The same equation is valid with $G$ replaced by the
dominating measure. By the assumption on $G$ 
the quotient of the right sides, 
and hence also of the left sides, of the equations
is bounded away from zero and infinity. This shows that the eigenvalues
of the matrix $A^{-1}$ are bounded away from zero and infinity,
which is then also true for the matrix $A$. 
\end{proof}

Orthogonal projections arise as best approximations
by a function in a given finite-dimensional space in $L_2(G)$.
In statistical applications we may need good approximation
properties for a variety of norms, including for instance
the uniform norm, and good numerical properties. Nonorthogonal
projections, for instance on a spline basis, or other approximations,
such as convolution kernels,
can therefore also be of interest. 
\endskip

Orthogonal projections $K$ have the important property of making the inner product
$\langle (I-K) f, f\rangle_G=\|(I-K)f\|_G^2$ 
quadratic in the approximation error. Nonorthogonal
projections, such as the convolution kernels or spline kernels discussed below, lack this property, and may result in 
a large bias of an estimator. {\sl Twicing kernels},
discussed in \cite{Neweyetal} as a means to control
the bias of plug-in estimators, remedy this problem.
The idea is to use the operator $K+K^*-KK^*$, where
$K^*$ is the adjoint of $K: L_2(G)\to L_2(G)$, 
instead of the original operator $K$.
Because $I-K-K^*+KK^*=(I-K)(I-K^*)$, it follows that
$$\bigl\langle (I-K-K^*+KK^*)f, f\bigr\rangle_G
=\bigr\langle (I-K)f, (I-K)f\bigr\rangle_G=
\bigr\|(I-K)f\bigr\|^2_G.$$
If $K$ is an orthogonal projection, then $K=K^*$ and the
twicing kernel is $K+K^*-KK^*=K$, and nothing changes,
but in general using a twicing kernel can cut a bias
significantly.

If $K$ is a kernel operator with kernel $(x_1,x_2)\mapsto K(x_1,x_2)$,
then the adjoint operator is a kernel operator
with kernel $(x_1,x_2)\mapsto K(x_2,x_1)$, and
the twicing operator $K+K^*-KK^*$ is a kernel operator with kernel
(which depends on $G$)
\begin{equation}
(x_1,x_2)\mapsto K(x_1,x_2)+K(x_2,x_1)
-\int K(x_1,z)K(x_2,z)\,dG(z).
\label{EqTwicingKernel}\end{equation}

\beginskip
\subsection{Haar}
\label{ExampleHaar}
For  ``father'' and  ``mother'' functions $\phi: \RR\to\RR$
and $\psi: \RR\to\RR$ defined by
$\phi(x)=1_{(0,1]}(x)$ and $\psi(x)=1_{(0,1/2]}(x)-1_{(1/2,1]}(x)$,
the {\sl Haar basis} is the set of functions $\{\phi, \psi_{i,j}: 
i=0,1,\ldots, j=0,1,2,\ldots, 2^i-1\}$, for
$$\psi_{i,j}(x)=2^{i/2}\psi(2^ix-j),
\qquad i=0,1,\ldots,\quad j=0,1,2,\ldots, 2^i-1.$$
The Haar basis is a complete orthonormal basis of $L_2(\LEB)$
for $\LEB$ the Lebesgue measure on $[0,1]$.
The linear span of  the first $k:=2^I$ basis elements
$\{\phi, \psi_{i,j}: i=0,1,\ldots,I-1; j=0,1,2,\ldots, 2^i-1\}$
is equal to the linear span of the scaled and shifted father
functions $\{\phi_{I,j}: j=0,1,2\ldots, 2^I-1\}$, given by
$$\phi_{i,j}(x)=2^{i/2}\phi(2^ix-j),
\qquad i=0,1,\ldots,\quad j=0,1,2,\ldots, 2^i-1.$$
These father functions at level $I$ are themselves an orthonormal
system and hence (cf.\ (\ref{EqProjectionKernelOnBasis}))
the orthogonal projection on their span
is a kernel operator with kernel
\begin{align*}
K_k(x_1,x_2)&=\sum_{j=0}^{2^I-1}\phi_{I,j}(x_1) \phi_{I,j}(x_2)\\
&=k \sum_{j=0}^{2^I-1}1_{(j2^{-I}, (j+1)2^{-I}]}(x_1)
1_{(j2^{-I},(j+1)2^{-I}]}(x_2).
\end{align*}
The functions $\phi_{I,j}$ are orthonormal relative to
the Lebesgue measure, but also orthogonal relative to
any other measure $G$. The kernel of the $L_2(G)$-projection 
on the Haar basis, as in (\ref{EqProjectionOnFixedSpace}), is obtained
by simply inserting the inverses $A_{j,j}$  of
the $L_2(G)$-norms of the basis functions in the sum 
($A_{i,,j}=0$ for $i\not=j$). Thus we consider kernels
of the form, with $A_{j,j}$ constants bounded away from zero and
infinity,
\begin{equation}
K_k(x_1,x_2)=k \sum_{j=0}^{2^I-1}A_{j,j}1_{(j2^{-I}, (j+1)2^{-I}]}(x_1)
1_{(j2^{-I},(j+1)2^{-I}]}(x_2).\label{EqHaarKernel}\end{equation}
Clearly the support of the kernel is a set of the form
shown in Figure~\ref{FigurePartition}. Condition (\ref{EqConditionOne})
is trivial if the partitions $\X=\cup_m\X_{n,m}$ are chosen
a coarsening of the partition $(0,1]=\cup_j (j2^{-I},(j+1)2^{-I}]$.
In order to satisfy (\ref{EqConditionTwoHalf})-(\ref{EqConditionThree})
we use a true coarsening of the ``natural'' partition
in fewer sets.

\begin{proposition}
\label{TheoremHaar}
For the Haar kernel (\ref{EqHaarKernel}) with $k=k_n=2^I$ satisfying
$k_n/n\ra \infty$ and $k_n/n^2\ra 0$
conditions (\ref{EqBoundedNorms}), (\ref{EqConditionOne}),
(\ref{EqConditionTwo}), (\ref{EqConditionTwoHalf}), (\ref{EqConditionThree})
and (\ref{EqConditionFour}) are satisfied for any measure
$G$ on $[0,1]$ with a Lebesgue density that is bounded 
and bounded away from zero and regression functions
$\m_2$ and $\m_q$ (for some $q>2$) 
that are bounded and bounded away from zero.
\end{proposition}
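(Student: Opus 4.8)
The plan is to verify each of the conditions (\ref{EqBoundedNorms}), (\ref{EqConditionOne})--(\ref{EqConditionFour}) directly for the Haar kernel (\ref{EqHaarKernel}), exploiting the special block structure of that kernel. Write $I=I_n$, $k=k_n=2^I$, and let $B_j=(j2^{-I},(j+1)2^{-I}]$ for $j=0,\ldots,2^I-1$ denote the natural dyadic cells. Since the $\phi_{I,j}$ are orthonormal in $L_2(\LEB)$ and $G$ has a density bounded away from $0$ and $\infty$, the numbers $A_{j,j}=\|\phi_{I,j}\|_G^{-2}$ are bounded away from $0$ and $\infty$; hence (\ref{EqProjectionOnFixedSpace}) exhibits $K_k$ as an orthogonal projection in $L_2(G)$ onto a $k$-dimensional subspace, so $\|K_k\|=1$ and (\ref{EqBoundedNorms}) holds. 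For the choice of partition, let $\X=\cup_m\X_{n,m}$ be a coarsening of $\{B_j\}$ into $N_n$ sets, each a union of roughly $2^I/N_n$ consecutive cells; the parameter $N_n$ will be selected at the end. Because the support of $K_k$ lies in $\cup_j(B_j\times B_j)\subset\cup_m(\X_{n,m}\times\X_{n,m})$, the sum in (\ref{EqConditionOne}) equals $k_n$ exactly, so (\ref{EqConditionOne}) holds trivially (the ratio is identically $1$).

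Next I would compute $k_n$ and the block quantities. Using $\m_2$ bounded above and below and $G$ comparable to Lebesgue measure, a direct integration of (\ref{EqHaarKernel}) gives $k_n\asymp k=2^I$; more precisely $k_n=\sum_j A_{j,j}^2\,(\m_2\times\m_2\text{-weighted }G\times G\text{-mass of }B_j\times B_j)\asymp 2^{-2I}\cdot 2^{2I}\cdot\#\{j\}\cdot 1=2^I$. For (\ref{EqConditionTwo}), the integral over a single block $\X_{n,m}\times\X_{n,m}$ is a sum over the $\asymp 2^I/N_n$ cells it contains, each contributing $\asymp 1$, so $\max_m(\cdots)\asymp 2^I/N_n$, and dividing by $k_n\asymp 2^I$ gives $\asymp 1/N_n\to0$ provided $N_n\to\infty$. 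Condition (\ref{EqConditionTwoHalf}) reads $\max_m G(\X_{n,m})\asymp 1/N_n\to0$, again needing $N_n\to\infty$; condition (\ref{EqConditionThree}) reads $\liminf n\min_m G(\X_{n,m})\asymp \liminf n/N_n>0$, i.e.\ $N_n\lesssim n$. So the first block of conditions forces $N_n\to\infty$ with $N_n\lesssim n$.

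For the Lyapounov-type condition (\ref{EqConditionFour}), I would invoke Lemma~\ref{LemmaConditionFourSimplified}: since $\|K_k\|_\infty\asymp k\asymp k_n$ and $\m_2,\m_q$ are bounded away from $0$ and $\infty$, it suffices to check (\ref{EqConditionFourSimplified}), namely $\max_m G(\X_{n,m})\,k_n/n\asymp (1/N_n)(2^I/n)\to0$, i.e.\ $N_n\gg 2^I/n=k_n/n$. Combining with the earlier requirements, I need a sequence $N_n$ with $k_n/n\ll N_n\lesssim n$ and $N_n\to\infty$; such $N_n$ exists precisely because $k_n/n\to\infty$ is \emph{not} assumed to dominate $n$ — indeed the hypothesis $k_n/n^2\to0$ gives $k_n/n\ll n$, leaving room, while $k_n/n\to\infty$ forces $N_n\to\infty$ automatically. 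Concretely one may take $N_n=\lceil\sqrt{k_n/n}\cdot\,n/\sqrt{k_n}\,\rceil$ — more cleanly, $N_n$ any integer sequence with $k_n/n\ll N_n\ll n\,(k_n/n)/(k_n/n)$; e.g.\ $N_n=\big\lceil (k_n/n)\,\log(n^2/k_n)\big\rceil$ works since then $N_n\gg k_n/n$, while $N_n\lesssim (k_n/n)\log n\lesssim n$ using $k_n\ll n^2$. With this $N_n$ all six conditions hold.

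The only genuinely delicate point is this last juggling: one must confirm that the window $(k_n/n,\,n]$ for $N_n$ is nonempty for all large $n$, which is exactly where the two standing hypotheses $k_n/n\to\infty$ and $k_n/n^2\to0$ are both used — the first to guarantee a valid (diverging) lower endpoint and the second to keep the lower endpoint below the upper endpoint $n$. Everything else reduces to the elementary block estimates above, where the comparability of $G$ with Lebesgue measure and the two-sided bounds on $\m_2,\m_q$ make each integral over a block comparable to its cell-count times a bounded constant. I would organize the write-up as: (1) reduce to an orthogonal projection and get (\ref{EqBoundedNorms}); (2) fix the coarsened partition with free parameter $N_n$ and dispatch (\ref{EqConditionOne}); (3) the block-size estimates giving (\ref{EqConditionTwo})--(\ref{EqConditionThree}) and (via Lemma~\ref{LemmaConditionFourSimplified}) (\ref{EqConditionFour}), each translated into an inequality on $N_n$; (4) choose $N_n$ and verify the window is nonempty using $k_n/n\to\infty$ and $k_n/n^2\to0$.
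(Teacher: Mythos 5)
Your proof is correct and follows essentially the same route as the paper: coarsen the dyadic Haar partition into $M_n$ blocks, observe that the block-diagonal support makes (\ref{EqConditionOne}) exact, do the cell-counting estimates for (\ref{EqConditionTwo})--(\ref{EqConditionThree}), and reduce (\ref{EqConditionFour}) to (\ref{EqConditionFourSimplified}) via Lemma~\ref{LemmaConditionFourSimplified}, arriving at the window $k_n/n\ll M_n\lesssim n$ whose nonemptiness uses exactly the two hypotheses on $k_n$. One small warning: your ``concrete'' choice $N_n=\lceil\sqrt{k_n/n}\cdot n/\sqrt{k_n}\rceil=\lceil\sqrt n\rceil$ does \emph{not} lie in the window unless $k_n\ll n^{3/2}$, and the step $(k_n/n)\log n\lesssim n$ does not follow from $k_n\ll n^2$ alone; the paper avoids all of this by simply taking $M_n=n$, for which (\ref{EqConditionFourSimplified}) becomes precisely $k_n/n^2\to0$.
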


\endskip

\subsection{Wavelets}
Consider expansions of functions $f\in L_2(\RR^d)$ on an orthonormal basis 
of compactly supported, bounded wavelets of the form
\begin{equation}
f(x)=\sum_{j\in\ZZ^d}\sum_{v\in\{0,1\}^d}
\!\langle f,\psi_{0,j}^v\rangle \psi_{0,j}^v(x)
+\sum_{i=0}^\infty\sum_{j\in\ZZ^d}\sum_{v\in\{0,1\}^d-\{0\}}
\!\!\langle f,\psi_{i,j}^v\rangle \psi_{i,j}^v(x),
\label{EqWaveletExpansionf}\end{equation}
where the base functions $\psi_{i,j}^v$ are orthogonal
for different indices $(i,j,v)$ and are scaled and translated versions
of the $2^d$ base functions $\psi_{0,0}^v$:
$$\psi_{i,j}^v(x)=2^{id/2}\psi_{0,0}^v(2^i x-j).$$
Such a higher-dimensional wavelet basis can be
obtained as tensor products 
$\psi_{0,0}^v=\phi^{v_1}\times\cdots\times\phi^{v_d}$
of a given father wavelet $\phi^0$ and and mother wavelet $\phi^1$
in one dimension. See for instance Chapter~8 of \cite{Daubechies}.

We shall be interested in functions $f$ with support $\X=[0,1]^d$.
In view of the compact support of the wavelets, for each
resolution level $i$ and vector $v$ only to the order $2^{id}$ base elements
$\psi_{i,j}^v$ are nonzero on $\X$; denote the corresponding set
of indices $j$ by $J_i$.
Truncating the expansion at the level of resolution $i=I$ then gives
an orthogonal projection on a subspace of dimension $k$ of the order 
$2^{Id}$. The corresponding kernel is
\begin{align}
K_k(x_1,x_2)
&=\sum_{j\in J_0}\sum_{v\in\{0,1\}^d}
\!\psi_{0,j}^v(x_1)\psi_{0,j}^v(x_2) \label{EqWaveletKernel}\\
&\qqqquad+\sum_{i=0}^{I}\sum_{j\in J_i}\sum_{v\in\{0,1\}^d-\{0\}}
\!\! \psi_{i,j}^v(x_1) \psi_{i,j}^v(x_1).\nonumber
\end{align}

\begin{proposition}
\label{TheoremWavelets}
For the wavelet kernel (\ref{EqWaveletKernel}) with $k=k_n=2^{Id}$ satisfying
$k_n/n\ra \infty$ and $k_n/n^2\ra 0$
conditions (\ref{EqBoundedNorms}), (\ref{EqConditionOne}),
(\ref{EqConditionTwo}), (\ref{EqConditionTwoHalf}), (\ref{EqConditionThree})
and (\ref{EqConditionFour}) are satisfied for any measure
$G$ on $[0,1]^d$ with a Lebesgue density that is bounded 
and bounded away from zero and regression functions
$\m_2$ and $\m_q$ (for some $q>2$) 
that are bounded and bounded away from zero.
\end{proposition}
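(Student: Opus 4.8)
The plan is to reduce the wavelet kernel to essentially the object treated in the Haar case and then verify \ref{EqBoundedNorms} and \ref{EqConditionOne}--\ref{EqConditionFour} by direct estimation, using \ref{LemmaConditionFourSimplified} for the last one. Write $k=2^{Id}$ for the dimension of the truncated expansion; since $g$, $\m_2$ and $\m_q$ are bounded and bounded away from zero, the weighted square norm in \ref{EqDefinitionkn} satisfies $k_n\asymp k$. The decisive first step is to rewrite $K_k$ through scaling functions rather than wavelets: truncating the expansion \ref{EqWaveletExpansionf} at resolution level $I$ gives the orthogonal projection in $L_2(\LEB)$ onto a multiresolution space of dimension $\asymp k$, and this projection is the kernel operator with kernel $(x_1,x_2)\mapsto\sum_j\Phi_j(x_1)\Phi_j(x_2)$, where the $\Phi_j$ are the corresponding (tensor-product, $d$-dimensional) scaling functions at resolution $\asymp k^{-1/d}$, orthonormal in $L_2(\LEB)$, with $j$ running over the $O(k)$ indices for which $\Phi_j$ is not identically zero on $\X=[0,1]^d$ (compare \ref{EqProjectionKernelOnBasis}). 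This representation is essential: although it defines the same function as \ref{EqWaveletKernel}, only in scaling-function form is the concentration near the diagonal manifest, since the wavelet sum contains cancellations across resolution levels that termwise estimates destroy. From the compact support, boundedness and orthonormality of the $\Phi_j$ one reads off the three facts that drive the whole proof: $K_k(x_1,x_2)=0$ unless $|x_1-x_2|_\infty\lesssim k^{-1/d}$; $\|K_k\|_\infty\lesssim k$; and $\int K_k(x_1,x_2)^2\,d\LEB(x_2)=K_k(x_1,x_1)\lesssim k$ for every $x_1$, so that $\int\!\!\int K_k^2\,d(\LEB\times\LEB)\asymp k$. Consequently \ref{EqKnToInfinity} follows from the hypothesis $k/n\ra\infty$, and \ref{EqBoundedNorms} holds because $f\mapsto\int f(v)K_k(x,v)\,dG(v)$ equals $f\mapsto P(fg)$ for the contractive $L_2(\LEB)$-projection $P$, together with the boundedness of $g$ above and below.

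Next I would choose the partition. Let $L_n$ be the largest integer with $2^{L_n d}\le n$, so $L_n\approx d^{-1}\log_2 n$, and let $\X=\cup_m\X_{n,m}$ be the partition of $[0,1]^d$ into the $2^{L_n d}$ dyadic cubes of side $2^{-L_n}$. Because $k/n\ra\infty$ forces $Id-\log_2 n\ra\infty$, we have $L_n<I$, so this is a genuine coarsening of the level-$I$ dyadic partition underlying $K_k$, and $I-L_n\ra\infty$. With this choice the four conditions are short estimates. Since $g$ is bounded above and below, $G(\X_{n,m})\asymp 2^{-L_n d}\asymp 1/n$, which gives \ref{EqConditionTwoHalf} and \ref{EqConditionThree}. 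For \ref{EqConditionTwo}, each block integral is at most a constant times $\int_{\X_{n,m}}\int_{\RR^d}K_k^2\,d(\LEB\times\LEB)=\int_{\X_{n,m}}K_k(x_1,x_1)\,d\LEB(x_1)\lesssim k\,\LEB(\X_{n,m})\asymp k/n$, which is $o(k_n)$. For \ref{EqConditionOne} it is enough to bound the complementary off-block sum $\sum_{m\ne m'}\int_{\X_{n,m}\times\X_{n,m'}}K_k^2\,(\m_2\times\m_2)\,d(G\times G)$: on its support $K_k$ vanishes unless $|x_1-x_2|_\infty\lesssim k^{-1/d}$ with $x_1,x_2$ in different cubes, so $x_1$ lies within $O(k^{-1/d})$ of the union of cube faces, a set of Lebesgue measure $\lesssim 2^{L_n}k^{-1/d}$; bounding $K_k^2\lesssim k^2$ and integrating $x_2$ over a ball of radius $O(k^{-1/d})$ gives a total bound $\lesssim k^2\cdot k^{-1}\cdot 2^{L_n}k^{-1/d}=k\,2^{-(I-L_n)}=o(k_n)$. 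Finally \ref{LemmaConditionFourSimplified} applies, since $\|K_k\|_\infty\lesssim k\asymp k_n$ and $\m_2,\m_q$ are bounded and bounded away from zero, so \ref{EqConditionFour} reduces to \ref{EqConditionFourSimplified}, that is $\max_m G(\X_{n,m})\,k_n/n\asymp(1/n)(k/n)=k/n^2\ra0$, which is the remaining hypothesis. Theorem \ref{TheoremMainResult} then gives the assertion; note that the partition into $\asymp n$ cubes is essentially forced, since \ref{EqConditionThree} caps the number of cubes at $O(n)$ while \ref{EqConditionFourSimplified} wants it as large as possible.

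I expect the first step to be the only genuine difficulty; everything after it is the $d$-dimensional replica of the Haar computation, governed entirely by the geometry of dyadic cubes. The point one must recognise is that the kernel \ref{EqWaveletKernel} should be handled through its scaling-function representation, because the wavelet sum is not concentrated near the diagonal term by term and becomes so only after cancellation across levels. The one real technical loose end is that wavelets on the cube $[0,1]^d$ are not exact translates near $\partial[0,1]^d$: the $O(k^{(d-1)/d})$ scaling functions whose supports meet the boundary must be retained, but each contributes only a bounded amount, so they perturb $k_n$ and every estimate above only by an asymptotically negligible factor; alternatively one may run the argument with a boundary-corrected multiresolution analysis, for which the three structural facts hold verbatim.
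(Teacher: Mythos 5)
Your proof is correct and follows essentially the same route as the paper's: rewrite the kernel via a single resolution level so the diagonal concentration is visible, use the compact supports of the scale-$I$ base functions to reduce conditions (\ref{EqConditionOne})--(\ref{EqConditionTwo}) to a cube-boundary estimate, and invoke Lemma~\ref{LemmaConditionFourSimplified} for (\ref{EqConditionFour}). The only cosmetic difference is that you fix the partition size $M_n\asymp n$ from the start, whereas the paper leaves $M_n$ as a free parameter constrained by $k_n/n\ll M_n\ll k_n$ and $M_n\lesssim n$; your choice is one admissible realization (though, contrary to your closing remark, not the forced one, since any $M_n$ in that range works), and your boundary-slab bound $k\,2^{-(I-L_n)}$ is the same quantity $(M_n/k_n)^{1/d}k_n$ that the paper obtains by counting overlapping support cubes.
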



\subsection{Fourier basis}
Any function $f\in L_2[-\pi,\pi]$ can be represented 
through the Fourier series $f=\sum_{j\in \ZZ} f_j e_j$,
for the functions $e_j(x)=e^{ij x}/\sqrt{2\pi}$ and
the Fourier coefficients $f_j=\int_{-\pi}^\pi f e_j\,d\LEB$.
The truncated series 
$f_k=\sum_{|j|\le k} f_j e_j$ gives the
orthogonal projection of $f$ onto the linear span
of the function $\{e_j: |j|\le k\}$, and can be written
as $K_kf$ for $K_k$ the kernel operator with kernel
(known as the Dirichlet kernel)
\begin{equation}
K_k(x_1,x_2)=\sum_{|j|\le k} e_j(x_1)e_j(x_2)
={\sin\bigl((k+\thalf)(x_1-x_2)\bigr)\over 2\pi\sin\bigl(\thalf(x_1-x_2)\bigr)}.
\label{EqFourierKernel}\end{equation}

\begin{proposition}
\label{TheoremFourier}
For the Fourier kernel (\ref{EqFourierKernel}) with $k=k_n$ satisfying
$n\ll k_n\ll n^2$ 
conditions (\ref{EqBoundedNorms}), (\ref{EqConditionOne})--(\ref{EqConditionFour}) 
are satisfied for any measure
$G$ on $\RR$ with a bounded Lebesgue density and regression functions
$\m_2$ and $\m_q$ (for some $q>2$) 
that are bounded and bounded away from zero.
\end{proposition}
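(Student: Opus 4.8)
The plan is to verify the conditions (\ref{EqBoundedNorms}) and (\ref{EqConditionOne})--(\ref{EqConditionFour}) of Theorem~\ref{TheoremMainResult} one by one for the Dirichlet kernel (\ref{EqFourierKernel}), exploiting the fact that $K_k(x_1,x_2)$ depends on $x_1,x_2$ only through the difference $x_1-x_2$ and is concentrated, as a function of that difference, in an interval of length of order $1/k$ around $0$ (modulo $2\pi$). Since $G$ has a bounded Lebesgue density, all integrals against $d(G\times G)$ are dominated by the corresponding Lebesgue integrals, and since $\m_2,\m_q$ are bounded above and below the weights $\m_2\times\m_2$ and $\m_q\times\m_q$ may be absorbed into the $\lesssim$/$\gtrsim$ constants throughout. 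First I would record the standard facts $\|K_k\|=1$ (it is an orthogonal projection on $\lin\{e_j:|j|\le k\}$, giving (\ref{EqBoundedNorms})) and $\int\!\!\int K_k^2\,d\LEB^2 = 2k+1$, so that $k_n\asymp k$ and condition (\ref{EqKnToInfinity}) is exactly the hypothesis $n\ll k_n$.

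For the partition I would take $\X=[-\pi,\pi]$ cut into $M_n$ equal intervals $\X_{n,m}$ of length $2\pi/M_n$, where $M_n\to\infty$ is chosen later; then $G(\X_{n,m})\asymp 1/M_n$ uniformly, so (\ref{EqConditionTwoHalf}) holds and (\ref{EqConditionThree}) becomes $n/M_n\gtrsim 1$, i.e.\ we must keep $M_n\lesssim n$. The key estimate is that $\sum_m\int_{\X_{n,m}}\!\int_{\X_{n,m}}K_k^2\,d\LEB^2$ counts only the mass of $K_k^2$ within distance $2\pi/M_n$ of the diagonal; writing $D_k(t)=\sin((k+\thalf)t)/(2\pi\sin(\thalf t))$, one has $\int_{|t|\le\delta}D_k^2(t)\,dt = 2k+1 - O(1/\delta)$ for $\delta\gtrsim 1/k$ (the tail $\int_{\delta\le|t|\le\pi}D_k^2 \lesssim \int_\delta^\pi t^{-2}\,dt\lesssim 1/\delta$, using $|\sin(\thalf t)|\gtrsim |t|$). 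Hence the left side of (\ref{EqConditionOne}) equals $1 - O(M_n/k_n)\to 1$ provided $M_n\ll k_n$, which is compatible with $M_n\lesssim n\ll k_n$. For (\ref{EqConditionTwo}) the single-cell integral $\int_{\X_{n,m}}\!\int_{\X_{n,m}}K_k^2$ is at most $\int_{|t|\le 2\pi/M_n}D_k^2(t)\,dt\cdot G(\X_{n,m})\cdot(\text{const})$, which is $\lesssim k_n/M_n$ (the diagonal strip of width $1/M_n$ against a kernel of height $\asymp k_n$ on a base of width $\asymp 1/k_n$, or more crudely $(2k+1)/M_n$); divided by $k_n$ this is $O(1/M_n)\to 0$, so (\ref{EqConditionTwo}) holds automatically once $M_n\to\infty$.

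The binding condition is the Lyapounov condition (\ref{EqConditionFour}). Since $\|K_k\|_\infty = |D_k(0)| = (2k+1)/(2\pi)\asymp k_n$, I would invoke Lemma~\ref{LemmaConditionFourSimplified}: it suffices to check (\ref{EqConditionFourSimplified}), namely $\max_m G(\X_{n,m})\,k_n/n\asymp k_n/(M_n n)\to 0$, i.e.\ $M_n\gg k_n/n$. Combining all the constraints we need a sequence $M_n$ with
\[
\frac{k_n}{n}\ll M_n\ \ \text{and}\ \ M_n\lesssim n\ \ \text{and}\ \ M_n\ll k_n ,
\]
and such a choice exists precisely because $k_n/n\ll n$, i.e.\ because $k_n\ll n^2$ — this is where the upper bound hypothesis $k_n\ll n^2$ is used. (For instance $M_n=\lfloor\sqrt{(k_n/n)\cdot n}\rfloor=\lfloor\sqrt{k_n}\rfloor$ works when $n\ll k_n\ll n^2$, since then $k_n/n\ll\sqrt{k_n}\ll n$.) With such $M_n$ all of (\ref{EqBoundedNorms}) and (\ref{EqConditionOne})--(\ref{EqConditionFour}) hold, and Theorem~\ref{TheoremMainResult} applies.

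The main obstacle I anticipate is the sharp control of the ``off-diagonal'' mass of the Dirichlet kernel needed for (\ref{EqConditionOne}): one must show the loss incurred by restricting to $\cup_m(\X_{n,m}\times\X_{n,m})$ is only $O(M_n/k_n)$ rather than, say, $O(M_n\log k_n/k_n)$ or worse, since an extra logarithm would force $M_n\ll k_n/\log k_n$ and could clash with $M_n\gg k_n/n$ when $k_n$ is close to $n$ — though in the regime $n\ll k_n$ this particular clash does not actually occur. The cleanest route is the elementary bound $D_k^2(t)\le C/t^2$ for $0<|t|\le\pi$ (from $|\sin(\thalf t)|\ge |t|/\pi$), giving the tail integral $\lesssim M_n$ and hence the ratio $\lesssim M_n/k_n$ with no logarithm; the only care needed is near $t=\pm\pi$ where $\sin(\thalf t)$ is bounded below anyway, and the periodicity of $D_k$ which means ``difference near $0$'' should be read modulo $2\pi$ (harmless for equally spaced intervals on $[-\pi,\pi]$ up to boundary cells of total measure $O(1/M_n)$, which contribute negligibly). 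A secondary routine point is checking that the $\m$-weights and the density of $G$ never upset the two-sided bounds, which follows directly from their boundedness above and below.
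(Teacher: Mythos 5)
Your proposal is essentially correct in structure and conclusion, and follows the same broad route as the paper: show $\|K_k\|=1$ and $\int\!\!\int K_k^2\,d\LEB^2=2k+1$, partition $[-\pi,\pi]$ into $M_n$ equal intervals, exploit the $1/t^2$ decay of $D_k^2$ away from the origin, invoke Lemma~\ref{LemmaConditionFourSimplified} since $\|K_k\|_\infty\asymp k_n$, and pick $M_n$ in the window $k_n/n\ll M_n\lesssim n$, $M_n\ll k_n$, which is nonempty precisely because $n\ll k_n\ll n^2$.

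There is, however, one genuine imprecision in the key estimate for (\ref{EqConditionOne}). You treat the complement of $\cup_m(\X_{n,m}\times\X_{n,m})$ as if it were the set $\{|x_1-x_2|>2\pi/M_n\}$, for which $\int D_k^2\lesssim M_n$ by the $1/t^2$ bound. But the complement also contains the ``triangles'' $\{|x_1-x_2|\le 2\pi/M_n,\ x_1,x_2\text{ in adjacent cells}\}$, and these are the parts of the near-diagonal mass that you are discarding. Estimating one such triangle by $\int_0^{\d}u\,D_k^2(u)\,du$ ($\d=2\pi/M_n$) with $D_k^2(u)\le C\min(k^2,1/u^2)$ gives a bound $O(1)+O(\log(\d k))=O(\log(k_n/M_n))$, and with $\sim M_n$ triangles the total off-cell loss is $O\bigl((M_n/k_n)\log(k_n/M_n)\bigr)$, not $O(M_n/k_n)$; your claim that the cleanest route gives ``no logarithm'' is therefore not correct as stated. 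You anticipate exactly this possibility in your closing paragraph, and you are right that it is harmless (with $M_n=\lfloor\sqrt{k_n}\rfloor$ the logarithm is absorbed), so the proposition still follows — but the estimate should be corrected or, better, the triangles should be handled separately. The paper sidesteps the issue with a two-scale decomposition: it introduces an intermediate threshold $\e$ with $1/\sqrt{k_n}\ll\e\ll\d$, bounds the tail $\{|x_1-x_2|>\e\}$ crudely by $O(\e+1/(\e^2 k_n))$ using $|D_k(t)|\le 1/(2\pi\sin(\e/2))$ on $(\e,\pi)$, and bounds each of the $2(M_n-1)$ triangles (now with side $\e$, not $\d$) by $\e\cdot(2k+1)$, so the triangle contribution is $O(M_n\e)\ra0$. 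This avoids the logarithm entirely and keeps all estimates elementary; your single-scale version is slightly more economical but requires the sharper $\int_0^\d u D_k^2(u)\,du$ computation to close the gap.
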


\subsection{Convolution}
For a uniformly bounded function 
$\phi: \RR\to \RR$ with $\int |\phi|\,d\LEB<\infty$,
and a positive number $\s$, set
\begin{equation}
K_\s(x_1,x_2)={1\over \s}\phi\Bigl({x_1-x_2\over \s}\Bigr)
:=\phi_\s(x_1-x_2).
\label{EqConvolutionKernel}\end{equation}
For $\s\da0$ these kernels tend to the diagonal, with square
norm of the order $\s^{-1}$.

\begin{proposition}
\label{TheoremConvolution}
For the convolution kernel (\ref{EqConvolutionKernel}) with $\s=\s_n$ satisfying
$n^{-2}\ll \s_n\ll n^{-1}$ 
conditions (\ref{EqBoundedNorms}), (\ref{EqConditionOne})--(\ref{EqConditionFour}) 
are satisfied for any measure
$G$ on $[0,1]$ with a Lebesgue density that is bounded 
and bounded away from zero and regression functions
$\m_2$ and $\m_q$ (for some $q>2$) 
that are bounded and bounded away from zero.
\end{proposition}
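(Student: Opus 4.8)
We must verify, for the convolution kernel $K_{\s_n}(x_1,x_2)=\s_n^{-1}\phi\bigl((x_1-x_2)/\s_n\bigr)$ on $[0,1]$, the six conditions of Theorem~\ref{TheoremMainResult} under the scaling $n^{-2}\ll\s_n\ll n^{-1}$. The first step is to record the sizes of the relevant quantities. Since $\phi$ is bounded and integrable, $\|K_{\s_n}\|_\infty\lesssim \s_n^{-1}$ and $\int\!\!\int K_{\s_n}^2\,d(\LEB\times\LEB)\asymp \s_n^{-1}$; with $g$, $\m_2$, $\m_q$ bounded above and below this gives $k_n\asymp \s_n^{-1}$, so that $k_n/n\asymp (\s_n n)^{-1}\ra\infty$ encodes $\s_n\ll n^{-1}$ and $k_n/n^2\asymp (\s_n n^2)^{-1}\ra 0$ encodes $\s_n\gg n^{-2}$. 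For (\ref{EqBoundedNorms}): unlike a projection, the convolution operator need not have norm $1$, but boundedness of $\phi$ and $\int|\phi|\,d\LEB<\infty$ give a uniform bound on $\|K_{\s_n}\|$ via Young's inequality (the density $g$ being bounded, one passes between $L_2(G)$ and $L_2(\LEB)$ at a bounded cost). This also handles the extra density factor $g$ through Lemma~\ref{LemmaProductKernelAndFunction}, so I may as well work with $\LEB$ and constant weights throughout.

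Next I choose the partition. Take $\X_{n,m}=(( m-1)\d_n, m\d_n]\cap[0,1]$ for a mesh $\d_n$ with $\s_n\ll \d_n$ and $\d_n\ra 0$ and $n\d_n\to\infty$; a convenient explicit choice is $\d_n=\sqrt{\s_n/n}\,\cdot$(slowly growing factor), which one checks satisfies all three requirements because $n^{-2}\ll\s_n\ll n^{-1}$. Then $G(\X_{n,m})\asymp\d_n$, so (\ref{EqConditionTwoHalf}) is $\d_n\ra0$ and (\ref{EqConditionThree}) is $n\d_n\gtrsim1$, both built into the choice. For (\ref{EqConditionOne}) and (\ref{EqConditionTwo}): the mass of $K_{\s_n}^2$ lying within a single block $\X_{n,m}\times\X_{n,m}$ is controlled by how much of the ``tube'' $|x_1-x_2|\lesssim\s_n$ (where the kernel lives, up to tails of $\phi$) sits off the block diagonal; a change of variables $u=(x_1-x_2)/\s_n$ shows the mass \emph{lost} at each block boundary is $O(\s_n)$ per boundary and there are $O(\d_n^{-1})$ boundaries, so the fraction lost is $O(\s_n/\d_n)\ra0$, giving (\ref{EqConditionOne}); and the mass inside one block is $O(\s_n^{-1})$, i.e.\ $O(k_n)$, but more precisely the per-block integral is $O(1)$ (a fixed fraction of a single ``unit cell''), hence divided by $k_n\asymp\s_n^{-1}$ it is $O(\s_n)\ra0$, giving (\ref{EqConditionTwo}). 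If $\phi$ has unbounded support one splits the $u$-integral at a slowly growing truncation level and uses $\int|\phi|<\infty$ together with boundedness of $\phi$ to absorb the tails; this is the one place a little care is needed.

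Finally, (\ref{EqConditionFour}) is the Lyapounov condition, and here I invoke Lemma~\ref{LemmaConditionFourSimplified}: since $\|K_{\s_n}\|_\infty\lesssim\s_n^{-1}\asymp k_n$ and $\m_2,\m_q$ are bounded away from $0$ and $\infty$, it suffices to check the simplified form (\ref{EqConditionFourSimplified}), namely $\max_m G(\X_{n,m})\,k_n/n\asymp \d_n/(\s_n n)\ra0$. With the choice $\d_n=\sqrt{\s_n/n}$ this is $\sqrt{\s_n/n}\,/(\s_n n)=(\s_n n^3)^{-1/2}$, which tends to $0$ precisely because $\s_n\gg n^{-2}$ (indeed $\s_n n^3\gg n$). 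So all six conditions hold, and Theorem~\ref{TheoremMainResult} applies. The main obstacle is the bookkeeping in (\ref{EqConditionOne})--(\ref{EqConditionTwo}) when $\phi$ is merely integrable rather than compactly supported: one must show that the off-diagonal leakage across the $O(\d_n^{-1})$ block boundaries is genuinely $o(k_n)$ uniformly, and that the tails of $\phi$ beyond scale $\s_n$ contribute negligibly — a truncation-plus-dominated-convergence argument that I would carry out in the appendix alongside the analogous computations for the other kernels.
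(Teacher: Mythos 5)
Your overall plan is the one the paper uses (equal-mesh partition, Lemma~\ref{LemmaConditionFourSimplified} for (\ref{EqConditionFour}), local change of variables for (\ref{EqConditionOne})--(\ref{EqConditionTwo})), but two of the computations as written are off. First, the claimed ``convenient explicit choice'' $\d_n=\sqrt{\s_n/n}\cdot(\text{slowly growing factor})$ does not meet (\ref{EqConditionThree}): since $\s_n\ll n^{-1}$ we have $n\sqrt{\s_n/n}=\sqrt{n\s_n}\ra0$, so the compensating factor must grow at least like $(n\s_n)^{-1/2}$, which is polynomially fast (close to $n^{1/2}$ when $\s_n$ is near $n^{-2}$), not ``slowly growing''. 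The paper instead just records the compatibility requirement $\s_n\ll\e_n\ll\d_n\ll\s_n n$ and leaves the rate free; a clean explicit choice compatible with all four inequalities is $\d_n\sim n^{-1}$. Second, the per-block integral $\int_{\X_{n,m}}\int_{\X_{n,m}}K_{\s_n}^2\,d(G\times G)$ is of order $\d_n/\s_n$, not $O(1)$: the diagonal strip of width $\sim\s_n$ runs along a block of side $\d_n$, so it contains on the order of $\d_n/\s_n$ of your ``unit cells''. Divided by $k_n\sim\s_n^{-1}$ this gives $O(\d_n)$, which still tends to zero, but not by the bound you wrote. Finally, you rightly flag that the possible non-compactness of $\phi$ requires a truncation in (\ref{EqConditionOne}); the paper makes this precise by inserting an intermediate scale $\e_n$ with $\s_n\ll\e_n\ll\d_n$, showing $\s_n\int\!\!\int_{|x_1-x_2|>\e_n}K_{\s_n}^2\,d(G\times G)\ra0$ from $\int_{|v|>\e_n/\s_n}\phi^2(v)\,dv\ra0$, and then bounding the $2(M_n-1)$ residual off-block-diagonal triangles (each of side $\e_n$) by a total contribution of order $\e_n/\d_n\ra0$. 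That intermediate scale is precisely the piece your sketch leaves implicit.
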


\subsection{Splines}
The  {\sl Schoenberg space\/} $S_r(T,d)$ of order $r$ for a given
knot sequence $T: t_0=0<t_1<t_2<\cdots<t_l<1=t_{l+1}$ and 
vector of {\sl defects} $d=(d_1,\ldots, d_l)\in\{0,\ldots, r-1\}$ 
are the functions $f: [0,1]\ra\RR$ whose restriction
to each subinterval $(t_i,t_{i+1})$ is a polynomial 
of degree $r-1$ and which are $r-1-d_i$ times
continuously differentiable in a neighbourhood of each $t_i$.
(Here ``0 times continuously differentiable'' means
``continuous'' and ``-1 times continuously differentiable'' 
means no restriction.) The Schoenberg space is a
$k=r+\sum_i d_i$-dimensional vector space. Each ``augmented knot sequence''
\begin{equation}
-t_{r+1}\le\cdots\le t_0=0<t_1<t_2<\cdots<t_l<1=t_{l+1}\le\cdots\le
t_{l+r}\label{EqKnotSequence}\end{equation}
defines a basis $N_1,\ldots, N_k$ of {\sl B-splines}.
These are nonnegative splines with $\sum_j N_j=1$ such that
$N_j$ vanishes outside the interval $(t_j',t_{j+r}')$.
Here the ``basic knots''
$(t_j')$ are defined as the knot sequence $(t_j)$, but with each $t_i\in (0,1)$
repeated $d_i$ times.
See \cite{DevoreLorentz}, pages 137, 140 and 145).
We assume that $|t_{i-1}-t_i|\le |t_{-1}-t_0|$ if $i<0$ and
$|t_{i+1}-t_i|\le |t_{l+1}-t_l|$ if $i>l$.

The {\sl quasi-interpolant operator} is a projection
$K_k: L_1[0,1]\to S_r(T,d)$ with the properties
\begin{align*}
\|f- K_kf\|_p&\le C_r\|f-S_r(T,d)\|_p,\\
\|K_kf\|_p&\le C_r\|f\|_p.
\end{align*}
for every $1\le p\le\infty$ and 
a constant $C_r$ depending on $r$ only 
(see \cite{DevoreLorentz}, pages 144--147).
It follows that the projection $K_k$ inherits the good approximation properties
of spline functions, relative to any $L_p$-norm. In particular,
it gives good approximation to smooth functions.

The quasi-interpolant operator $K_k$ is a projection onto $S_r(T,d)$
(i.e.\ $K_k^2=K_k$ and $K_kf=f$ for $f\in S_r(T,d)$),
but not an orthogonal projection. Because the B-splines form
a basis for $S_r(T,d)$, the operator can be
written in the form $K_kf=\sum_j c_j(f)N_j$ for certain linear functionals
$c_j: L_1[0,1]\to \RR$. It can be shown that, for any $1\le p\le\infty$,
\begin{equation}
|c_j(f)|\le C_r {1\over (t_{j+r}'-t_j')^{1/p}}\|f1_{[t_j',t_{j+r}']}\|_p.
\label{EqCj}\end{equation}
(\cite{DevoreLorentz}, page 145.)
In particular, the functionals $c_j$ belong to the dual space
of $L_1[0,1]$ and can be written as $c_j(f)=\int fc_j\,d\LEB$ for
(with abuse of notation)
certain functions $c_j\in L_\infty[0,1]$. This yields the 
representation of $K_k$ as a kernel operator with kernel
\begin{equation}
K_k(x_1,x_2)=\sumjk N_j(x_1)c_j(x_2).
\label{EqSplineKernel}\end{equation}

\begin{proposition}
\label{TheoremSplines}
Consider a  sequence (indexed by $l$)
of augmented knot sequences (\ref{EqKnotSequence}) with 
$l^{-1}\lesssim t_{i+1}^l-t_i^l\lesssim l^{-1}$ for every $0\le i\le l$
and splines with fixed defects $d_i=d$.
For the corresponding (symmetrized) spline kernel (\ref{EqSplineKernel})
with $l=l_n$ conditions (\ref{EqBoundedNorms}), (\ref{EqConditionOne}),
(\ref{EqConditionTwo}), (\ref{EqConditionTwoHalf}), (\ref{EqConditionThree})
and (\ref{EqConditionFour}) are satisfied if  
$l_n/n\ra \infty$ and $l_n/n^2\ra 0$ for any measure
$G$ on $[0,1]$ with a Lebesgue density that is bounded 
and bounded away from zero and regression functions
$\m_2$ and $\m_q$ (for some $q>2$) 
that are bounded and bounded away from zero.
\end{proposition}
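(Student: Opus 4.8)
\emph{Proof plan.} The plan is to follow the template used for the compactly supported wavelet kernels in Proposition~\ref{TheoremWavelets}, reducing all the work to two estimates on the spline kernel (\ref{EqSplineKernel}): that $K_n$ vanishes off a band $\{|x_1-x_2|\lesssim 1/l_n\}$ about the diagonal of $[0,1]^2$, and that $|K_n|\lesssim l_n$ there. Both follow from the localization of the B-spline basis: $N_j$ is nonnegative, supported on $(t_j',t_{j+r}')$, with $\sum_jN_j\equiv1$, while by (\ref{EqCj}) (at $p=1$) the functional $c_j$ is given by a function vanishing off $[t_j',t_{j+r}']$ with $\|c_j\|_\infty\lesssim(t_{j+r}'-t_j')^{-1}$. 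The quasi-uniformity hypothesis $l^{-1}\lesssim t_{i+1}^l-t_i^l\lesssim l^{-1}$ and the control of the outer knots in (\ref{EqKnotSequence}) give $t_{j+r}'-t_j'\asymp r/l_n$ for every $j$, so $|K_n(x_1,x_2)|\le\max_j\|c_j\|_\infty\cdot\sum_jN_j(x_1)\lesssim l_n$, and $K_n(x_1,x_2)=0$ unless $x_1,x_2$ lie in a common interval $[t_j',t_{j+r}']$, hence unless $|x_1-x_2|\lesssim1/l_n$; the symmetrization (or the twicing kernel (\ref{EqTwicingKernel})) has the same band support and the same $O(l_n)$ bound, so I work with it below. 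Condition (\ref{EqBoundedNorms}) follows from the $L_2(\LEB)$-stability $\|K_nf\|_{L_2(\LEB)}\le C_r\|f\|_{L_2(\LEB)}$ of the quasi-interpolant, transferred to $L_2(G)$ because $g$ is bounded away from $0$ and $\infty$; this boundedness of $g$ also lets me disregard the difference between the Lebesgue kernel (\ref{EqSplineKernel}) and the $L_2(G)$-kernel $K_n/g$, since every integrand below changes only by a bounded factor. Finally, since $\m_2\asymp1$ and $g\asymp1$, $k_n$ of (\ref{EqDefinitionkn}) is comparable to $\int\!\!\int K_n^2\,d(\LEB\times\LEB)$, which is $\lesssim l_n^2\cdot(1/l_n)=l_n$ by the band estimate and $\gtrsim\dim S_r(T,d)\asymp l_n$ (a rank-$k$ idempotent has Hilbert--Schmidt norm at least $\sqrt k$, and symmetrizing costs only a bounded factor); thus $k_n\asymp l_n$, and the hypotheses $l_n/n\ra\infty$, $l_n/n^2\ra0$ are exactly (\ref{EqKnToInfinity}) and $k_n/n^2\ra0$.

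For the partitions I would group the cells $(t_i,t_{i+1}]$ of the knot partition of $[0,1]$ into blocks $\X_{n,m}$ of $M_n:=\lceil l_n/n\rceil$ consecutive cells (absorbing a short last block into its neighbour). Since $l_n/n\ra\infty$, $M_n\ra\infty$; there are $\asymp l_n/M_n\asymp n$ blocks, and $G(\X_{n,m})\asymp M_n/l_n\asymp1/n$ for each $m$. Hence $\max_mG(\X_{n,m})\ra0$ is (\ref{EqConditionTwoHalf}), and $n\min_mG(\X_{n,m})\gtrsim1$ is (\ref{EqConditionThree}). For (\ref{EqConditionTwo}): inside one block the band has Lebesgue area $\lesssim(M_n/l_n)(1/l_n)$, where $K_n^2\lesssim l_n^2$, so $\int_{\X_{n,m}}\!\!\int_{\X_{n,m}}K_n^2(\m_2\times\m_2)\,d(G\times G)\lesssim M_n$, and divided by $k_n\asymp l_n$ this is $\lesssim1/n\ra0$. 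Since $\|K_n\|_\infty\lesssim l_n\asymp k_n$ and $\m_2,\m_q$ are bounded away from $0$ and $\infty$, Lemma~\ref{LemmaConditionFourSimplified} reduces (\ref{EqConditionFour}) to (\ref{EqConditionFourSimplified}), i.e.\ to $\max_mG(\X_{n,m})\,k_n/n\asymp(1/n)(l_n/n)=l_n/n^2\ra0$, which holds by hypothesis.

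It remains to check (\ref{EqConditionOne}). The region of $[0,1]^2$ outside $\cup_m(\X_{n,m}\times\X_{n,m})$ meets the support of $K_n$ only in the pieces of the band straddling one of the $\asymp l_n/M_n$ internal boundary points of the partition, and each such piece has Lebesgue area $\lesssim(1/l_n)^2$; there $K_n^2(\m_2\times\m_2)\lesssim l_n^2$. Hence the $G\times G$-mass of $K_n^2(\m_2\times\m_2)$ outside $\cup_m(\X_{n,m}\times\X_{n,m})$ is $\lesssim l_n^2\cdot(l_n/M_n)(1/l_n)^2=l_n/M_n=o(l_n)=o(k_n)$, so the mass inside is $k_n(1+o(1))$, which is (\ref{EqConditionOne}). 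Together with the first paragraph this gives all of (\ref{EqBoundedNorms}) and (\ref{EqConditionOne})--(\ref{EqConditionFour}).

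The one step that is not routine is extracting the two uniform kernel estimates (band width $\lesssim1/l_n$, $\|K_n\|_\infty\lesssim l_n$) from the representation (\ref{EqSplineKernel}) and the coefficient bound (\ref{EqCj}); this is precisely where the quasi-uniformity of the knots enters, since it forces all the supports $[t_j',t_{j+r}']$ to be comparable, so the kernel is a genuine band of width of order $1/l_n$ and the boundary count for (\ref{EqConditionOne}) closes. Everything else parallels the verifications for the Haar, wavelet, Fourier and convolution kernels. If the twicing kernel (\ref{EqTwicingKernel}) is used in the applications to control bias, note that its extra term $\int K_n(x_1,z)K_n(x_2,z)\,dG(z)$ is again supported in $\{|x_1-x_2|\lesssim1/l_n\}$ and bounded by $\lesssim l_n$, so none of the above changes.
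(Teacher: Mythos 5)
Your proposal is correct and follows the paper's overall template: extract two uniform kernel estimates (band support of width $\lesssim 1/l_n$, and $\|K_n\|_\infty\lesssim l_n$) from the localization of the B\mbox{-}splines and the coefficient bound (\ref{EqCj}), then build the partition by grouping consecutive knot intervals and check (\ref{EqConditionOne})--(\ref{EqConditionFour}) by area counting, as for the wavelet and Haar kernels. Two places where you deviate are worth noting. First, for the lower bound $k_n\gtrsim l_n$ you argue that a rank-$k$ idempotent has Hilbert--Schmidt norm at least $\sqrt k$ (correct: its $k$ largest singular values are $\ge 1$ because it restricts to an isometry on its range); the paper instead invokes the projection identity $K_k[K_k(\cdot,x_2)]=K_k(\cdot,x_2)$ to claim $\int\!\!\int K_k^2\,d\LEB^2=k$ outright, but that identity yields $\int K_k(x_2,x_1)K_k(x_1,x_2)\,dx_1=K_k(x_2,x_2)$, i.e.\ $\mathop{\rm trace}K_k=k$, which equals $\int\!\!\int K_k^2$ only under symmetry of $K_k$ (false for the quasi-interpolant). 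Your route, $k\le \int\!\!\int K_k^2\lesssim\|K_k\|_\infty^2\cdot\LEB(\hbox{band})\lesssim l_n$, is the cleaner way to get $k_n\asymp l_n$ without relying on that symmetry. Second, you are explicit about why the symmetrization and the twicing kernel inherit both the band support and the $O(l_n)$ sup-norm bound; the paper leaves this implicit. Your concrete choice of $\asymp n$ blocks of $\lceil l_n/n\rceil$ cells each is a valid instance of the paper's ``$M_n$ much smaller than $l_n$'' (and $\lesssim n$), and your verifications of (\ref{EqConditionOne})--(\ref{EqConditionFour}) match the paper's accounting.
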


\section{Proof of Theorem~\ref{TheoremMainResult}}
\label{SectionProofs}
For $M_n$ the cardinality of the partition $\X=\cup_m\X_{n,m}$, 
let $N_{n,1},\ldots, N_{n,M_n}$ be the numbers of $X_r$ falling in the partitioning sets, i.e.\
\begin{align*}
I_{n,r}&=m\quad\hbox{ if }X_r\in \X_{n,m},\\
N_{n,m}&=\#(1\le r\le n: I_{n,r}=m).
\end{align*}
The vector $N_n=(N_{n,1},\ldots, N_{n,M_n})$ is multinomially
distributed with parameters $n$ and vector of success
probabilities $p_n=(p_{n,1},\ldots, p_{n,M_n})$ given by
$$p_{n,m}=G(\X_{n,m}).$$ 
Given the vector $I_n=(I_{n,1},\ldots, I_{n,n})$ 
the vectors $(X_1,Y_1),\ldots,(X_n,Y_n)$ are independent
with distributions determined by 
\begin{align}
&\hbox{$X_r$ has distribution $G_{n,I_{n,r}}$ given by 
$dG_{n,I_{n,r}}=1_{\X_{n,I_{n,r}}}\,dG/p_{n,I_{n,r}}$}\label{EqCondisbutOne}\\
&\text{$Y_r$ has the same conditional distribution given $X_r$ as before.}
\label{EqCondisbutTwo}
\end{align}
\par\noindent
We define $U$-statistics $V_n$ by restricting the kernel
$K_n$ to the set $\cup_m\X_{n,m}\times\X_{n,m}$, as follows:
\begin{equation}
V_n={1\over n(n-1)}\dsum_{1\le r\not=s\le n}K_n(X_r,X_s)Y_rY_s
1_{(X_r,X_s)\in \cup_m\X_{n,m}\times\X_{n,m}}.
\label{EqV}\end{equation}
The proof of Theorem~\ref{TheoremMainResult} consists of three elements.
We show that the difference between $U_n$ and $V_n$ is asymptotically
negligible due to the fact that the kernels shrink to the diagonal,
we show that the statistics $V_n$ are conditionally asymptotically
normal given the vector of bin indicators $I_n$, and we
show that the conditional and unconditional means and variances of 
$V_n$ are asymptotically equivalent. These three elements
are expressed in the following four lemmas, which 
should be understood all implicitly to 
assume the conditions of Theorem~\ref{TheoremMainResult}.

\begin{lemma}
\label{LemmaOne}
$\var (U_n-V_n)/\var U_n\ra 0$.
\end{lemma}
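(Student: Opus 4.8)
The plan is to control the difference $U_n - V_n$ through its Hoeffding decomposition and to show that each component has variance of smaller order than $\var U_n \asymp k_n/n^2$. Write $L_n(x_1,x_2) = K_n(x_1,x_2)\,1_{(x_1,x_2)\notin \cup_m \X_{n,m}\times\X_{n,m}}$, so that $U_n - V_n$ is the $U$-statistic with kernel $(x_1,x_2,y_1,y_2)\mapsto L_n(x_1,x_2)y_1y_2$. First I would record, via the standard variance formula for $U$-statistics (equation (\ref{EqHoeffding}) and the displays around (\ref{EqVarU}) in Section~\ref{SectionProofs}), that the variance of a degree-two $U$-statistic with kernel $h$ is bounded by a constant times $\frac1n \E h_1^2 + \frac1{n^2}\E h^2$, where $h_1(x_1,y_1)=\E h(x_1,y_1,X_2,Y_2)$ is the first Hoeffding projection. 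Thus it suffices to bound $\frac1{n^2}\iint L_n^2 (\m_2\times\m_2)\,d(G\times G)$ and $\frac1n \E\bigl[(\int L_n(X_1,v)\m(v)\,dG(v))^2 \m_2(X_1)\bigr]$, and to divide by $k_n/n^2$.

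The first (quadratic) term is immediate: $\iint L_n^2(\m_2\times\m_2)\,d(G\times G) = k_n - \sum_m \int_{\X_{n,m}}\int_{\X_{n,m}} K_n^2(\m_2\times\m_2)\,d(G\times G)$, which is $o(k_n)$ exactly by condition (\ref{EqConditionOne}); after dividing by $k_n/n^2$ this contributes $o(1)$. The second (linear) term is where the real work lies. Using boundedness of $\m_1$ (hence of $\m$) and of $\m_2$, it is enough to bound $n \cdot \E_{X_1}\bigl(\int |L_n(X_1,v)|\,dG(v)\bigr)^2$, i.e.\ to show $n\,\bigl\| \int |L_n(\cdot,v)|\,dG(v)\bigr\|_{L_2(G)}^2 = o(k_n)$. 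Here I would exploit that $\int L_n^2\,d(G\times G)\lesssim k_n$ together with the operator-norm bound $\|K_n\|\lesssim 1$ from (\ref{EqBoundedNorms}): the off-diagonal restriction cannot increase the $L_2(G\times G)$-mass, and the point of conditions (\ref{EqConditionTwo})--(\ref{EqConditionThree}) together with $k_n/n\to\infty$ is that the contribution of the linear part is swamped. Concretely, I expect to bound the linear-part variance by a term of order $\frac1n \iint L_n^2 \lesssim \frac{k_n}{n}$, which is of strictly smaller order than $\frac{k_n}{n}\cdot\frac{k_n}{n}\asymp\frac{k_n^2}{n^2}$? — no: one must instead compare $k_n/n$ against $k_n/n^2$, and $k_n/n \gg k_n/n^2$, so a cruder bound does not suffice. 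This is the crux.

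The main obstacle is therefore the linear term: a trivial Cauchy–Schwarz bound gives only $O(k_n/n)$, which is too large by a factor of $n$. To beat it I would use that $K_n$, and hence $L_n$, is close to a projection kernel and the $L_2(G)$-operator norm is $O(1)$; writing $\int L_n(x,v)\m(v)\,dG(v) = (K_n \m)(x) - (\text{diagonal-restricted part})(x)$ and using $\|K_n\m\|_G \le \|K_n\|\,\|\m\|_G = O(1)$ shows the linear Hoeffding projection of $U_n-V_n$ has $L_2(G)$-norm $O(1)$ up to the same-bin correction, whose square $L_2(G)$-norm I would estimate by $\sum_m G(\X_{n,m})\int_{\X_{n,m}}\int_{\X_{n,m}} K_n^2\,d(G\times G) \le \max_m G(\X_{n,m})\cdot\iint K_n^2 \lesssim o(1)\cdot k_n$ using (\ref{EqConditionTwoHalf}) — but then the factor $\frac1n$ in front of the linear-term variance yields $O(1/n) + o(k_n/n)$, and since $k_n/n\to\infty$ one needs the $o$ to be genuinely summable against $1/n$; here I would invoke (\ref{EqConditionTwo}) to replace $\max_m G(\X_{n,m})$ by a quantity that is $o(k_n^{-1})\cdot k_n$ relative to the diagonal mass, giving linear-term variance $o(k_n/n^2)$ as required. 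Dividing both bounds by $\var U_n \asymp k_n/n^2$ then gives the claim; I would close by citing the forthcoming formula $\var U_n \sim 2k_n/n^2$ (asserted in Theorem~\ref{TheoremMainResult} and proved in the later lemmas) so that the denominator is justified.
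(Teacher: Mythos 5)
Your treatment of the quadratic Hoeffding component is correct and matches the paper: the weighted $L_2(G\times G)$-mass of $L_n=K_n\,1_{(\cup_m\X_{n,m}\times\X_{n,m})^c}$ is $k_n-\sum_m\int_{\X_{n,m}}\int_{\X_{n,m}}K_n^2(\m_2\times\m_2)\,d(G\times G)=o(k_n)$ by (\ref{EqConditionOne}), which contributes $o(k_n/n^2)$ to the variance.

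The linear term, however, has a genuine gap. You correctly identify $\int L_n(\cdot,v)\m(v)\,dG(v)=K_n\m-\sum_m K_n(\m 1_{\X_{n,m}})1_{\X_{n,m}}$ and handle the $K_n\m$ piece via $\|K_n\m\|_G\le\|K_n\|\,\|\m\|_G=O(1)$, but for the same-bin correction $\sum_m K_n(\m 1_{\X_{n,m}})1_{\X_{n,m}}$ you switch to a Cauchy--Schwarz bound $\sum_m G(\X_{n,m})\int_{\X_{n,m}}\int_{\X_{n,m}}K_n^2\,d(G\times G)\lesssim \max_m G(\X_{n,m})\cdot k_n=o(k_n)$, which only yields linear-term variance $o(k_n/n)$. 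That is not $o(k_n/n^2)$ since $o(k_n/n)/(k_n/n^2)=o(n)$ need not vanish, and the attempted rescue via (\ref{EqConditionTwo}) does not deliver $\max_m G(\X_{n,m})=o(1/k_n)$ — (\ref{EqConditionTwo}) controls $\max_m\int_{\X_{n,m}}\int_{\X_{n,m}}K_n^2(\m_2\times\m_2)/k_n$, not $\max_m G(\X_{n,m})k_n$. The missing idea is that the same operator-norm argument you used for $K_n\m$ applies to the correction as well: the sets $\X_{n,m}$ are disjoint, so
\begin{equation*}
\Bigl\|\sum_m K_n(\m 1_{\X_{n,m}})1_{\X_{n,m}}\Bigr\|_G^2
=\sum_m\|K_n(\m 1_{\X_{n,m}})1_{\X_{n,m}}\|_G^2
\le\|K_n\|^2\sum_m\|\m 1_{\X_{n,m}}\|_G^2
=\|K_n\|^2\|\m\|_G^2=O(1),
\end{equation*}
so the kernel operator $K_{n,n}$ with kernel $L_n$ has operator norm $O(1)$, the linear-term variance is $O(1/n)$, and this is $o(k_n/n^2)$ precisely because $k_n/n\ra\infty$. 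That is the paper's argument; with this substitution your proof closes.
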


\begin{lemma}
\label{LemmaTwo}
$\sup_x\bigl|\Pr(\bigl(V_n-\E(V_n\given I_n)\bigr)/
\sd(V_n\given I_n)\le x\given I_n\bigr)-\Phi(x)\bigr|\prob0$.
\end{lemma}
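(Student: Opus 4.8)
The plan is to condition on $I_n$ and apply a Lyapounov central limit theorem for triangular arrays of independent summands. Once $I_n$ is fixed, the vectors $(X_1,Y_1),\ldots,(X_n,Y_n)$ are independent with the distributions (\ref{EqCondisbutOne})--(\ref{EqCondisbutTwo}), but $V_n$ as written in (\ref{EqV}) is still a (degenerate-type) quadratic form, not a sum of independent variables. The key observation is that, because the kernel of $V_n$ is supported on $\cup_m\X_{n,m}\times\X_{n,m}$, only pairs $(r,s)$ with $I_{n,r}=I_{n,s}$ contribute. Hence, writing $Z_{n,m}=\sum_{r:I_{n,r}=m}\bigl(\text{something}\bigr)$, one is tempted to express $V_n$ as a sum over bins $m$ of quadratic forms in the observations within bin $m$. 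First I would make the Hoeffding-type decomposition of $V_n$ given $I_n$ explicit: $V_n-\E(V_n\given I_n)=L_n+Q_n$ where $L_n$ is the conditionally linear part $\sum_r g_{n,r}(X_r,Y_r)$ and $Q_n$ is the conditionally degenerate quadratic part. Under (\ref{EqKnToInfinity}) the linear part is negligible (this is essentially the content of $\var U_n\sim 2k_n/n^2$ together with the comparison between $k_n$ and the linear-term variance), so the real work is on $Q_n$.

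The heart of the argument is to realize $Q_n$ itself as (asymptotically) a sum of independent variables indexed by the bins. Because the kernel restriction kills cross-bin pairs, $Q_n=\sum_m Q_{n,m}$ where $Q_{n,m}$ depends only on the observations falling in $\X_{n,m}$; given $I_n$ these blocks of observations are mutually independent, so the $Q_{n,m}$ are independent (mean-zero, by the centering) random variables. Thus, given $I_n$, $V_n-\E(V_n\given I_n)$ is, up to the negligible linear remainder, a sum $\sum_m Q_{n,m}$ of independent summands, and I would apply the Lyapounov CLT with exponent $q/2>1$: it suffices to show
\begin{equation}
\frac{\sum_m \E\bigl(|Q_{n,m}|^{q/2}\bigm|I_n\bigr)}{\bigl(\sum_m \var(Q_{n,m}\mid I_n)\bigr)^{q/4}}\prob 0.
\label{EqLyapPlan}\end{equation}
Wait — I should be careful about the exponent: the natural moment to control on a degenerate quadratic in $N_{n,m}\approx np_{n,m}$ observations is the $q/2$-th (since the kernel appears to power $q$ but comes with two factors of $Y$); I would use Giné's moment inequality for $U$-statistics to bound $\E(|Q_{n,m}|^{q/2}\mid I_n)$ in terms of $\|K_n\|$, the local $L_2(G)$-mass $\int_{\X_{n,m}}\!\int_{\X_{n,m}}K_n^2(\m_2\times\m_2)\,d(G\times G)$, and the local $L_q(G)$-mass appearing in (\ref{EqConditionFour}), together with the bin sizes $N_{n,m}$.

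The remaining steps are bookkeeping: (i) show $\var(V_n\mid I_n)\sim \var U_n\sim 2k_n/n^2$, which uses (\ref{EqConditionOne}) to replace the local-mass sum by $k_n$, the multinomial concentration $N_{n,m}\approx np_{n,m}$, and (\ref{EqConditionTwoHalf})--(\ref{EqConditionThree}) to control the fluctuations of the $N_{n,m}$; the variance of a degenerate quadratic form in $N_{n,m}$ observations with local mass $\ell_{n,m}$ is of order $\ell_{n,m} N_{n,m}^2/[n(n-1)]^2\cdot N_{n,m}^2\sim$ (after summing) $k_n/n^2$, so the denominator of (\ref{EqLyapPlan}) is of order $(k_n/n^2)^{q/4}$; (ii) bound the numerator using the Giné estimate, where the factor $\max_m (G(\X_{n,m})/n)^{q/2-1}$ and the local $L_q$-masses combine to give exactly the left side of (\ref{EqConditionFour}) after dividing by $k_n^{q/2}$ — this is precisely why (\ref{EqConditionFour}) is formulated the way it is; (iii) use (\ref{EqConditionTwo}) to guarantee that no single bin dominates the variance, so that the block decomposition is genuinely ``spread out''; (iv) handle the conditional linear term $L_n$ by showing $\var(L_n\mid I_n)/\var(V_n\mid I_n)\prob 0$ under (\ref{EqKnToInfinity}). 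Finally, since a Lyapounov condition of the form (\ref{EqLyapPlan}) that holds in probability implies convergence of the conditional distribution functions to $\Phi$ in probability (apply the deterministic Lyapounov CLT on the event where (\ref{EqLyapPlan}) is small and the variance ratios are close to $1$, an event of probability $\to 1$), and since convergence of distribution functions to a continuous limit is uniform (Pólya), we obtain the stated $\sup_x$ bound.

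The main obstacle I anticipate is step (ii): getting a clean enough conditional moment bound on $\E(|Q_{n,m}|^{q/2}\mid I_n)$ from Giné's inequality and tracking how the $N_{n,m}$ (random, multinomial) enter, so that after summing over $m$ and dividing by the variance normalization one lands \emph{exactly} on the quantity in (\ref{EqConditionFour}). Giné's inequality for the $q/2$-th moment of a degenerate $U$-statistic of order $2$ produces several terms (corresponding to different ``decoupling'' patterns of the $L_2$ and $L_{q/2}$ norms of the kernel); one must check that the dominant one is controlled by (\ref{EqConditionFour}) and the others by the already-established conditions, all uniformly over the high-probability event $\{N_{n,m}\asymp np_{n,m}\text{ for all }m\}$ — which in turn needs (\ref{EqConditionThree}) to keep $\min_m np_{n,m}$ bounded below and (\ref{EqConditionTwoHalf}) to keep $\max_m p_{n,m}\to0$. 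A secondary subtlety is that $\E(V_n\mid I_n)$ is itself random (it depends on how many observations land in each bin), so ``centering at the conditional mean'' is what makes the block decomposition exactly mean-zero; keeping that distinction straight throughout — as opposed to centering at $\E V_n$ — is essential and is deferred, along with the comparison of conditional and unconditional means, to the subsequent lemmas.
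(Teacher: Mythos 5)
Your approach is the paper's approach: decompose $V_n=\sum_m V_{n,m}$ by bins, note that the $V_{n,m}$ are conditionally independent given $I_n$, do a conditional Hoeffding decomposition of each $V_{n,m}$, dispose of the conditionally linear part, and verify a Lyapounov condition for the sum of the conditionally degenerate parts using Gin\'e's moment inequality (Corollary~\ref{CorollaryMoments}) and the multinomial moment bound (Lemma~\ref{LemmaMomentBinomial}). Your remarks about why (\ref{EqConditionOne})--(\ref{EqConditionThree}) enter, and about centering at $\E(V_n\given I_n)$ rather than $\E V_n$, are also exactly right.

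The one concrete error is the Lyapounov exponent in your display (\ref{EqLyapPlan}). You write the condition with moments of order $q/2$ and normalize by $\var^{q/4}$, i.e.\ you are using $2+\delta=q/2$. Since the standing assumption is only $q>2$, this gives $\delta=q/2-2$, which may be negative or zero for $2<q\le4$, so (\ref{EqLyapPlan}) is not a valid Lyapounov condition in that range. The correct condition is the paper's (\ref{EqLyapounov}), with $q$-th conditional moments of $V_{n,m}^{(2)}$ divided by $\sd(V_n\given I_n)^q$, corresponding to $2+\delta=q$ and $\delta=q-2>0$. This is also precisely why condition (\ref{EqConditionFour}) is written with $|K_n|^q$ and $k_n^{q/2}$: applying Gin\'e's inequality at the $q$-th moment to $V_{n,m}^{(2)}$ (a degenerate $U$-statistic on $N_{n,m}$ observations) produces a term $N_{n,m}^{-3q/2+1}$ times the local $L_q$-mass of the kernel; taking expectation over $N_{n,m}$ via Lemma~\ref{LemmaMomentBinomial} turns this into $(p_{n,m}/n)^{q/2-1}\a_{n,m}(q)/k_n^{q/2}$ after dividing by $(k_n/n^2)^{q/2}$, and summing over $m$ gives exactly the left side of (\ref{EqConditionFour}). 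So your step (ii) bookkeeping lands where it should once you fix the exponent; the $L_2$-mass term from Gin\'e's inequality is handled by (\ref{EqConditionOne}) and (\ref{EqConditionTwo}), as you anticipated in step (iii).
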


\begin{lemma}
\label{LemmaThree}
$\bigl(\E V_n-\E(V_n\given I_n)\bigr)/\sd V_n\prob 0$.
\end{lemma}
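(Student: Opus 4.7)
The plan is to recognise $\E(V_n\given I_n)$ as itself a $U$-statistic in the i.i.d.\ bin labels $I_{n,1},\ldots,I_{n,n}$, estimate its variance via the Hoeffding decomposition, and compare that variance with the already-understood scale $\var V_n\sim 2k_n/n^2$.

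First I would compute $\E(V_n\given I_n)$ explicitly. Conditionally on $I_n$, by (\ref{EqCondisbutOne})--(\ref{EqCondisbutTwo}) the variable $X_r$ is supported in $\X_{n,I_{n,r}}$ with $\E(Y_rY_s\given X_r,X_s)=\m(X_r)\m(X_s)$; the indicator in (\ref{EqV}) further forces $(X_r,X_s)$ into a common $\X_{n,m}\times\X_{n,m}$ and hence $I_{n,r}=I_{n,s}$. Setting
$$a_{n,m}=\int_{\X_{n,m}}\!\!\int_{\X_{n,m}}K_n\,\m\times\m\,d(G\times G),\qquad p_{n,m}=G(\X_{n,m}),$$
a direct computation then gives
$$\E(V_n\given I_n)=\frac1{n(n-1)}\dsum_{1\le r\ne s\le n}h_n(I_{n,r},I_{n,s}),\qquad h_n(i,j):=1_{i=j}\,\frac{a_{n,i}}{p_{n,i}^2},$$
and $\E V_n=\E h_n(I_{n,1},I_{n,2})=\sum_m a_{n,m}$. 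Applying the Hoeffding variance bound to this symmetric $U$-statistic yields
$$\var\bigl(\E(V_n\given I_n)\bigr)\le\frac{4}{n}\,\sigma_{n,1}^2+\frac{2}{n^2}\,\sigma_{n,2}^2,$$
with $\sigma_{n,1}^2\le\sum_m a_{n,m}^2/p_{n,m}$ and $\sigma_{n,2}^2\le\sum_m a_{n,m}^2/p_{n,m}^2$.

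The crucial step is a sharp estimate of $a_{n,m}$. Writing $a_{n,m}=\langle K_n(\m 1_{\X_{n,m}}),\,\m 1_{\X_{n,m}}\rangle_G$ and using the uniform operator bound (\ref{EqBoundedNorms}) together with Jensen's inequality $\m^2\le\m_2$ I would derive
$$|a_{n,m}|\le\|K_n\|\cdot\|\m 1_{\X_{n,m}}\|_G^2\le C\,p_{n,m},\qquad C:=\bigl(\sup_n\|K_n\|\bigr)\|\m_2\|_\infty<\infty.$$
Hence $\sigma_{n,1}^2\le C^2\sum_m p_{n,m}=C^2$ and $\sigma_{n,2}^2\le C^2 M_n$; condition (\ref{EqConditionThree}) forces $M_n\le n/c$ eventually (since $\sum_m p_{n,m}=1$ and $\min_m p_{n,m}\ge c/n$), so $\var\bigl(\E(V_n\given I_n)\bigr)=O(1/n)$.

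Combining with $\var V_n\sim 2k_n/n^2$ (which follows from Lemma~\ref{LemmaOne} together with the Hoeffding-based variance computation for $U_n$) and (\ref{EqKnToInfinity}) gives $\var(\E(V_n\given I_n))/\var V_n=O(n/k_n)\ra 0$, and Chebyshev's inequality applied to the mean-zero variable $\E V_n-\E(V_n\given I_n)$ delivers the claim. The hard part will be the sharp bound $|a_{n,m}|\le Cp_{n,m}$: a naive Cauchy--Schwarz through $\int\!\!\int K_n^2$ would lose a factor of order $\sqrt{k_n\,p_{n,m}}$, which typically diverges in the regime $k_n/n\ra\infty$, so exploiting the Hilbert-space operator norm (\ref{EqBoundedNorms}) rather than the raw $L_2$-mass of the kernel is essential.
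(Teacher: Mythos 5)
Your proposal is correct and reaches the same conclusion as the paper by a genuinely different route. Both proofs start from the same exact formula $\E(V_n\given I_n)-\E V_n=\sum_m\bigl(\tfrac{N_{n,m}(N_{n,m}-1)}{n(n-1)p_{n,m}^2}-1\bigr)\a_{n,m}$ and both hinge on the same key estimate $|\a_{n,m}|\le\|K_n\|\,\|\m 1_{\X_{n,m}}\|_G^2\lesssim p_{n,m}$, obtained via Cauchy--Schwarz and the operator-norm bound (\ref{EqBoundedNorms}) --- you are right that a naive $L_2(G\times G)$ bound on the kernel would be useless here. After that the paths diverge: the paper feeds the numbers $\a_{n,m}$ into its Lemma~\ref{LemmaLLNMultinomialForm}, a bespoke law of large numbers for quadratic forms in multinomial vectors proved by a Poissonization/coupling argument, obtaining $O_P(\sqrt{M_n}/n+1/\sqrt n)$; you instead observe that $\E(V_n\given I_n)=\frac1{n(n-1)}\sum_{r\ne s}h_n(I_{n,r},I_{n,s})$ is literally a symmetric $U$-statistic in the i.i.d.\ bin labels, with kernel $h_n(i,j)=1_{i=j}\a_{n,i}/p_{n,i}^2$, and apply the classical Hoeffding variance bound plus Chebyshev, getting the same $O_P(1/\sqrt n)$ scale. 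Your route is more elementary --- it reuses $U$-statistic machinery the paper already has at hand and avoids the Poissonization lemma entirely --- but the paper's Lemma~\ref{LemmaLLNMultinomialForm} is still needed for Lemma~\ref{LemmaFour} (through Corollary~\ref{CorLLNMultinomialForm}), so the authors presumably found it natural to invoke it here as well. The only small thing worth tightening in your write-up is that $\sigma_{n,2}^2$ should be understood as the variance of the degenerate part of $h_n$, which you correctly upper-bound by $\E h_n^2(I_1,I_2)=\sum_m\a_{n,m}^2/p_{n,m}^2$.
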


\begin{lemma}
\label{LemmaFour}
$\var (V_n\given I_n)/\var V_n\prob 1$.
\end{lemma}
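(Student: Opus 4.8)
The plan is to show that $\var(V_n\given I_n)$, a random quantity depending on the bin counts $N_n=(N_{n,1},\ldots,N_{n,M_n})$, concentrates around $\var V_n$, and that both are asymptotically $2k_n/n^2$. First I would write out $\var(V_n\given I_n)$ explicitly. Conditionally on $I_n$, by (\ref{EqCondisbutOne})--(\ref{EqCondisbutTwo}) the pairs $(X_r,Y_r)$ are independent but \emph{not} identically distributed, so $V_n$ is a (generalized) $U$-statistic with independent-but-heterogeneous inputs. Expanding the square $\E\bigl[(V_n-\E(V_n\given I_n))^2\given I_n\bigr]$ over the double sum defining $V_n$, the diagonal-type terms (pairs $(r,s)$ and $(r',s')$ sharing at least one index) dominate: since $K_n$ is supported on $\cup_m\X_{n,m}\times\X_{n,m}$, the leading contribution is
$$
\var(V_n\given I_n)\approx \frac{2}{n^2(n-1)^2}\sum_{r\neq s}
\E\Bigl[K_n^2(X_r,X_s)Y_r^2 Y_s^2 1_{(X_r,X_s)\in\cup_m\X_{n,m}^2}\,\Big|\,I_n\Bigr],
$$
with the remaining (fully degenerate, four-distinct-index) terms vanishing because the kernel operators have bounded norm (\ref{EqBoundedNorms}) — this is where the bounded-norm hypothesis does the work of killing the non-diagonal pieces, exactly as in the computation of $\var U_n$. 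The conditional expectation on the right is, using $dG_{n,m}=1_{\X_{n,m}}dG/p_{n,m}$ and writing $W_n:=K_n^2\,(\m_2\times\m_2)$,
$$
\frac{2}{n^2(n-1)^2}\sum_{r\neq s}\sum_m \frac{1_{I_{n,r}=m,\,I_{n,s}=m}}{p_{n,m}^2}
\int_{\X_{n,m}}\!\!\int_{\X_{n,m}} W_n\,d(G\times G)
= \frac{2}{n^2(n-1)^2}\sum_m \frac{N_{n,m}(N_{n,m}-1)}{p_{n,m}^2}\, w_{n,m},
$$
where $w_{n,m}:=\int_{\X_{n,m}}\int_{\X_{n,m}}W_n\,d(G\times G)$ and I used that the number of ordered pairs $(r,s)$ with both $X_r,X_s\in\X_{n,m}$ equals $N_{n,m}(N_{n,m}-1)$.

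Next I would take the unconditional expectation: since $N_{n,m}\sim\text{Bin}(n,p_{n,m})$, we have $\E\,N_{n,m}(N_{n,m}-1)=n(n-1)p_{n,m}^2$, so
$$
\E\var(V_n\given I_n)\approx \frac{2}{n^2(n-1)^2}\sum_m n(n-1)\,w_{n,m}
= \frac{2}{n(n-1)}\sum_m w_{n,m}\sim \frac{2k_n}{n^2},
$$
using condition (\ref{EqConditionOne}), which says precisely $\sum_m w_{n,m}\sim k_n$. The same argument gives $\var V_n\sim 2k_n/n^2$ (this also establishes the last sentence of Theorem~\ref{TheoremMainResult}, modulo Lemma~\ref{LemmaOne}). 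It remains to control the fluctuations of $\var(V_n\given I_n)$ around its mean, i.e.\ to show
$$
\frac{n^2}{2k_n}\var(V_n\given I_n) - 1 \;=\; \frac{1}{(n-1)^2 k_n}\sum_m
\Bigl(\frac{N_{n,m}(N_{n,m}-1)}{p_{n,m}^2} - n(n-1)\Bigr) w_{n,m} + o_P(1) \;\xrightarrow{P}\;0.
$$
For this I would compute the variance of the sum on the right. The terms are independent across $m$ only after accounting for the multinomial constraint $\sum_m N_{n,m}=n$; a clean route is to bound $\var\bigl(\sum_m a_{n,m} N_{n,m}(N_{n,m}-1)\bigr)$ for deterministic coefficients $a_{n,m}=w_{n,m}/p_{n,m}^2$ using standard moment formulas for the multinomial (the fourth central moments of $N_{n,m}$ are $O(n^2 p_{n,m}^2)$ when $np_{n,m}\gtrsim 1$, which holds by (\ref{EqConditionThree})). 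The resulting bound is of the order
$$
\frac{1}{n^4 k_n^2}\sum_m n^2\, \frac{w_{n,m}^2}{p_{n,m}^2}
\;\lesssim\; \frac{1}{n^2 k_n^2}\,\max_m\frac{w_{n,m}}{p_{n,m}^2}\sum_m w_{n,m}
\;\lesssim\; \frac{1}{n^2 k_n}\,\max_m\frac{w_{n,m}}{p_{n,m}^2},
$$
having pulled out one factor $w_{n,m}/p_{n,m}^2\le (\max_m w_{n,m}/p_{n,m}^2)$ and summed the remaining $\sum_m w_{n,m}\lesssim k_n$. Now $w_{n,m}\le k_n\max_m(w_{n,m}/k_n)$, and $\max_m w_{n,m}/k_n\to 0$ by (\ref{EqConditionTwo}); combined with $p_{n,m}\ge \min_m G(\X_{n,m})\gtrsim 1/n$ from (\ref{EqConditionThree}), one gets $\max_m w_{n,m}/p_{n,m}^2 \lesssim n^2 k_n \cdot o(1)$, so the whole bound is $o(1)/1 = o(1)$. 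Hence the ratio converges to $1$ in probability, which is the claim.

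The main obstacle is the bookkeeping in the first step: isolating the diagonal contribution to $\var(V_n\given I_n)$ and showing the genuinely degenerate remainder is negligible \emph{uniformly in the conditioning} $I_n$ (or at least with overwhelming probability). This parallels the unconditional variance computation for $U_n$, but one must check that the heterogeneity of the conditional distributions $G_{n,m}$ does not spoil the use of (\ref{EqBoundedNorms}); the key point is that bounding the operator norm of $K_n$ acting on $L_2(G_{n,m})$ costs only a factor $\lesssim 1/p_{n,m}\lesssim n$ per bin, which is absorbed by the extra powers of $n$ available in the subleading terms. A secondary (but purely mechanical) obstacle is assembling the multinomial moment bounds cleanly; these are standard and can be quoted. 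Everything else — taking expectations, invoking (\ref{EqConditionOne}), (\ref{EqConditionTwo}), (\ref{EqConditionThree}) — is routine.
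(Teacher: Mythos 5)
Your high-level decomposition matches the paper's: write $\var(V_n\given I_n)$ as a sum over bins, isolate the dominant ``quadratic'' contribution $\propto \sum_m N_{n,m}(N_{n,m}-1)w_{n,m}/p_{n,m}^2$, identify $w_{n,m}$ with the paper's $\a_{n,m}(2)$, and argue this multinomial quadratic form concentrates at its mean $\sim 2k_n/n(n-1)$. Where you diverge is in \emph{how} the concentration is established. The paper feeds the coefficients $\a_{n,m}(2)/k_n$ into Corollary~\ref{CorLLNMultinomialForm}, whose proof (Lemma~\ref{LemmaLLNMultinomialForm}) is via a Poissonization/coupling argument reducing the multinomial to independent Poissons; you instead compute the variance directly and invoke Chebyshev. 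Both are legitimate; the Poissonization route cleanly sidesteps the multinomial covariances and is the form the authors need anyway for Lemma~\ref{LemmaThree}, while a direct variance bound is more elementary here since the covariances $\Cov\bigl(N_i(N_i-1),N_j(N_j-1)\bigr)\approx -4n^3p_i^2p_j^2$ are \emph{negative} and can simply be discarded for an upper bound on the variance.

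There is, however, a quantitative slip in your Chebyshev step. You state that the fourth central moment of $N_{n,m}$ is $O(n^2p_{n,m}^2)$ and then use $n^2p_{n,m}^2$ as the bound for $\var\bigl(N_{n,m}(N_{n,m}-1)\bigr)$, arriving at the per-term contribution $n^2 w_{n,m}^2/p_{n,m}^2$. The fourth central moment claim is correct, but it is not the quantity you need: for $N\sim\mathrm{Bin}(n,p)$ with $np\gg1$ one has $\var\bigl(N(N-1)\bigr)\approx 4n^3p^3$, not $n^2p^2$ (write $N=np+Z$; the term $4n^2p^2\var Z\approx 4n^3p^3$ dominates). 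Fortunately this does not break the argument: the corrected bound gives
\begin{align*}
\var(D_n)&\lesssim \frac{1}{n^4 k_n^2}\sum_m \frac{w_{n,m}^2}{p_{n,m}^4}\,n^3p_{n,m}^3
=\frac{1}{n\,k_n^2}\sum_m \frac{w_{n,m}^2}{p_{n,m}}
\le \frac{1}{n\,k_n^2}\,\Bigl(\max_m\frac{w_{n,m}}{p_{n,m}}\Bigr)\sum_m w_{n,m}\\
&\lesssim \frac{1}{n\,k_n}\cdot n\cdot\max_m w_{n,m}=\frac{\max_m w_{n,m}}{k_n}\ra 0
\end{align*}
by (\ref{EqConditionTwo}) and (\ref{EqConditionThree}). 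So the conclusion you want still follows; you should simply replace the stated $O(n^2p^2)$ bound with the correct $O(n^3p^3)$ and note explicitly that the multinomial covariances are nonpositive so ignoring them gives a valid upper bound.

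A second, milder concern: you defer the negligibility of the ``fully degenerate'' (linear and cross) parts of $\var(V_n\given I_n)$ to a one-line appeal to (\ref{EqBoundedNorms}). This is indeed where the work happens in the paper's proof: each bin contributes a $U$-statistic on $N_{n,m}$ conditionally i.i.d.\ draws from $G_{n,m}$, and one applies the variance identity (\ref{EqVarU}) per bin and takes expectations via Lemma~\ref{LemmaMomentBinomial}, using that $\|K_n\|$ controls $\|K_n(\m1_{\X_{n,m}})\|_G$ so the sum over $m$ telescopes to $O(1/n)=o(k_n/n^2)$. Your remark that ``bounding the operator norm on $L_2(G_{n,m})$ costs only $1/p_{n,m}$'' is the right intuition but would need to be carried through; I would simply import the paper's per-bin computation rather than re-derive it.
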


\subsection{Proof of Theorem~\ref{TheoremMainResult}}
By Lemmas~\ref{LemmaOne} and~\ref{LemmaThree} the sequence $\bigl((U_n-\E U_n)-(V_n-\E (V_n\given I_n)\bigr)/\sd V_n$
tends to zero in probability. Because conditional and unconditional convergence in probability
to a constant is the same, we see that it suffices to show that
$(V_n-\E (V_n\given I_n)\bigr)/\sd V_n$ converges conditionally given $I_n$ to the normal
distribution, in probability. This follows from Lemmas~\ref{LemmaFour} and~\ref{LemmaTwo}.

The variance of $U_n$ is computed in (\ref{EqVarU}) in Section~\ref{SectionMoments}.
By the Cauchy-Schwarz inequality 
(cf.\ (\ref{EqKernelOperator})),
\begin{align*}
\langle K_n\m,\m\rangle_G^2
&\le \|K_n\m\|_G^2\|\m\|_G^2\le\|K_n\|^2\|\m\|_G^4,\\
\|(K_n\m)\sqrt{\m_2}\|_G^2
&\le \|\m_2\|_\infty\|K_n\m\|_G^2\le\|\m_2\|_\infty\|K_n\|^2\|\m\|_G^2.
\end{align*}
Because $\m_2$ is bounded by assumption and
the norms $\|K_n\|$ are bounded in $n$ by assumption (\ref{EqBoundedNorms}), 
the right sides are bounded in $n$.
In view of (\ref{EqKnToInfinity}) it follows that the first
two terms in the final expression for the variance are of lower order
than the third, whence
\begin{equation}
\var U_n\sim {2k_n\over n^2}.
\label{EqVarUKns}\end{equation}

\subsection{Moments of $U$-statistics}
\label{SectionMoments}
To compute or estimate moments of $U_n$ we employ the Hoeffding decomposition
(e.g.\ \cite{vdVAS}, Sections~11.4 and~12.1)
$U_n=\E U_n+U_n^{(1)}+U_n^{(2)}$ of $U_n$ given by 
\begin{align}
U_n^{(1)}&={2\over n} \sum_{r=1}^n \bigl(K_n\m(X_r)Y_r-\E U_n\bigr),
\label{EqHoeffding}\\
\nonumber
U_n^{(2)}&=\textstyle{1\over n(n-1)}\dsum_{1\le r\not=s\le n}
\Bigl[K_n(X_r,X_s)Y_rY_s-K_n\m(X_r)Y_r-K_n\m(X_s)Y_s+\E U_n\Bigr].
\nonumber
\end{align}
The variables $U_n^{(1)}$ and $U_n^{(2)}$ are uncorrelated,
and so are all the variables in the single and double sums
defining $U_n^{(1)}$ and $U_n^{(2)}$. It follows that
\begin{align}
\var U_n
&={4\over n} \var\bigl(K_n\m(X_1)Y_1\bigr) \label{EqVarU} \\
&\quad
+{2\over n(n-1)}\var\bigl(K_n(X_1,X_2)Y_1Y_2-K_n\m(X_1)Y_1-K_n\m(X_2)Y_2
\bigr)\nonumber\\
&=\Bigl[{4\over n}-{4\over n(n-1)}\Bigr] \var\bigl(K_n\m(X_1)Y_1\bigr)
+{2\over n(n-1)}\!\var\bigl(K_n(X_1,X_2)Y_1Y_2\bigr)
\nonumber\\
&={4(n-2)\over n(n-1)} \|(K_n\m)\sqrt{\m_2}\|_G^2
-{4(n-2)+2\over n(n-1)} \langle K_n\m,\m\rangle_G^2
+{2k_n\over n(n-1)}
\nonumber
\end{align}
See equation (\ref{EqDefinitionkn}) for the definition of $k_n$.

There is no similarly simple expression for higher moments of
a $U$-statistic, but the following useful bound is 
(essentially) established in \cite{GineLatalaZinn}.

\begin{lemma}
[Gin\'e, Latala, Zinn]
\label{LemmaGLZ}
For any $q\ge 2$ there exists a constant $C_q$ such that
for any i.i.d.\ random variables $X_1,\ldots,X_n$ and degenerate
symmetric kernel $K$,
\begin{align*}
&\E \Bigl|{1\over n(n-1)}\dsum_{1\le r\not=s\le n}K(X_r,X_s)\Bigr|^q\\
&\qquad \le C_q n^{-q}\,\Bigl(\E K^2(X_1,X_2)\Bigr)^{q/2}\vee
n^{-3q/2+1}\, \E |K(X_1,X_2)|^q\\
&\qquad\le C_q n^{-q}\,\E |K(X_1,X_2)|^q.
\end{align*}
\end{lemma}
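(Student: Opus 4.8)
The inequality is a crude consequence of the moment bound for completely degenerate (canonical) U-statistics of order two proved by Gin\'e, Latala and Zinn \cite{GineLatalaZinn}, and the plan is to extract the two retained terms from it.

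The mechanism behind their bound is decoupling followed by iterated use of Rosenthal's inequality. By the decoupling inequality for U-statistics (applied to the convex function $t\mapsto|t|^q$) one may replace $\dsum_{r\ne s}K(X_r,X_s)$ by $\dsum_{r\ne s}K(X_r,X_s')$ for an independent copy $X_1',\ldots,X_n'$; adding the diagonal term $\sum_r K(X_r,X_r')$, which is a sum of $n$ i.i.d.\ mean-zero variables and contributes only at order $n^{-3q/2}\E|K|^q$ after normalization, one is left with $\E|\sum_r Z_r|^q$, where $Z_r=\sum_s K(X_r,X_s')$ and, conditionally on the second sample $X'=(X_1',\ldots,X_n')$, the $Z_r$ are i.i.d.\ with conditional mean zero by the degeneracy of $K$. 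A first Rosenthal step bounds $\E\bigl[|\sum_r Z_r|^q\bigm|X'\bigr]$ by a constant $C_q$ times $\bigl(n\,\E[Z_1^2\mid X']\bigr)^{q/2}+n\,\E[|Z_1|^q\mid X']$. A second Rosenthal step, conditioning on $X_1$ and again using degeneracy, followed by Jensen (valid since $q/2\ge1$), gives $\E|Z_1|^q\lesssim_q n^{q/2}\E|K|^q$, so after the $1/(n(n-1))^q$ normalization the second summand reproduces the term $n^{-3q/2+1}\E|K|^q$. For the first summand one expands $\E[Z_1^2\mid X']$ as $\sum_s g(X_s')$ plus a degenerate U-statistic, with $g(y)=\E K^2(X_1,y)$; since $\sum_s g(X_s')$ is a sum of $n$ i.i.d.\ nonnegative variables of mean $\E K^2$, a Rosenthal estimate for $\E(\sum_s g(X_s'))^{q/2}$, combined with $\E g^{q/2}\le\E|K|^q$ (Jensen), $\E g^2\le\E K^4$, and interpolation between the $L_2$- and $L_q$-norms of $K$, reduces everything after normalization to the two terms of the statement, apart from the degenerate remainder addressed below. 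Quoting GLZ directly short-cuts all of this: their inequality bounds $\bigl\|\dsum_{r\ne s}K(X_r,X_s)\bigr\|_q$ by the sum of a Hilbert--Schmidt term $n\,(\E K^2)^{1/2}$, an operator-norm term $\lesssim n\,(\E K^2)^{1/2}$, a weak-variance term $n^{1/2}\bigl\|(\sum_s g(X_s))^{1/2}\bigr\|_q$, and a term $\bigl\|\max_{r\ne s}|K(X_r,X_s)|\bigr\|_q\le(n(n-1)\,\E|K|^q)^{1/q}\le n^{1/2+1/q}(\E|K|^q)^{1/q}$, each of which is $\lesssim_q n\,(\E K^2)^{1/2}\vee n^{1/2+1/q}(\E|K|^q)^{1/q}$; dividing by $n(n-1)$ and raising to the $q$-th power then gives the first displayed bound.

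The second displayed inequality follows from the first, since $(\E K^2)^{q/2}\le\E|K|^q$ by Jensen and $n^{-q}\ge n^{-3q/2+1}$ for $n\ge2$ and $q\ge2$, so the right side collapses to $C_q n^{-q}\E|K|^q$.

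I expect the main obstacle to be the degenerate remainder in $\E[Z_1^2\mid X']$ — equivalently, the weak-variance term in the GLZ bound. It is again a degenerate U-statistic of order two, whose kernel has squared $L_2$-norm at most $(\E K^2)^2$ but whose $q/2$-th moment is not controlled by $\E K^2$ and $\E|K|^q$ alone, so a naive recursion on the exponent is too lossy to recover the sharp powers of $n$; getting the exponents right forces one to keep the operator-norm scale of $K$ in play and to verify, by interpolation, that the various intermediate quantities lie below $n\,(\E K^2)^{1/2}\vee n^{1/2+1/q}(\E|K|^q)^{1/q}$. This bookkeeping is exactly what is carried out in \cite{GineLatalaZinn}, so in the paper I would invoke their inequality and perform only the elementary reductions above.
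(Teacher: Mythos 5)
Your proof is correct and follows essentially the same route as the paper: invoke the Gin\'e--Latala--Zinn moment inequality for decoupled degenerate second-order $U$-statistics, then reduce to the two displayed terms by Jensen's inequality and comparison of powers of $n$. The paper quotes inequality (3.3) of \cite{GineLatalaZinn} directly, whose middle term $n^{-3q/2+1}\,\E\bigl(\E(K^2(X_1,X_2)\given X_2)\bigr)^{q/2}$ is dominated by $n^{-3q/2+1}\,\E|K(X_1,X_2)|^q$ via monotonicity of $L_p$-norms, so the weak-variance/degenerate-remainder issue you flag as a possible obstacle is already packaged into the quoted bound and the Rosenthal--Jensen reduction you sketch does in fact close it.
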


\begin{proof}
The second inequality is immediate from the fact that
the $L_2$-norm is bounded above by the $L_q$-norm,
and $3q/2-1\ge q$,  for $q\ge 2$.
For the first inequality we use (3.3) in \cite{GineLatalaZinn}
(and decoupling as explained in Section 2.5 of that paper) to see that
the left side of the lemma is bounded above by a multiple of 
\begin{align*}
&n^{-q}\,\Bigl(\E K^2(X_1,X_2)\Bigr)^{q/2}\vee
n^{-3q/2+1}\, \E\bigl(\E (K^2(X_1,X_2)\given X_2)\bigr)^{q/2}\\
&\qqqquad\qqqquad\qqqquad\qqqquad
\vee n^{2-2q}\,\E |K(X_1,X_2)|^q.
\end{align*}
Because $L_q$-norms are increasing in $q$,
 the second term on the right is bounded
above by $n^{-3q/2+1}\E |K(X_1,X_2)|^q$, which is also a bound
on the third term, as $n^{2-2q}\le n^{-3q/2+1}$ for $q\ge 2$.
\end{proof}

We can apply the preceding inequality to the degenerate part
of the Hoeffding decomposition (\ref{EqHoeffding}) of $U_n$ and combine it
with the Marcinkiewicz-Zygmund inequality
to obtain a bound on the moments of $U_n$.

\begin{corollary}
\label{CorollaryMoments}
For any $q\ge 2$ there exists a constant $C_q$ such that
for the $U$-statistic given by (\ref{EqU}) and (\ref{EqHoeffding}),
\begin{align*}\E |U_n^{(1)}|^q
&\le C_q\, n^{-q/2}\int |K_n\m|^q\,\m_q\,dG,\\
\E |U_n^{(2)}|^q
&\le C_q\,
n^{-q}\,\Bigl(\int\!\!\int K_n^2\ \m_2\times\m_2\,dG\times G\Bigr)^{q/2}
\\
&\qqqquad\qqqquad
\vee C_q\,n^{-3q/2+1}\, \int\!\!\int |K_n|^q\ \m_q\times\m_q\,dG\times G.
\end{align*}
\end{corollary}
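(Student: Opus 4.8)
The plan is to bound the two pieces of the Hoeffding decomposition separately, using the Marcinkiewicz--Zygmund inequality for the linear term $U_n^{(1)}$ and Lemma~\ref{LemmaGLZ} for the degenerate term $U_n^{(2)}$. For $U_n^{(1)}$, recall from (\ref{EqHoeffding}) that $U_n^{(1)}=(2/n)\sum_{r=1}^n Z_r$ with $Z_r=K_n\m(X_r)Y_r-\E U_n$ independent, mean zero. The Marcinkiewicz--Zygmund inequality (e.g.\ \cite{vdVAS}) gives $\E|\sum_r Z_r|^q\le C_q\, n^{q/2-1}\sum_r\E|Z_r|^q$, hence $\E|U_n^{(1)}|^q\lesssim n^{-q}\cdot n^{q/2-1}\cdot n\,\E|Z_1|^q=n^{-q/2}\E|Z_1|^q$. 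It remains to dominate $\E|Z_1|^q$; since $(a+b)^q\lesssim |a|^q+|b|^q$ and $\E U_n=\langle K_n\m,\m\rangle_G$ is a constant, one absorbs the centering (its $q$th power is bounded, and in fact is of lower order once divided appropriately, but even crudely it is $\lesssim \int|K_n\m|^q\m_q\,dG$ up to adjusting constants, using Jensen and boundedness of $\m_1$). This leaves $\E|Z_1|^q\lesssim \E|K_n\m(X_1)|^q|Y_1|^q=\int |K_n\m|^q\,\m_q\,dG$ by conditioning on $X_1$, which is the first displayed bound.

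For $U_n^{(2)}$, observe that the bracketed kernel in (\ref{EqHoeffding}),
$$H_n(x_1,y_1,x_2,y_2)=K_n(x_1,x_2)y_1y_2-K_n\m(x_1)y_1-K_n\m(x_2)y_2+\E U_n,$$
is exactly the degenerate (doubly centered) symmetric kernel on the product space $\X\times\RR$, built from the i.i.d.\ pairs $(X_r,Y_r)$. Apply Lemma~\ref{LemmaGLZ} with this kernel and the i.i.d.\ variables $(X_r,Y_r)$ in place of $X_r$: it yields
$$\E|U_n^{(2)}|^q\le C_q\, n^{-q}\bigl(\E H_n^2\bigr)^{q/2}\vee C_q\, n^{-3q/2+1}\,\E|H_n|^q.$$
It then remains to replace $\E H_n^2$ and $\E|H_n|^q$ by the cleaner quantities $\int\!\!\int K_n^2\,\m_2\times\m_2\,d(G\times G)$ and $\int\!\!\int |K_n|^q\,\m_q\times\m_q\,d(G\times G)$. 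For the $L_2$ term this is an identity up to lower-order pieces: expanding $\E H_n^2$ and using that the three ``removed'' terms are precisely the Hoeffding projections, one gets $\E H_n^2=\E(K_n(X_1,X_2)Y_1Y_2)^2$ minus nonnegative corrections, and $\E(K_n(X_1,X_2)Y_1Y_2)^2=\int\!\!\int K_n^2\,\m_2\times\m_2\,d(G\times G)$ by conditioning; dropping the corrections only increases the bound. For the $L_q$ term one uses $|H_n|^q\lesssim |K_n(X_1,X_2)Y_1Y_2|^q+|K_n\m(X_1)Y_1|^q+|K_n\m(X_2)Y_2|^q+|\E U_n|^q$ and checks by conditioning (and Jensen for the $K_n\m$ terms, since $K_n\m(x)=\int K_n(x,v)\m(v)\,dG(v)$ so $|K_n\m(x)|^q\le \int|K_n(x,v)|^q\,dG(v)\cdot\|\m\|_\infty^{q-1}\cdot(\cdots)$, and $\m_q$ dominates) that each term is $\lesssim \int\!\!\int |K_n|^q\,\m_q\times\m_q\,d(G\times G)$, again at the cost of the constant $C_q$.

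The main obstacle is bookkeeping rather than conceptual: matching Lemma~\ref{LemmaGLZ}, which is stated for a degenerate kernel in a single variable, to the two-argument kernel $H_n$ on the enlarged space $(\X\times\RR)$, and then passing from the ``raw'' moments $\E H_n^2$, $\E|H_n|^q$ to the weighted integrals involving $\m_2$ and $\m_q$ without losing the correct powers of $n$. One must be careful that the doubly-centered kernel $H_n$ is genuinely degenerate (it is, by construction of the Hoeffding decomposition, since $\E(H_n\given X_1,Y_1)=0$ a.s.) and that the decoupling/conditioning steps do not introduce spurious factors. Everything else --- the Marcinkiewicz--Zygmund step, the domination of centering terms, the Jensen bounds on $K_n\m$ --- is routine and absorbed into $C_q$.
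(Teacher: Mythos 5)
Your proposal is correct and follows essentially the same route as the paper: Marcinkiewicz--Zygmund for $U_n^{(1)}$ and Lemma~\ref{LemmaGLZ} for $U_n^{(2)}$, applied to the doubly-centered degenerate kernel $H_n$ on the enlarged sample space $\X\times\RR$. The only real divergence is in the bookkeeping for the ``removed'' terms. The paper controls all of them at once by noting that conditional expectation is an $L_q$-contraction: since $K_n\m(X_1)Y_1=\E\bigl(K_n(X_1,X_2)Y_1Y_2\given X_1,Y_1\bigr)$ and $\E U_n=\E\bigl(K_n(X_1,X_2)Y_1Y_2\bigr)$, one has $\E|K_n\m(X_1)Y_1|^q\le\E|K_n(X_1,X_2)Y_1Y_2|^q$ and likewise for $\E U_n$, which immediately gives both the $L_2$ and $L_q$ bounds on $H_n$ in terms of the raw kernel, with no Jensen step on $K_n\m$ and no assumptions beyond convexity. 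Your route --- Pythagoras for $\E H_n^2$, then the $c_q$-inequality plus Jensen applied to $K_n\m(x)=\int K_n(x,v)\m(v)\,dG(v)$ for $\E|H_n|^q$ --- reaches the same place but is slightly more laborious and, as written, a bit muddled (the mention of $\|\m\|_\infty^{q-1}$ is not what you want; the clean chain is Jensen on the $G$-integral followed by $|\m|^q\le\m_q$). For the centering in $U_n^{(1)}$, your ``even crudely it is $\lesssim\int|K_n\m|^q\m_q\,dG$'' remark is fixable by the same Jensen steps, but the paper simply invokes $\E|Z-\E Z|^q\le 2^q\E|Z|^q$, which is the cleaner and more direct move.
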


\begin{proof}
The first inequality follows from the Marcinkiewicz-Zygmund
inequality and the fact that $\E|Z-\E Z|^q\le 2^q \E |Z|^q$, for any random variable $Z$. 
To obtain the second  we apply 
Lemma~\ref{LemmaGLZ} to $U_n^{(2)}$, which is a  degenerate $U$-statistic
with kernel $K_n(X_1,X_2)Y_1Y_2-\Pi_n(X_1,X_2,Y_1,Y_2)$, for
$\Pi_n$ the sum of the conditional expectations of
$K_n(X_1,X_2)Y_1Y_2$ relative to $(X_1,Y_1)$ and $(X_2,Y_2)$
minus $\E U_n$. Because (conditional) expectation 
is a contraction for the $L_q$-norm ($\E \bigl|\E(Z\given \A)\bigr|^q\le
\E |Z|^q$ for any random variable $Z$ and conditioning $\s$-field
$\A$), we can bound the $L_2$- and $L_q$-norms of the 
degenerate kernel, appearing in the bound
obtained from Lemma~\ref{LemmaGLZ}, by a constant (depending on $q$) 
times the $L_2$- of $L_q$-norm of the kernel $K_n(X_1,X_2)Y_1Y_2$.
\end{proof}


\subsection{Proof of Lemma~\ref{LemmaOne}}
The statistic $U_n-V_n$ is a $U$-statistic of the same type as $U_n$,
except that the kernel $K_n$ is replaced by $K_n(1-1_{\X_n})$ for
$\X_n=\cup_m(\X_{n,m}\times\X_{n,m})$. The variance of $U_n-V_n$ is
given by formula (\ref{EqVarU}), but with $K_n$ replaced by 
the kernel operator with kernel $K_{n,n}=K_n(1-1_{\X_n})$. 
The corresponding kernel operator
is $K_{n,n}f= K_nf-\sum_mK_n(f1_{\X_{n,m}})1_{\X_{n,m}}$,
and hence 
\begin{align*}
\thalf\|K_{n,n}f\|_G^2
&\le \|K_nf\|_G^2+\Bigl\|\sum_m K_n(f1_{\X_{n,m}})1_{\X_{n,m}}\Bigr\|_G^2\\
&\le\|K_nf\|_G^2+\sum_m\|K_n(f1_{\X_{n,m}})\|_G^2\\
&\le\|K_n\|^2\|f\|_G^2+\sum_m\|K_n\|^2\|f1_{\X_{n,m}}\|_G^2
\le2\|K_n\|^2\|f\|_G^2.
\end{align*}
It follows that the operator norms $\|K_{n,n}\|_2$ of the operators
$K_{n,n}$  are uniformly bounded in $n$ (cf.\ equation (\ref{EqBoundedNorms})
for the operators $K_n$).
Applying decomposition (\ref{EqVarU}) to the kernel $K_{n,n}$
we see that $\var(U_n-V_n)=O(n^{-1})+2k_{n,n}/n^2$,
where $k_{n,n}$ is the $L_2(G\times G)$-norm
$k_{n,n}$ of the kernel $K_{n,n}$ weighted by $\m_2\times\m_2$,
as in (\ref{EqDefinitionkn}) but with $K_n$ replaced by $K_{n,n}$.
By assumption (\ref{EqConditionOne}) the norm
$k_{n,n}$ is negligible relative to the same norm  
(denoted $k_n$) of the original kernel.
Because the variance of $U_n$ is asymptotically 
equivalent to $2k_n/n^2$ and $k_n/n\ra\infty$, this proves
the claim.

\subsection{Proof of Lemma~\ref{LemmaTwo}}
The variable $V_n$ can be written as the sum $V_n=\sum_m V_{n,m}$, for
\begin{equation}
V_{n,m}={1\over n(n-1)}\dsum_{1\le r\not=s\le n}K_n(X_r,X_s)Y_rY_s
1_{(X_r,X_s)\in \X_{n,m}\times\X_{n,m}}.
\label{EqVm}
\end{equation}
Given the vector of bin-indicators $I_n$ the observations $(X_r,Y_r)$
are independently generated from the conditional distributions
in which $X_r$ is conditioned to fall in bin $\X_{n,I_{n,r}}$, 
as given in (\ref{EqCondisbutOne})-(\ref{EqCondisbutTwo}).
Because each variable $V_{n,m}$ depends only on the observations $(X_r,Y_r)$
for which $X_r$ falls in bin $\X_{n,m}$, the variables
$V_{n,1},\ldots, V_{n,M_n}$ are conditionally independent. The conditional
asymptotic normality of $V_n$ given $I_n$ can therefore be
established by a central limit theorem for independent
variables.

The variable $V_{n,m}$ is equal to 
$N_{n,m}(N_{n,m}-1)/\bigl(n(n-1)\bigr)$ times
a $U$-statistic of the type (\ref{EqU}), based on $N_{n,m}$ observations
$(X_r,Y_r)$ from the conditional distribution where $X_r$ is conditioned
to fall in $\X_{n,m}$. The corresponding kernel operator
is given by
\begin{align}
K_{n,m}f(x)
&=\int K_n(x,v)f(v)1_{\X_{n,m}\times\X_{n,m}}(x,v)\,{dG(v)\over p_{n,m}}
\nonumber\\
&={K(f1_{\X_{n,m}})(x)1_{\X_{n,m}}(x)\over p_{n,m}}.
\label{EqDefKnm}\end{align}
We can decompose each $V_{n,m}$ into
its Hoeffding decomposition $V_{n,m}=\E(V_{n,m}\given I_n)
+V_{n,m}^{(1)}+V_{n,m}^{(2)}$ relative to the conditional 
distribution given $I_n$. We shall show that
\begin{equation}
\E \left({|\sum_m V_{n,m}^{(1)}|\over \sd (V_n\given I_n)}
\given I_n\right)\prob 0.
\label{EqVnOneToZero}\end{equation}
To prove Lemma~\ref{LemmaTwo} it then suffices to show
that  the sequence $\sum_m V_{n,m}^{(2)}/ \sd (V_n\given I_n)$
converges conditionally given $I_n$ weakly to the standard normal distribution,
in probability. By Lyapounov's theorem, this follows from,
for some $q>2$,
\begin{equation}
{\sum_m \E\bigl(|V_{n,m}^{(2)}|^q\given I_n\bigr)
\over \sd(V_n\given I_n)^{q}}
\prob 0.\label{EqLyapounov}\end{equation}
By Lemma~\ref{LemmaFour} the conditional standard deviation
$\sd(V_n\given I_n)$ is asymptotically equivalent in probability
to the unconditional standard deviation, and
by Lemma~\ref{LemmaOne} this is equivalent to $\sd U_n$, which
is equivalent to $\sqrt{2k_n/n^2}$. Thus in both
(\ref{EqVnOneToZero}) and (\ref{EqLyapounov}) the conditional
standard deviation in the denominator
may be replaced by $\sqrt{2k_n/n^2}$.
 
In view of the first assertion of Corollary~\ref{CorollaryMoments},
$$\var (V_{n,m}^{(1)}\given I_n)\le 
C_2 \Bigl({N_{n,m}(N_{n,m}-1)\over n(n-1)}\Bigr)^2 \,  N_{n,m}^{-1}
\int \left|{K_n(\m 1_{\X_{n,m}})\over p_{n,m}}\right|^2\m_2\,
1_{\X_{n,m}}{dG\over p_{n,m}}.$$
By Lemma~\ref{LemmaMomentBinomial} (below,
note that $(np_{n,m})^2\lesssim (np_{n,m})^3$ in view
of (\ref{EqConditionThree})) the expectation of the
right side is bounded above by a constant times
$${(np_{n,m})^3\over n^2(n-1)^2p_{n,m}^3}
\,\|\m_2\|_\infty\,\|K_n(\m 1_{\X_{n,m}})\|_G^2
\le {1\over n}\,\|\m_2\|_\infty\,\|K_n\|^2\,\|\m 1_{\X_{n,m}}\|_G^2.$$
In view of (\ref{EqBoundedNorms}) 
the sum over $m$ of this expression is bounded
above by a multiple of $1/n$, which is $o(k_n/n^2)$ by
assumption (\ref{EqKnToInfinity}). Because $\E(V_{n,m}^{(1)}\given I_n)=0$,
this concludes the proof of 
(\ref{EqVnOneToZero}).

In view of the second assertion of Corollary~\ref{CorollaryMoments},
\begin{align*}
\E |V_{n,m}^{(2)}|^q
&\le  C_q\,\Bigl({N_{n,m}(N_{n,m}-1)\over n(n-1)}\Bigr)^q\times\\
&\biggl[
N_{n,m}^{-q}\,\Bigl(\int\!\!\int K_n^2\ \m_2\times\m_2\,
1_{\X_{n,m}\times\X_{n,m}}\,{dG\times G\over p_{n,m}^2}\Bigr)^{q/2}\\
&\qqqquad\qquad
\vee N_{n,m}^{-3q/2+1}\,\int\!\!\int |K_n|^q\ \m_q\times\m_q\,
1_{\X_{n,m}\times\X_{n,m}}\,{dG\times G\over p_{n,m}^2}\biggr].
\end{align*}
By Lemma~\ref{LemmaMomentBinomial} the expectation of the
right side is bounded above by a constant times
\begin{align*}
&{(np_{n,m})^q\over n^q(n-1)^qp_{n,m}^q}\,
\Bigl(\int\!\!\int K_n^2\ \m_2\times\m_2\,
1_{\X_{n,m}\times\X_{n,m}}\,dG\times G\Bigr)^{q/2}\\
&\qqqquad+ {(np_{n,m})^{q/2+1}\over n^q(n-1)^qp_{n,m}^2}\,
\int\!\!\int |K_n|^q\ \m_q\times\m_q\,
1_{\X_{n,m}\times\X_{n,m}}\,dG\times G.
\end{align*}
With $\a_{n,m}(q)=\int\!\!\int |K_n|^q\ \m_q\times\m_q\,
1_{\X_{n,m}\times\X_{n,m}}\,dG\times G$ it follows that 
\begin{align*}
\sum_m{\E |V_{n,m}^{(2)}|^q\over (k_n/n^2)^{q/2}}
&\lesssim\sum_m\Bigl({\a_{n,m}(2)\over k_n}\Bigr)^{q/2}
+ \sum_m\Bigl({p_{n,m}\over n}\Bigr)^{q/2-1}\,{\a_{n,m}(q)\over k_n^{q/2}}\\
&\minqquad\minqquad\minqquad\le\max_m\Bigl({\a_{n,m}(2)\over k_n}\Bigr)^{q/2-1}
\sum_m{\a_{n,m}(2)\over k_n}
+ \max_m\Bigl({p_{n,m}\over n}\Bigr)^{q/2-1}
\sum_m\,{\a_{n,m}(q)\over k_n^{q/2}}.
\end{align*}
The right side tends to zero
by assumptions (\ref{EqConditionOne}),
(\ref{EqConditionTwo}) and (\ref{EqConditionFour}).
This concludes the proof of (\ref{EqLyapounov}).

\subsection{Proof of Lemma~\ref{LemmaThree}}
Only pairs $(X_r,X_s)$ that fall in one of the sets
$\X_{n,m}\times \X_{n,m}$ contribute to the double sum
(\ref{EqV}) that defines $V_n$. Given $I_n$ there are
$N_{n,m}(N_{n,m}-1)$ pairs that fall in $\X_{n,m}$
and the distribution of the corresponding 
vectors $(X_r,Y_r), (X_s,Y_s)$
is determined as in (\ref{EqCondisbutOne})-(\ref{EqCondisbutTwo}).
From this it follows that 
$$\E(V_n\given I_n)
={1\over n(n-1)}\sum_m
N_{n,m}(N_{n,m}-1)\int\!\int\!\!\!K_n\,\m\times\m\,1_{\X_{n,m}\times\X_{n,m}}
{dG\times G\over p_{n,m}^2}.$$
Defining the numbers 
$\a_{n,m}=\int\int K_n\ \m\times\m\,1_{\X_{n,m}\times\X_{n,m}}\,dG\times G$,
we infer that 
$$\E(V_n\given I_n)-\E V_n=
\sum_m\Bigl({N_{n,m}(N_{n,m}-1)\over n(n-1)p_{n,m}^2}-1\Bigr)\a_{n,m}.$$
By the Cauchy-Schwarz inequality, the numbers $\a_{n,m}$ satisfy
$$|\a_{n,m}|\le \|K_n(\m1_{\X_{n,m}})\|_G\|\m1_{\X_{n,m}}\|_G
\le \|K_n\|\|\m1_{\X_{n,m}}\|_G^2\lesssim \|K_n\|\|\m\|_\infty^2 p_{n,m}.$$
In particular $\sum_m|\a_{n,m}|\lesssim 1$.
In view of (\ref{EqBoundedNorms}) 
the numbers  $s_n^2$ given in (\ref{EqDefinitionSn}) (below)
are of the order $M_n/n^2+1/n$.
Lemma~\ref{LemmaLLNMultinomialForm} (below)
therefore implies that the right side of the second last
display is of the order $O_P(\sqrt{M_n}/n+1/\sqrt n)=O(1/\sqrt n)$,
because (\ref{EqConditionThree}) implies that $M_n\lesssim n$.
By assumption (\ref{EqKnToInfinity})
this is smaller than $\sqrt {k_n/n^2}$, which is
of the same order as $\sd V_n$.

\subsection{Proof of Lemma~\ref{LemmaFour}}
By (\ref{EqVarU}) applied to the variables $V_{n,m}$ defined in (\ref{EqVm}),
\begin{align*}
\var(V_n\given I_n)
&=\sum_m \var(V_{n,m}\given I_n)\\
&=\sum_m \Bigl({N_{n,m}(N_{n,m}-1)\over n(n-1)}\Bigr)^2
\biggl[{4(N_{n,m}-2)\over N_{n,m}(N_{n,m}-1)}
\|(K_{n,m}\m)\sqrt{\m_2}\|_{G_{n,m}}^2\\
&\qqqquad
-{4(N_{n,m}-2)+2\over N_{n,m}(N_{n,m}-1)}
\langle K_{n,m}\m,\m\rangle_{G_{n,m}}^2
+{2k_{n,m}\over N_{n,m}(N_{n,m}-1)}\biggr],
\end{align*}
where the operator $K_{n,m}$ is given in (\ref{EqDefKnm}),
the distribution $G_{n,m}$ is defined in (\ref{EqCondisbutOne}), and
$$k_{n,m}=\int\!\!\int K_n^2\,\m_2\times\m_2\ 
1_{\X_{n,m}\times\X_{n,m}}\,{dG\times G
\over p_{n,m}^2}=:{\a_{n,m}(2)\over p_{n,m}^2}.$$
We can split this into three terms. 
By Lemma~\ref{LemmaMomentBinomial} the expected value of the
first term is bounded by a multiple of 
$$\sum_m{(np_{n,m})^3\over n^2(n-1)^2p_{n,m}^3}\|\m_2\|_\infty
\|K_n(\m1_{\X_{n,m}})\|_G^2
\le {1\over n}\|\m_2\|_\infty\|K_n\|^2\|\m\|_G^2.$$
Similarly the expected value of the
absolute value of the second term is bounded by a multiple of 
\begin{align*}
\sum_m{(np_{n,m})^3\over n^2(n-1)^2p_{n,m}^4}
\langle K_n(\m1_{\X_{n,m}}),\m1_{\X_{n,m}}\rangle_G^2
&\le \sum_m{1\over np_{n,m}}\|K_n\|^2\|\m1_{\X_{n,m}}\|_G^4\\
&\le {1\over n}\|K_n\|^2\|\m\|_\infty^2\|\m\|_G^2.
\end{align*}
These two terms divided by $k_n/n^2$ tend to zero,
by (\ref{EqKnToInfinity}).

By Lemma~\ref{LemmaOne} and (\ref{EqVarUKns}) 
we have that $\var V_n\sim 2k_n/n(n-1)$, which in term
is asymptotically equivalent to $2\sum_m\a_{n,m}(2)/n(n-1)$,
by (\ref{EqConditionOne}).
It follows that
\begin{align*}\var(V_n\given I_n)-\var V_n
&=2\sum_m{N_{n,m}(N_{n,m}-1)\over n^2(n-1)^2}k_{n,m}-2{k_n\over n(n-1)}
+o\Bigl({k_n\over n^2}\Bigr)\\
&=2\sum_m\Bigl({N_{n,m}(N_{n,m}-1)\over n(n-1)p_{n,m}^2}-1\Bigr){\a_{n,m}(2)
\over n(n-1)}
+o\Bigl({k_n\over n^2}\Bigr).
\end{align*}
Here the coefficients $\a_{n,m}(2)/k_n$ 
satisfy the conditions imposed on $\a_{n,m}$
in Corollary~\ref{CorLLNMultinomialForm}, in view
of  (\ref{EqConditionOne}) and (\ref{EqConditionTwo}).
Therefore this corollary shows that
the expression on the right is $o_P(k_n/n^2)$.

\subsection{Auxiliary lemmas on multinomial variables}

\begin{lemma}
\label{LemmaMomentBinomial}
Let $N$ be binomially distributed with parameters $(n,p)$. For any
$r\ge 2$ there exists a constant $C_r$ such that 
$\E N^r1_{N\ge 2}\le C_r \bigl((np)^r\vee (np)^2\bigr)$.
\end{lemma}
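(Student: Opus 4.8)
The plan is to expand $N^r$ in falling factorials, use the explicit formula for the factorial moments of the binomial distribution, and treat the linear term separately, since it is the only one whose bound is not automatically of order $(np)^2$ once the indicator $\mathbf 1_{N\ge 2}$ is present.

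First I would treat the case that $r$ is an integer. Writing $N^{(j)}=N(N-1)\cdots(N-j+1)$, the identity $N^r=\sum_{j=1}^r S(r,j)\,N^{(j)}$ with nonnegative Stirling numbers $S(r,j)$ reduces everything to the factorial moments $\E N^{(j)}=n(n-1)\cdots(n-j+1)\,p^j\le(np)^j$. For $j\ge 2$ this already gives $\E N^{(j)}\mathbf 1_{N\ge 2}\le\E N^{(j)}\le(np)^j$, and for the linear term I would use that $N\le N(N-1)$ on $\{N\ge 2\}$, hence $N\,\mathbf 1_{N\ge 2}\le N(N-1)$ pointwise and $\E[N\,\mathbf 1_{N\ge 2}]\le\E N^{(2)}\le(np)^2$. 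Summing yields
\[
\E N^r\mathbf 1_{N\ge 2}\ \le\ S(r,1)(np)^2+\sum_{j=2}^r S(r,j)(np)^j .
\]
A two-case split finishes this step: if $np\ge 1$ then (using $r\ge 2$) $(np)^2\le(np)^r$ and $(np)^j\le(np)^r$ for $j\le r$, so the right-hand side is at most $B_r(np)^r$ with $B_r=\sum_{j=1}^r S(r,j)$; if $np<1$ then $(np)^j\le(np)^2$ for every $j\ge 2$, so it is at most $B_r(np)^2$. Either way $\E N^r\mathbf 1_{N\ge 2}\le B_r\bigl((np)^r\vee(np)^2\bigr)$.

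For non-integer $r$ I would reduce to $r'=\lceil r\rceil$. On $\{N\ge 1\}$ one has $N^r\le N^{r'}$, so when $np<1$ the integer case gives $\E N^r\mathbf 1_{N\ge 2}\le\E N^{r'}\mathbf 1_{N\ge 2}\le C_{r'}(np)^2$ directly. When $np\ge 1$ I would instead apply Jensen's inequality to the concave map $x\mapsto x^{r/r'}$: $\E N^r=\E\bigl(N^{r'}\bigr)^{r/r'}\le(\E N^{r'})^{r/r'}\le C_{r'}^{r/r'}(np)^{r}$, using $\E N^{r'}\le C_{r'}(np)^{r'}$ for $np\ge 1$. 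Combining the two regimes gives the statement with $C_r=\max\{C_{r'},C_{r'}^{r/r'}\}$.

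I do not expect a genuine obstacle here; the only point requiring care is the handling of the linear term $S(r,1)N$ under the indicator $\mathbf 1_{N\ge 2}$. This is precisely what separates $\E N^r$ (which is of order $np$ when $np$ is small) from $\E N^r\mathbf 1_{N\ge 2}$ (which is of order $(np)^2$), and hence what makes $(np)^r\vee(np)^2$, rather than $(np)^r\vee np$, the correct bound; everything else is routine bookkeeping with the falling-factorial identity and the explicit binomial factorial moments.
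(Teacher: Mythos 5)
Your proof is correct and reaches the same bound by a genuinely different decomposition. The paper writes $r=\underline r+\delta$ with $\underline r\in\NN$ and $0\le\delta<1$, and starts from the pointwise inequality $N^r 1_{N\ge2}\le C_r N^\delta N(N-1)\cdots(N-\underline r+1)+C_r N^\delta N(N-1)$; it then evaluates the expectation by the explicit binomial identity (shifting to binomials $N_1\sim\mathrm{Bin}(n-\underline r,p)$ and $N_2\sim\mathrm{Bin}(n-2,p)$) and applies Jensen to $\E N_j^\delta\le (np)^\delta$. You instead treat integer $r$ first with the Stirling expansion $N^r=\sum_j S(r,j)N^{(j)}$, use the clean factorial moments $\E N^{(j)}\le(np)^j$, and isolate the $j=1$ term via $N 1_{N\ge2}\le N(N-1)$; you then lift to non-integer $r$ via $r'=\lceil r\rceil$ and Jensen applied to the concave map $x\mapsto x^{r/r'}$. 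The underlying mechanism is the same in both cases: falling factorial moments of the binomial, Jensen's inequality to handle the fractional exponent, and the observation that on $\{N\ge 2\}$ the linear contribution can be absorbed into the quadratic one (this is why the bound is $(np)^2$ rather than $np$ for small $np$). Your route is cleaner for integer $r$ because the Stirling identity sidesteps the paper's need to prove the auxiliary pointwise comparison $N^{\underline r}1_{N\ge2}\lesssim (N)_{\underline r}+N(N-1)$; the paper's route is more uniform in that it handles integer and fractional parts in a single pass rather than two stages. One small point you should make explicit in the $np\ge1$, non-integer branch: the bound $\E N^{r'}\le C_{r'}(np)^{r'}$ you invoke is for the moment without the indicator, whereas your integer-case lemma only gives the bound for $\E N^{r'}1_{N\ge2}$. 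This is easily fixed either by noting that your Stirling computation already yields $\E N^{r'}\le B_{r'}(np)^{r'}$ for $np\ge1$ without the indicator, or by adding back the $N=1$ contribution, $\E N^{r'}1_{N=1}=\Pr(N=1)\le np\le (np)^{r'}$; it only changes the constant.
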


\begin{proof}
\def\rr{{\underline r}}
For $r=\rr+\d$ with $\rr$ an integer and $0\le\d<1$
there exists a constant $C_r$ with
$N^r1_{N\ge 2}\le C_r N^\d N(N-1)\cdots (N-\rr+1)
+C_r N^\d N(N-1)$ for every $N$.
Hence 
\begin{align*}
\E N^r1_{N\ge 2}
&\le C_r \sum_{k= 2}^nk^\d  \Bigl( k(k-1)\cdots (k-\rr+1)
+ k(k-1)\Bigr){n\choose k}p^k(1-p)^{n-k}\\
&=C_r\Bigl((np)^\rr\,\E N_1^\d+(np)^2\, \E N_2^\d\Bigr),
\end{align*}
for $N_1$ and $N_2$ binomially distributed with parameters
$n-\rr$ and $p$ and $n-2$ and $p$, respectively.
By Jensen's inequality $\E N_j^\d\le (\E N_j)^\d$, which
is bounded above by $(n p)^\d$, yielding the
upper bound $C_r \bigl((np)^r+ (np)^{2+\d}\bigr)$. 
If $np\le 1$, then this is bounded above by $2C_r(np)^2$
and otherwise by $2C_r(np)^r$.
\end{proof}

The next result is a law of large numbers 
for a quadratic form in multinomial vectors of increasing
dimension. The proof is based on a comparison
of multinomial variables to Poisson variables
along the lines of the proof of a 
central limit theorem in \cite{Morris}.

\begin{lemma}
\label{LemmaLLNMultinomialForm}
For each $n$ let $N_n$ be multinomially distributed with parameters
$(n,p_{n,1},\ldots, p_{n,M_n})$ 
with $\max_m p_{n,m}\ra 0$ as $n\ra\infty$
and $\liminf_{n\ra\infty }n\min_m p_{n,m}>0$. 
For given numbers $\a_{n,m}$ let
\begin{equation}
s_n^2={2\over n^2}\sum_m {\a_{n,m}^2\over p_{n,m}^2}
+{4\over n}\sum_mp_{n,m}\Bigl({\a_{n,m}\over p_{n,m}}-\sum_m\a_{n,m}\Bigr)^2.
\label{EqDefinitionSn}
\end{equation}
Then 
$$\sum_m \a_{n,m}\Bigl({N_{n,m}(N_{n,m}-1)\over n(n-1)p_{n,m}^2}-1\Bigr)
=O_P\Bigl(s_n+{\sum_m|\a_{n,m}|\over\sqrt n}\Bigr).$$
\end{lemma}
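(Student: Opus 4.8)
The plan is to estimate the left side in $L_2$ by exploiting the explicit moment structure of the multinomial distribution, after reducing to a sum of "centered" terms. Write
$$Z_n=\sum_m \a_{n,m}\Bigl(\frac{N_{n,m}(N_{n,m}-1)}{n(n-1)p_{n,m}^2}-1\Bigr).$$
Since $\E N_{n,m}(N_{n,m}-1)=n(n-1)p_{n,m}^2$, we have $\E Z_n=0$, so it suffices to bound $\var Z_n$ and invoke Chebyshev's inequality to conclude $Z_n=O_P(\sqrt{\var Z_n})$. The first step is therefore to compute or bound $\var Z_n=\sum_{m,m'}\a_{n,m}\a_{n,m'}\Cov\bigl(W_{n,m},W_{n,m'}\bigr)$, where $W_{n,m}=N_{n,m}(N_{n,m}-1)/(n(n-1)p_{n,m}^2)$. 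Here the diagonal terms $m=m'$ will produce, up to constants, the first piece $n^{-2}\sum_m \a_{n,m}^2/p_{n,m}^2$ of $s_n^2$, while the off-diagonal covariances — which are negative and of smaller order because the multinomial counts are only weakly negatively correlated — will, after using the linear-in-$N$ fluctuations, produce a term of the form $n^{-1}\sum_m p_{n,m}\bigl(\a_{n,m}/p_{n,m}-\sum_{m'}\a_{n,m'}\bigr)^2$, matching the second piece of $s_n^2$.

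The cleanest route to these covariance bounds, and the one hinted at by the remark preceding the statement, is Poissonization: compare $N_n$ with a vector of independent Poisson variables $P_{n,m}$ with means $np_{n,m}$, using that the multinomial law is the Poisson law conditioned on $\sum_m P_{n,m}=n$. Under the Poisson model the counts are independent, so $\var\bigl(\sum_m \a_{n,m}(P_{n,m}(P_{n,m}-1)/(n(n-1)p_{n,m}^2)-1)\bigr)$ splits as a single sum over $m$; using $\E P_{n,m}^4=O((np_{n,m})^4\vee (np_{n,m}))$ (and $\liminf n\min_m p_{n,m}>0$, so $(np_{n,m})^k\asymp (np_{n,m})^{k'}$ up to constants only when $np_{n,m}\gtrsim1$, which holds) gives exactly $O(s_n^2)$ for the Poissonized variance. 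The passage back to the multinomial then has to account for the conditioning on the total; this is the source of the second term in $s_n^2$ and of the additional $\bigl(\sum_m|\a_{n,m}|/\sqrt n\bigr)$ term in the conclusion — conditioning on $\sum_m P_{n,m}=n$ shifts each $W_{n,m}$ by an amount governed by the linear statistic $\sum_m \a_{n,m}(P_{n,m}-np_{n,m})/(np_{n,m})$, whose standard deviation is the second summand of $s_n$, plus a lower-order correction of size $\sum_m|\a_{n,m}|/n$ coming from replacing $n(n-1)$-type normalizations by $n^2$-type ones.

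The main obstacle is carrying out the de-Poissonization carefully: one must control the Radon–Nikodym factor between the conditioned Poisson law and the target, or equivalently expand $W_{n,m}$ to first order in the fluctuation of the total count, and check that all remainder terms are dominated by $s_n+\sum_m|\a_{n,m}|/\sqrt n$. An alternative that avoids Poissonization is to use the known joint factorial moments of the multinomial directly: $\E\bigl[\prod_m (N_{n,m})_{(a_m)}\bigr]=(n)_{(\sum a_m)}\prod_m p_{n,m}^{a_m}$, from which $\Cov(W_{n,m},W_{n,m'})$ can be written in closed form and then bounded termwise. Either way, the bulk of the work is bookkeeping; the structural input — that $Z_n$ is mean zero and that its variance decomposes into an independent-like diagonal part plus a rank-one correction from the total-count constraint — is what makes the bound $O_P(s_n+\sum_m|\a_{n,m}|/\sqrt n)$ come out, and this is exactly the form needed in the proofs of Lemmas~\ref{LemmaThree} and~\ref{LemmaFour}.
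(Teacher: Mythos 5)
Your overall strategy — bound $Z_n:=\sum_m\a_{n,m}(W_{n,m}-1)$ with $W_{n,m}=N_{n,m}(N_{n,m}-1)/(n(n-1)p_{n,m}^2)$ by its variance and apply Chebyshev — is sound, and your ``alternative'' route via the exact multinomial factorial moments actually delivers the lemma and is simpler than the paper's argument. Using $\E\prod_m(N_{n,m})_{(a_m)}=(n)_{(\sum a_m)}\prod_m p_{n,m}^{a_m}$ together with the identity $(N(N-1))^2=N_{(4)}+4N_{(3)}+2N_{(2)}$, one finds
$$\var Z_n=-\frac{4n-6}{n(n-1)}\Bigl(\sum_m\a_{n,m}\Bigr)^2+\frac{4(n-2)}{n(n-1)}\sum_m\frac{\a_{n,m}^2}{p_{n,m}}+\frac{2}{n(n-1)}\sum_m\frac{\a_{n,m}^2}{p_{n,m}^2},$$
and a direct subtraction shows $|\var Z_n-s_n^2|\lesssim(\sum_m|\a_{n,m}|)^2/n$ under $\liminf n\min_m p_{n,m}>0$; Chebyshev then gives $Z_n=O_P(s_n+\sum_m|\a_{n,m}|/\sqrt n)$. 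This is genuinely different from, and more elementary than, the paper's proof.

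Your first route, however, has two concrete gaps that match exactly the places where the paper does real work. First, the Poissonized variance of the \emph{naive} quadratic form is not $O(s_n^2)$: since $\var(P(P-1))=2\l^2+4\l^3$ for $P\sim\mathrm{Pois}(\l)$, one gets $\tfrac{2}{n^2}\sum\a^2/p^2+\tfrac{4}{n}\sum\a^2/p$, which exceeds $s_n^2$ by $\tfrac{4}{n}(\sum\a)^2$; the paper avoids this by first rewriting the quadratic form (using $\sum_m N_{n,m}=n$) so that the linear Poisson--Charlier coefficient is \emph{centered}, i.e.\ $\sqrt{\l_{n,m}}(\a_{n,m}/\l_{n,m}-\sum_m\a_{n,m}/n)$ rather than $\a_{n,m}/\sqrt{\l_{n,m}}$, after which the Poissonized variance is exactly $s_n^2/2$. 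Second, the de-Poissonization is not a heuristic shift of each $W_{n,m}$; the paper needs a genuine argument, which it supplies by coupling the multinomial vectors $N_n(k)$ for $|k-n|\le\sqrt n$ so that $N_n(k')=N_n(k)+N_n'(k'-k)$ with an independent increment, showing $\E|T_n(N_n(k'))-T_n(N_n(k))|=O(\sum_m|\a_{n,m}|/\sqrt n)$, and then transferring the Poisson tail bound (which controls $T_n(N_n(k))$ for \emph{some} $k$ in this window) to $k=n$. If you pursue the Poissonization route you must reproduce both of these steps; as written, they are only flagged as obstacles.
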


\begin{proof}
Because $\sum_m \a_{n,m}\bigl((n-1)/n-1\bigr)=\sum_m\a_{n,m}(-1/n)$, 
it suffices to prove the statement of the lemma with $n(n-1)$ replaced
by $n^2$. Using the fact that
$\sum_m N_{n,m}=n$ we can rewrite the resulting quadratic form 
as, with $\l_{n,m}=np_{n,m}$,
\begin{align*}
\sum_m\a_{n,m} 
\Bigl(&{N_{n,m}(N_{n,m}-1)\over n^2p_{n,m}^2}-1\Bigr)
=\sqrt 2\sum_m{\a_{n,m}\over\l_{n,m}}C_2(N_{n,m},\l_{n,m})\\
&\qqqquad+2\sum_m\sqrt{\l_{n,m}}\Bigl({\a_{n,m} \over\l_{n,m}}-{\sum_m\a_{n,m}\over n}
\Bigr)C_1(N_{n,m},\l_{n,m}),
\end{align*}
for $C_1$ and $C_2$ the Poisson-Charlier polynomials of degrees
1 and 2, given by
$$C_1(x,\l)={x-\l\over \sqrt \l},\qquad 
C_2(x,\l)={x(x-1)-2\l x+\l^2\over \sqrt 2\l}.$$
Together with $x\mapsto C_0(x)=1$ the functions $x\mapsto C_1(x,\l)$ 
and $x\mapsto C_2(x,\l)$ are the polynomials $1, x, x^2$ orthonormalized
for the Poisson distribution with mean $\l$ by
the Gramm-Schmidt procedure. For $X=(X_1,\ldots,X_{M_n})$ let 
\begin{align*}
T_n(X)&=\sum_m{\a_{n,m}\over\l_{n,m}}C_2(X_m,\l_{n,m})\\
&\qqqquad
+\sum_m\sqrt{2\l_{n,m}}\Bigl({\a_{n,m} \over\l_{n,m}}-{\sum_m\a_{n,m}\over n}
\Bigr)C_1(X_m,\l_{n,m}).
\end{align*}
Thus up to a factor $\sqrt 2$ the statistic
$T_n(N_n)$ is the quadratic form of interest.

If the variables $N_{n,1},\ldots,N_{n,M_n}$ were independent
Poisson variables with mean values $\l_{n,m}$,
then the mean of $T_n(N_n)$ would be zero and the variance would be given 
by $s_n^2/2$, and hence in that case $T_n(N_n)=O_P(s_n)$.
We shall now show that the difference between multinomial and
Poisson variables is of the order $\sum_m|\a_{n,m}|/\sqrt n$.

To make the link between multinomial and Poisson variables,
let $\tilde n$ be a Poisson variable with mean $n$ 
and given $\tilde n=k$ let
$\tilde N_n=(\tilde N_{n,1},\ldots,\tilde N_{n,M_n})$ 
be multinomially distributed with
parameters $k$ and $p_n=(p_{n,1},\ldots, p_{n,M_n})$.
The original multinomial vector $N_n$ is then equal in distribution
to $\tilde N_n$ given $\tilde n=n$.
Furthermore, the vector $\tilde N_n$ is
unconditionally Poisson distributed as in the preceding paragraph,
whence, for any $M_n\ra\infty$,
$$\Pr\bigl(|T_n(\tilde N_n)|>M_ns_n\bigr)\ra 0.$$
The left side is bigger than
\begin{align*}
&\sum_{k: |k-n|\le \sqrt n}
\Pr\bigl(|T_n(\tilde N_n)|>M_ns_n\given\tilde n=k\bigr)\Pr(\tilde n=k)\\
&\qqqquad\ge \min_{k: |k-n|\le \sqrt n}
\Pr\bigl(|T_n(N_n(k))|>M_ns_n\bigr)
\Pr\bigl(|\tilde n-n|\le\sqrt n\bigr),
\end{align*}
where the vector $N_n(k)$ is multinomial 
with parameters $k$ and $p_n$.
Because the sequence $(\tilde n-n)/\sqrt n$ tends to a standard normal
distribution as $n\ra\infty$, the probability 
$\Pr\bigl(|\tilde n-n|\le\sqrt n\bigr)$ tends
to the positive constant $\Phi(1)-\Phi(-1)$.  We conclude that the
sequence of minima on the right tends to zero. The probability of
interest is the term with $k=n$ in the minimum. Therefore the
proof is complete once we show that
the minimum and maximum of the terms are comparable.

To compare the terms with different $k$ we 
couple the multinomial vectors $N_n(k)$ on a single probability space.
For given $k<k'$ we construct these vectors
such that $N_n(k')=N_n(k)+N_n'(k'-k)$ for
$N_n'(k'-k)$ a multinomial vector with parameters $k'-k$ and
$p_n$ independent of $N_n(k)$. For any numbers $N$ and $N'$ we have that
$C_2(N+N',\l)-C_2(N,\l)=\bigl((N')^2+2NN'-N'(1+2\l)\bigr)/(\sqrt 2\l)$. 
Therefore,
\begin{align*}
&\qquad\E \Bigl|\sum_m {\a_{n,m}\over\l_{n,m}}
C_2\bigl(N_{n,m}(k'),\l_{m,n}\bigr)
-\sum_m {\a_{n,m}\over\l_{n,m}}C_2\bigl(N_{n,m}(k),\l_{m,n}\bigr)\Bigr|\\
&\le\sum_m {|\a_{n,m}|\over\l_{n,m}}
{\E\bigl|N_{n,m}'(k'\!\!-k)^2+2N_{n,m}'(k'\!\!-k)N_{n,m}(k)
-N_{n,m}'(k'\!\!-k)(1\!+\!2\l_{n,m})\bigr|\over \sqrt 2\l_{n,m}}.
\end{align*}
For $|k'-n|\le\sqrt n$ and $|k-n|\le\sqrt n$ the binomial variable
$N_{n,m}(k'-k)$ has first and second moment bounded by a multiple
of $\sqrt np_{n,m}$ and $np_{n,m}^2$. From this 
the right side of the display can be seen to 
be of the order $\sum_m|\a_{n,m}|O(n^{-1/2}):=\r_n$.
Similarly, we have $C_1(N+N',\l)-C_1(N,\l)=N'/\sqrt \l$ and
$$\E\Bigl|\sum_m\sqrt{\l_{n,m}}\Bigl({\a_{n,m}\over\l_{n,m}}-
{\sum_m\a_{n,m}\over n}\Bigr)\bigl(C_1(N_{n,m}(k'),\l_{n,m})
-C_1(N_{n,m}(k),\l_{n,m})\bigr)\Bigr|$$
can be seen to be of the order $\sum_m |\a_{n,m}/\l_{n,m}-\sum_m\a_{n,m}/n|
\sqrt np_{n,m}$, which is also of the order 
$\r_n$.

We infer from this that $\E \bigl|T_n(N_n(k))-T_n(N_n(n))\bigr|=O(\r_n)$,
uniformly in $|k-n|\le\sqrt n$, and therefore
\begin{align*}
&\Pr\bigl(|T_n(N_n(n))|>M_n(s_n+\r_n)\bigr)\\
&\qquad \le 
\Pr\bigl(|T_n(N_n(k))|> M_ns_n\bigr)+
\Pr\bigl(|T_n(N_n(n))-T_n(N_n(k))|> M_n\r_n\bigr)\\
&\qquad\le \Pr\bigl(|T_n(N_n(k))|> M_ns_n\bigr)+o(1),
\end{align*}
uniformly in $|k-n|\le\sqrt n$, for every $M_n\ra\infty$,
by Markov's inequality. 
In the preceding paragraph it was seen that the minimum
of the right side over $k$ with $|k-n|\le\sqrt n$ tends to zero
for any $M_n\ra\infty$. Hence so does the left side.
\end{proof}

Under the additional condition that
$${1\over s_n^2}\max_m \biggl[{\a_{n,m}^2\over n^2p_{n,m}^2}
+{p_{n,m}\over n}\Bigl({\a_{n,m}\over p_{n,m}}-\sum_m\a_{n,m}\Bigr)^2\biggr]
\ra 0,$$
it follows from Corollary~4.1 in \cite{Morris} that
the sequence $s_n^{-1}$ times the quadratic form in the
preceding lemma tends in distribution to the standard normal distribution.
Thus in this case the order claimed by the lemma is 
sharp as soon as $n^{-1/2}\sum_m|\a_{n,m}|$ is not bigger than $s_n$.

\begin{corollary}
\label{CorLLNMultinomialForm}
For each $n$ let $N_n$ be multinomially distributed with parameters
$(n,p_{n,1},\ldots, p_{n,M_n})$ 
with $\liminf_{n\ra\infty }n\min_m p_{n,m}>0$. 
If $\a_{n,m}$ are numbers with
$\sum_m|\a_{n,m}|=O(1)$ and $\max_m|\a_{n,m}|\ra 0$ as
$n\ra\infty$, then 
$$\sum_m \a_{n,m}\Bigl({N_{n,m}(N_{n,m}-1)\over n(n-1)p_{n,m}^2}-1\Bigr)
\prob 0.$$
\end{corollary}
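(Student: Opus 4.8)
The statement is a direct consequence of Lemma~\ref{LemmaLLNMultinomialForm}, which asserts that the quadratic form in question equals $O_P\bigl(s_n+n^{-1/2}\sum_m|\a_{n,m}|\bigr)$ for $s_n$ as in (\ref{EqDefinitionSn}). Since $\sum_m|\a_{n,m}|=O(1)$ by hypothesis, it is enough to show that $s_n\ra0$: then $s_n+n^{-1/2}\sum_m|\a_{n,m}|\ra0$, and the lemma yields $\sum_m\a_{n,m}\bigl(N_{n,m}(N_{n,m}-1)/(n(n-1)p_{n,m}^2)-1\bigr)\prob0$. (One caveat: Lemma~\ref{LemmaLLNMultinomialForm} also carries the hypothesis $\max_m p_{n,m}\ra0$, which is not assumed in the corollary. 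In every application in this paper it is supplied by condition~(\ref{EqConditionTwoHalf}); in general one first splits off the boundedly many indices $m$ with $p_{n,m}$ bounded away from $0$, for which the corresponding summand $\a_{n,m}\bigl(N_{n,m}(N_{n,m}-1)/(n(n-1)p_{n,m}^2)-1\bigr)$ has mean zero and variance $O(\a_{n,m}^2/n)\ra0$ directly from the binomial law of $N_{n,m}$, and applies the lemma to the remaining coordinates.)

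To show $s_n\ra0$ I bound the two summands of (\ref{EqDefinitionSn}), using throughout that $\liminf_n n\min_m p_{n,m}>0$ forces $1/(np_{n,m})\lesssim1$ uniformly in $m$. For the first summand,
\[
\frac2{n^2}\sum_m\frac{\a_{n,m}^2}{p_{n,m}^2}
=2\sum_m\a_{n,m}^2\Bigl(\frac1{np_{n,m}}\Bigr)^2
\lesssim\sum_m\a_{n,m}^2
\le\Bigl(\max_m|\a_{n,m}|\Bigr)\sum_m|\a_{n,m}|\ra0,
\]
the final convergence holding because the first factor tends to $0$ while the second is $O(1)$. For the second summand, with $S_n:=\sum_m\a_{n,m}$ and using $(a-b)^2\le2a^2+2b^2$ together with $\sum_m p_{n,m}=1$,
\[
\frac4n\sum_m p_{n,m}\Bigl(\frac{\a_{n,m}}{p_{n,m}}-S_n\Bigr)^2
\le\frac8n\sum_m\frac{\a_{n,m}^2}{p_{n,m}}+\frac{8S_n^2}{n}
\lesssim\sum_m\a_{n,m}^2+\frac{S_n^2}{n}\ra0,
\]
since $\sum_m\a_{n,m}^2\ra0$ as above and $S_n^2=O(1)$. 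Hence $s_n\ra0$, which completes the argument.

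There is no genuine obstacle here: essentially all the work sits in Lemma~\ref{LemmaLLNMultinomialForm}, and this corollary merely records that both of its error quantities vanish in the regime $\sum_m|\a_{n,m}|=O(1)$, $\max_m|\a_{n,m}|\ra0$. The only point demanding a little attention is the mismatch between the hypotheses of the lemma and of the corollary, resolved by the peeling remark above.
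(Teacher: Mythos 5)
Your argument follows the same route as the paper's proof: bound both summands of $s_n^2$ using $np_{n,m}\gtrsim 1$ to get $s_n^2\lesssim\sum_m\a_{n,m}^2\le\max_m|\a_{n,m}|\cdot\sum_m|\a_{n,m}|\ra 0$, then invoke Lemma~\ref{LemmaLLNMultinomialForm}. Your treatment of the second summand via $(a-b)^2\le 2a^2+2b^2$ is slightly cruder than the paper's, which exploits the identity $\sum_m p_{n,m}\bigl(\a_{n,m}/p_{n,m}-S_n\bigr)^2=\sum_m\a_{n,m}^2/p_{n,m}-S_n^2$ (a consequence of $\sum_m p_{n,m}=1$), but the same estimate comes out.

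You are right that the corollary's statement omits the hypothesis $\max_m p_{n,m}\ra 0$ which the lemma requires, and the paper's one-line proof passes over this silently; flagging it is a genuine contribution. Your proposed peeling repair, however, has a technical gap: after removing the indices $m$ for which $p_{n,m}$ is bounded away from zero, the remaining coordinates of $N_n$ no longer form a multinomial vector (their total is $n$ minus the random count in the removed cells), yet the lemma's proof uses the constraint $\sum_m N_{n,m}=n$ to rewrite the quadratic form in Poisson--Charlier polynomials. Making the peeling rigorous would require conditioning on the removed coordinates and controlling the resulting change of $n$ and of the renormalized probabilities uniformly. Since in every use in the paper the condition $\max_m p_{n,m}\ra 0$ is supplied by~(\ref{EqConditionTwoHalf}), the cleaner fix is simply to add it to the corollary's hypotheses, as you also suggest.
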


\begin{proof}
Since $np_{n,m}\gtrsim 1$ by assumption the numbers $s_n$
defined in (\ref{EqDefinitionSn}) satisfy
\begin{align*}
s_n^2&\le 
2\sum_m {\a_{n,m}^2\over n^2p_{n,m}^2}
+4\sum_m{\a_{n,m}^2\over n p_{n,m}}
\lesssim \sum_m\a_{n,m}^2.
\end{align*}
The corollary is a consequence of Lemma~\ref{LemmaLLNMultinomialForm}.
\end{proof}

\section{Proofs for Section~\ref{SectionExamples}}

\begin{proof}[Proof of Corollary~\ref{CorollaryEstimatorIntegral}]
We consider the distribution of $T_n$ conditionally given
the observations used to construct the initial estimators
$\hat b_n$ and $\hat g_n$. By passing to subsequences of $n$, we may assume that these sequences converge
almost surely  to $b$ and $g$ relative to the uniform norm.
In the proof of distributional convergence the initial estimators
$\hat b_n$ and $\hat g_n$ may therefore be understood to be deterministic
sequences that converge to limits $b$ and $g$.

The estimator (\ref{EqRegressionEstimator}) is a sum
$T_n=T_n^{(1)}+T_n^{(2)}$ of a linear and quadratic part.
The (conditional) variance of the linear term $T_n^{(1)}$
is of the order $1/n$, which is of smaller order than
$k_n/n^2$. It follows that $(T_n^{(1)}-\E T_n^{(1)})/(\sqrt{k_n}/n)$ 
tends to zero in probability.

To study the quadratic part $T_n^{(2)}$ 
we apply Theorem~\ref{TheoremMainResult} with the kernel
$K_n$ of the theorem taken equal to the present $K_{k_n,\hat g_n}$
and the $Y_r$ of the theorem taken equal to the
present $Y_r-\hat b_n(X_r)$. For given functions $b_1$ and $g_1$, set
\begin{align*}
\m_q( b_1)(x)&= \E\bigl(|Y_1-b_1(X_1)|^q\given X_1=x\bigr)
= \E\bigl(|\e_1+(b-b_1)(x)|^q\given X_1=x\bigr),\\ 
k_n(b_1,g_1) 
&=\int\!\!\int (\m_2(b_1)\times \m_2(b_1))\,K_{k_n, g_1}^2\,d(G\times G).
\end{align*}
The function $\m_q(\hat b_n)$ converges
uniformly to the function $\m_q(b)$, which is uniformly
bounded by assumption, for $q=1$, $q=2$ and some $q>2$.
Furthermore $K_{k_n,\hat g_n}=K_{k_n,g}
\sqrt{g\times g/\hat g_n\times \hat g_n}$, where the 
function $g\times g/\hat g_n\times \hat g_n$
converges uniformly to one. Therefore, the conditions of
Theorem~\ref{TheoremMainResult} (for the case that the observations
are non-i.i.d.; cf. the remark following the theorem) are satisfied by
Theorem~\ref{TheoremWavelets} or~\ref{TheoremFourier}. 
Hence the sequence 
$(T_n^{(2)}-\E T_n^{(2)})/\sqrt{\hat k_n}/n^2$ tends to a standard normal
distribution, for $\hat k_n=k_n(\hat b_n,\hat g_n)$. From the
conditions on the initial estimators it follows that $\hat k_n/k_n(b,g)\ra 1$. 
Here $k_n(b,g)$ is of the order the dimension $k_n$ of the kernel.

Let $T_n(b_1,g_1)$ be as $T_n$, but with the initial estimators
$\hat b_n$ and $\hat g_n$ replaced by $b_1$ and $g_1$.
Its expectation   is given by
\begin{align*}
&e(b_1,g_1)=\E _{b,g}T_n(b_1,g_1)\\
&\qquad= \int b_1^2\,dG+\int 2b_1(b-b_1)\,dG
+\int\!\!\int (b-b_1)\times (b-b_1)\,K_{k_n,g_1}\,dG\times G.
\end{align*}
In particular $e(b,g)=\int b^2\,dG$. Using the fact that
$K_{k_n,g}$ is an orthogonal projection in $L_2(G)$ we can write
\begin{align}
e(b_1,g_1)-e(b,g)
&=-\int (b_1-b)^2\,dG
+\int\!\!\int (b-b_1)\times (b-b_1)\,K_{k_n,g_1}\,dG\times G\nonumber\\
&=-\bigl\|(I-K_{k_n,g})(b_1-b)\bigr\|_G^2
\label{EqEMinE}\\
&\qquad\qquad
+\int\!\!\int (b-b_1)\times (b-b_1)\,(K_{k_n,g_1}-K_{k_n,g})\,dG\times G.
\nonumber
\end{align}
By the definition of $K_{k_n,g}$ the absolute value of the first term
on the right can be bounded as
$$\Bigl\|(b-b_1)-\lin\Bigl({e_1\over \sqrt g},\ldots,{e_k\over \sqrt g}\Bigr)
\Bigr\|_G^2
=\Bigl\|(b-b_1)\sqrt g-\lin\bigl(e_1,\ldots,e_k\bigr)\Bigr\|_\LEB^2.$$
By assumption $b$ is $\b$-H\"older and $g$ 
is $\g$-H\"older for some $\g\ge \b$ and bounded away from zero. 
Then $b\sqrt g$ is
$\b$-H\"older and hence its uniform distance to $\lin(e_1,\ldots,e_k)$
is of the order $(1/k)^\b$. If the norm of $\hat b_n$ in $C^\b[0,1]$
is bounded, then we can apply the same argument to 
the functions $\hat b_n\sqrt g$, uniformly in $n$, and conclude that
the expression in the display with $\hat b_n$
instead of $b_1$ is bounded above by $O_P(1/k_n)^{2\b}$.
If the projection is on the Haar basis and
$\hat b_n$ is contained in $\lin (e_1,\ldots, e_{k_n})$, then
the approximation error can be seen to be of the same order,
from the fact that the product of two projections on the Haar basis is
itself a projection on this basis.

For $h=(\sqrt g-\sqrt g_1)/\sqrt{gg_1}$ we can write
$${1\over \sqrt {g_1(x_1)}\sqrt {g_1(x_2)}}
-{1\over \sqrt {g(x_1)}\sqrt {g(x_2)}}
=h(x_1)\Bigl({1\over \sqrt{g_1(x_2)}}\Bigr)
+h(x_2)\Bigl({1\over \sqrt{g(x_1)}}\Bigr).$$
If multiplied by a symmetric function in $(x_1,x_2)$ and
integrated with respect to $G\times G$, the arguments $x_1$ and $x_2$
in the second term can be exchanged.
The second term on the right in (\ref{EqEMinE}) can therefore be written
\begin{align*}
&\Bigl\langle K_{k_n,\l}\bigl((b-b_1)h\bigr),
(b-b_1)\Bigl({1\over \sqrt{g_1}}+{1\over \sqrt{g}}\Bigr) \Bigr\rangle_G\\
&\qquad\lesssim \bigl\|K_{k_n,\l}\bigl((b-b_1)h\bigr)\bigr\|_{G,3/2}
\|b-b_1\|_{G,3}\\
&\qquad\lesssim \bigl\|(b-b_1)h\bigr\|_{G,3/2}\|b-b_1\|_{G,3}
\lesssim \|b-b_1\|_{\l,3}\|h\|_{\l,3}\|b-b_1\|_{\l,3}.
\end{align*}
Here $\|\cdot\|_{G,3}$ is the $L_3(G)$-norm, we use the fact
that $L_2$-projection on a wavelet basis decreases 
$L_p$-norms for $p=3/2$ up to constants, and the multiplicative
constants depend on uniform upper and lower bounds on the functions 
$g_1$ and $g$. We evaluate this expression for
$b_1=\hat b_n$ and $g_1=\hat g_n$, and see that it is
of the order $O\bigl(\|\hat b_n-b\|_3^2\|\hat g_n-g\|_3\bigr)$.

Finally we note that $\hat{\E}_{b,g} T_n=e(\hat b_n,\hat g_n)$ and
combine the preceding bounds.
\end{proof}

\section{Appendix: proofs for Section~\ref{SectionProjectionKernels}}

\begin{lemma}
\label{LemmaNormProjection}
The kernel of an orthogonal projection on a $k$-dimensional space
has operator norm $\|K_k\|_2=1$,
and square $L_2(G\times G)$-norm 
$\int\int K_k^2\,d(G\times G)=k$. 
\end{lemma}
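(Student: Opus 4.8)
The plan is to reduce both assertions to computations with an orthonormal basis of the range space. Let $V\subseteq L_2(G)$ be the $k$-dimensional subspace onto which $K_k$ projects orthogonally, and fix an orthonormal basis $e_1,\ldots,e_k$ of $V$. Then $K_kf=\sum_{i=1}^k\langle f,e_i\rangle_G\,e_i$, so $K_k$ is the kernel operator with kernel $K_k(x_1,x_2)=\sum_{i=1}^k e_i(x_1)e_i(x_2)$, exactly as in (\ref{EqProjectionKernelOnBasis}). This representation (and the $L_2(G\times G)$-integrability of the kernel) is immediate because the sum is finite and each $e_i\in L_2(G)$.

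For the operator norm, I would use that an orthogonal projection is idempotent and self-adjoint. For any $f\in L_2(G)$, writing $f=K_kf+(I-K_k)f$ with the two summands orthogonal in $L_2(G)$ gives $\|f\|_G^2=\|K_kf\|_G^2+\|(I-K_k)f\|_G^2$, hence $\|K_kf\|_G\le\|f\|_G$ and $\|K_k\|_2\le 1$. Taking any nonzero $f\in V$, for which $K_kf=f$, shows the bound is attained, so $\|K_k\|_2=1$.

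For the squared $L_2(G\times G)$-norm I would expand the product of the two finite sums and apply Fubini (justified by finiteness of the sum and $e_i\in L_2(G)$):
\begin{align*}
\int\!\!\int K_k^2\,d(G\times G)
&=\sum_{i=1}^k\sum_{j=1}^k\Bigl(\int e_i(x_1)e_j(x_1)\,dG(x_1)\Bigr)
\Bigl(\int e_i(x_2)e_j(x_2)\,dG(x_2)\Bigr)\\
&=\sum_{i=1}^k\sum_{j=1}^k\langle e_i,e_j\rangle_G^2
=\sum_{i=1}^k\sum_{j=1}^k\delta_{ij}=k,
\end{align*}
using orthonormality of $e_1,\ldots,e_k$ in $L_2(G)$.

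There is no real obstacle here: the only points requiring a word of care are the validity of the kernel expansion (\ref{EqProjectionKernelOnBasis}) and the interchange of sum and integral, both trivial since $k<\infty$ and $e_i\in L_2(G)$. The computation is otherwise routine, and the result is independent of the choice of orthonormal basis of $V$, as it must be.
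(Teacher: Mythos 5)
Your proof is correct and takes essentially the same route as the paper's (which is more terse): exhibit the kernel in the form (\ref{EqProjectionKernelOnBasis}) via an orthonormal basis of the range, deduce the operator norm from idempotence and self-adjointness, and compute the $L_2(G\times G)$-norm by expanding the double sum. The paper additionally notes explicitly that the kernel of a kernel operator is uniquely determined by the operator, which is what licenses identifying \emph{the} kernel of $K_k$ with your constructed one; you flag this as the point requiring care but treat it as immediate, which is fine since it is a standard Hilbert--Schmidt fact.
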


\begin{proof}
The operator norm is one, because an orthogonal projection decreases norm
and acts as the identity on its range.
It can be verified that the kernel of a kernel operator
is uniquely defined by the operator. Hence the kernel of a projection 
on a $k$-dimensional space can be written in the form 
(\ref{EqProjectionKernelOnBasis}), from which the $L_2$-norm can be computed. 
\end{proof}

\beginskip
\begin{proof}[Proof of Proposition~\ref{TheoremHaar}]
Because the density of $G$ and the function
$\m_2$ are bounded above and below the $L_2(G\times G)$-norm
$k_n$ as in (\ref{EqDefinitionkn}) is of the same order as the
dimension $k_n=2^I$ of the kernel. On the set
where it is nonzero, the kernel $K_k$ in (\ref{EqHaarKernel})
is of the order $k_n$. We choose the sets $\X_{n,m}$
to be unions of $l_n$ adjacent sets of the type
$(j2^{-I},(j+1)2^{-I}]=\bigl(j/k_n,(j+1)/k_n\bigr]$,
giving $M_n=k_n/l_n$ sets $\X_{n,m}$. Then 
$${1\over k_n}\int_{\X_{n,m}}\!\!\int_{\X_{n,m}} K_n^2\ 
(\m_2\times\m_2)\,d(G\times G)\lesssim {1\over k_n}
k_n^2\, l_n\Bigl({1\over k_n}\Bigr)^2={1\over M_n}.$$
It follows that (\ref{EqConditionTwo}) is  satisfied for any
$M_n\ra\infty$.

Because $M_n^{-1}\lesssim G(\X_{n,m})\lesssim M_n^{-1}$
conditions (\ref{EqConditionTwoHalf}) and (\ref{EqConditionThree})
are satisfied if $M_n\ra\infty$ with $M_n\le n$. Finally
(\ref{EqConditionFour}) can be replaced by (\ref{EqConditionFourSimplified})
in view of Lemma~\ref{LemmaConditionFourSimplified}, and this
is satisfied, for instance,
if we choose $M_n=n$.
\end{proof}

\endskip

\begin{proof}[Proof of Proposition~\ref{TheoremWavelets}]
We can reexpress the wavelet expansion (\ref{EqWaveletExpansionf}) 
to start from level $I$ as 
$$f(x)=\sum_{j\in\ZZ^d}\sum_{v\in\{0,1\}^d}
\langle f,\psi_{I,j}^v\rangle \psi_{I,j}^v(x)
+\sum_{i=I+1}^\infty\sum_{j\in\ZZ^d}\sum_{v\in\{0,1\}^d-\{0\}}
\langle f,\psi_{i,j}^v\rangle \psi_{i,j}^v(x).$$
The projection kernel $K_k$  sets the coefficients
in the second sum equal to zero, and hence can also be
expressed as 
$$K_k(x_1,x_2)
=\sum_{j\in J_I}\sum_{v\in\{0,1\}^d}
\psi_{I,j}^v(x_1)\psi_{I,j}^v(x_2).$$
The double integral of the square of this function
over $\RR^{2d}$ is equal to the number of terms in the double
sum (cf.\ (\ref{EqProjectionKernelOnBasis}) and the remarks following it),
which is $O(2^{Id})$.
The support of only a small fraction of functions in the
double sum intersects the boundary of $\X$.
Because also the density of $G$ and the function $\m_2$
are bounded above and below, it follows that the 
weighted double integral $k_n$ of $K_k^2$ relative to $G$
as in (\ref{EqDefinitionkn}) is also of the exact order 
$O(2^{Id})$.

Each function 
$(x_1,x_2)\mapsto \psi_{I,j}^v(x_1)\psi_{I,j}^v(x_2)$ 
has uniform norm bounded above by $2^{Id}$ times the 
uniform norm of the base wavelet of which it is a shift and dilation.
A given point $(x_1,x_2)$ belongs to the support of fewer than $C_1^d$
of these functions, for a constant $C_1$ that depends on
the shape of the support of the wavelets. Therefore, the uniform
norm of the kernel $K_k$ is of the order $k_n$.

By assumption each function $\psi_{I,j}^v$ is supported within a set
of the form $2^{-I}(C+j)$ for a given cube $C$ that depends on the
type of wavelet, for any $v$. It follows that the function
$(x_1,x_2)\mapsto \psi_{I,j}^v(x_1)\psi_{I,j}^v(x_2)$ vanishes outside
the cube $2^{-I}(C+j)\times 2^{-I}(C+j)$.  There are $O(2^{Id})$ of
these cubes that intersect $\X\times\X$; these intersect the diagonal
of $\X\times\X$, but may be overlapping.  We choose the sets
$\X_{n,m}$ to be blocks (cubes) of $l_n^d$ adjacent cubes
$2^{-I}(C+j)$, giving $M_n=O(k_n/l_n^d)$ sets $\X_{n,m}$.  [In the
case $d=1$, the ``cubes'' are intervals and they can be ordered linearly;
the meaning of ``adjacent'' is then clear. For $d>1$ cubes are
``adjacent'' in $d$ directions.  We stack $l_n$ cubes $2^{-I}(C+j)$ in
each direction, giving cubes $\X_{n,m}$ of sides with lengths $l_n$
times the length of a cube $2^{-I}(C+j)$.]

Because the kernels are bounded by a multiple of $k_n$,
condition (\ref{EqConditionFour}) is implied by 
(\ref{EqConditionFourSimplified}), in view of Lemma~\ref{LemmaConditionFourSimplified},
The latter condition reduces to $M_n^{-1}k/n\ra0$, the probabilities
$G(\X_{n,m})$ being of the order $1/M_n$.

The set of cubes  $2^{-I}(C+j)$ that intersects more than one
set $\X_{n,m}$ is of the order $M_n^{1/d} k_n^{1-1/d}$.
To see this picture the set $\X$ as a supercube
consisting of the $M$ cubes $\X_{n,m}$, stacked together in a
$M^{1/d}\times\cdots\times M^{1/d}$-pattern. For each coordinate
$i=1,\ldots,d$ the stack of cubes $\X$ can be sliced in $M^{1/d}$
layers each consisting of $(M^{1/d})^{d-1}$ cubes $\X_{m,n}$,
which are $l_n(k_n^{1/d})^{d-1}=l_n^d(M_n^{1/d})^{d-1}$ cubes $2^{-I}(C+j)$.
The union of the boundaries of all slices ($i=1,\ldots,d$ and
$M_n^{1/d}$ slices for each $i$) contains the union
of the boundaries of the sets $\X_{n,m}$. 
The boundary between two particular slices is intersected by
at most $C_2(k_n^{1/d})^{d-1}$ cubes $2^{-I}(C+j)$, for a constant
$C_2$ depending on the amount of overlap between the cubes.
Thus in total of the order $d M_n^{1/d} (k_n^{1/d})^{d-1}$ cubes
intersect some boundary.

If $K_k(x_1,x_2)\not=0$, then there exists $j$ and $v$ with
$\psi_{I,j}^v(x_1)\psi_{I,j}^v(x_2)\not=0$, which implies
that there exists $j$ such that $x_1,x_2\in 2^{-I}(C+j)$. If 
the cube $2^{-I}(C+j)$ is contained in some $\X_{n,m}$, then
$(x_1,x_2)\in\X_{n,m}\times\X_{n,m}$. In the other
case $2^{-I}(C+j)$ intersects the boundary of some $\X_{n,m}$.
It follows that the set of $(x_1,x_2)$ in the complement of 
$\cup_m\X_{n,m}\times\X_{n,m}$ where $K_k(x_1,x_1)\not=0$
is contained in the union $U$  of all cubes $2^{-I}(C+j)$
that intersect the boundary of some $\X_{n,m}$. The integral of $K_k^2$
over this set satisfies
\begin{align*}
{1\over k_n}\dint_U K_k^2\,d(G\times G)
&\lesssim {1\over k_n}\,k_n^2\, (G\times G)(U)\\
&\lesssim
 {1\over k_n}\,k_n^2\, M_n^{1/d} k_n^{1-1/d}  \,\Bigl({1\over k_n}\Bigr)^2
=\Bigl({M_n\over k_n}\Bigr)^{1/d}.
\end{align*}
Here we use that $G\bigl(2^{-I}(C+j)\bigr)\lesssim 1/k_n$.
This completes the verification of (\ref{EqConditionOne}).

By the spatial homogeneity of the wavelet basis, the contributions
of the sets $\X_{n,m}\times\X_{n,m}$ to the integral of $K_k^2$
are comparable in magnitude. Hence condition (\ref{EqConditionTwo}) 
is  satisfied for any $M_n\ra\infty$. 

In order to satisfy conditions (\ref{EqConditionTwoHalf}) and
(\ref{EqConditionThree}) we must choose $M_n\ra\infty$ with $M_n\lesssim n$.
This is compatible with choices such that
$M_n/k_n\ra 0$ and $M_n^{-1}k/n\ra0$.
\end{proof}

\begin{figure}
\label{FigurePartitionWithOverlap}
\centerline{\resizebox{1.6in}{!}{\includegraphics{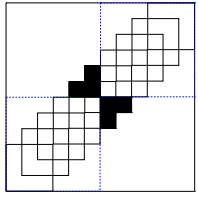}}}
\caption{The support cubes of the wavelets and the
bigger cubes $\X_{n,m}\times\X_{n,m}$.}
\end{figure}

\begin{proof}[Proof of Proposition~\ref{TheoremFourier}]
Because $K_k$ is an orthogonal projection on a $(2k+1)$-dimensional
space, Lemma~\ref{LemmaNormProjection} gives that the operator
norm satisfies $\|K_k\|=1$ and that the numbers
$k_n$ as in (\ref{EqDefinitionkn}) but with $\m_2=1$ are equal to 
$\int\int K_k^2\,d\LEB\,d\LEB=2k+1$.

By the change of variables $x_1-x_2=u$, $x_1+x_2=v$ we find,
for any $\e\in (0,\pi]$, and $K_k(x_1,x_2)=D_k(x_1-x_2)$, 
\begin{align*}
\int_{-\pi}^\pi\!\!\int_{-\pi}^\pi1_{|x_1-x_2|>\e}
K_k^2(x_1,x_2)\,dx_1\,dx_2
&=2\int_\e^{2\pi}\!\!\int_{u-2\pi}^{2\pi-u} D_k^2(u)\thalf\, dv\,du\\
&=2\int_\e^{2\pi} D_k^2(u)(2\pi-u)\, du.
\end{align*}
By the symmetry of the Dirichlet kernel about $\pi$ we can rewrite
$\int_\pi^{2\pi} D_k^2(u)(2\pi-u)\,du$ as $\int_0^\pi D_k^2(u)u\,du$.
Splitting the integral on the right side of the preceding display
over the intervals $(\e,\pi]$ and $(\pi,2\pi]$, and rewriting
the second integral, we see that the preceding display is equal to
$$2\int_\e^\pi D_k^2(u)(2\pi-u)\,du+2\int_0^\pi D_k^2(u)u\,du
=4\pi\int_\e^\pi D_k^2(u)\,du+2\int_0^\e D_k^2(u)u\,du.$$
For $\e=0$ this expression is equal to the square $L_2$-norm of the
kernel $K_k$, which shows that $4\pi\int_0^\pi D_k^2(u)\,du=2k+1$.
On the interval $(\e,\pi)$ the kernel $D_k$ is bounded above by
$\bigl(2\pi \sin(\thalf\e)\bigr)^{-1}$. Therefore, the
preceding display is bounded above by
$${4\pi\over \bigl(2\pi \sin(\thalf\e)\bigr)^2}
\int_\e^\pi \,du+2\e\int_0^\e D_k^2(u)\,du
\lesssim {1\over \sin^2 (\thalf\e)}+\e k.$$
We conclude that, for small $\e>0$,
$${1\over 2k+1}\int_{-\pi}^\pi\!\!\int_{-\pi}^\pi1_{|x_1-x_2|>\e}
K_k^2(x_1,x_2)\,dx_1\,dx_2\lesssim \e+{1\over \e^2 k}.$$
This tends to zero as $k\ra\infty$
whenever $\e=\e_k\da 0$ such that $\e\gg 1/\sqrt k$.

We choose a partition $(-\pi,\pi]=\cup_m\X_{n,m}$ in $M_n=2\pi/\d$
intervals of length $\d$ for $\d\ra0$ with $\d\gg \e$ and
$\e$ satisfying the conditions of the preceding paragraph. Then 
the complement of $\cup_m\X_{n,m}\times\X_{n,m}$ is contained
in $\{(x_1,x_2): |x_1-x_2|>\e\}$ except for a set of
$2(M_n-1)$ triangles, as indicated in Figure~\ref{FigurePartitionWithStrip}.
In order to verify (\ref{EqConditionOne}) it suffices to show that
$(2k+1)^{-1}$ times the integral of $K_k^2$ over the union of the triangles
is negligible. Each triangle has sides of length of the order $\e$,
whence, for a typical triangle $\Delta$, by the change of variables
$x_1-x_2=u$, $x_2=v$, and an interval $I$ of length of the order $\e$,
$$\dint_\Delta K_k^2(x_1,x_2)\,dx_1\,dx_2
\lesssim \int_I\!\int_0^\e D_k^2(u)\,du\,dv
\lesssim  \e (2k+1).$$
Hence (\ref{EqConditionOne}) is satisfied if
$2(M_n-1)\e\ra 0$, i.e.\ $\e\ll\d$.

Because $\int_{\X_{n,m}}\int_{X_{n,m}} K_k^2\,d(\LEB\times \LEB)$ is
independent of $m$, (\ref{EqConditionTwo}) is satisfied as soon as
the number of sets in the partitions tends to infinity.

Because $0\le K_k\le 2k+1$, condition (\ref{EqConditionFour}) is 
implied by (\ref{EqConditionFourSimplified}), which is satisfied
if $\d\ll n/k$.

The desired choices $1/\sqrt k\ll \e\ll\d\ll n/k$ are compatible,
as by assumption $k/n^2\ra 0$.
\end{proof}

\begin{figure}
\label{FigurePartitionWithStrip}
\centerline{\resizebox{1.6in}{!}{\includegraphics{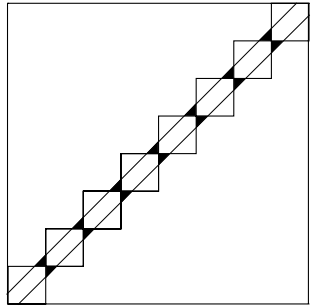}}}
\caption{The triangles used in the proofs of Theorems~\ref{TheoremFourier}
and~\ref{TheoremConvolution}, and the sets $\X_{n,m}\times\X_{n,m}$.}
\end{figure}

\begin{proof}[Proof of Proposition~\ref{TheoremConvolution}]
Without loss of generality we can assume that $\int|\phi|\,d\LEB=1$.
By a change of variables
$$\int K_\s^2\,d(G\times G)
={1\over \s}\int \phi^2(v)\int g(x-\s v)g(x)\,dx\,dv.$$
Here
$\bigl|\int g(x-\s v)g(x)\,dx\bigr|\le \|g\|_\infty$ and,
as $\s\da 0$,
$$\Bigl|\int g(x-\s v)g(x)\,dx-\int g^2(x)\,dx\Bigr|
\le \|g\|_\infty
\int \bigl|g(x-\s v)-g(x)\bigr|\,dx\ra 0,$$
for every fixed $v$, by the $L_1$-continuity theorem.
We conclude by the dominated convergence theorem that 
$\s\int K_\s^2\,d(G\times G)\ra \int g^2\,d\LEB\int\phi^2\,d\LEB$.
Because $\m_2$ is bounded away from 0 and $\infty$,
the numbers $k_n$ defined in (\ref{EqDefinitionkn}) are of the
exact order $\s^{-1}$.

By another change of variables, followed by an application of
the Cauchy-Schwarz inequality, for any $f\in L_2(G)$,
\begin{align*}
\int (K_\s f)^2\,dG&=\int\Bigl(\int\phi(v)(fg)(x-\s v)\,dv\Bigr)^2\,dG(x)\\
&\le \|g\|_\infty^2 \int\!\!\int|\phi|(v)(f^2g)(x-\s v)\,dv\,dx
= \|g\|_\infty^2 \int f^2g\,d\LEB.
\end{align*}
Therefore, the operator norms of the operators $K_\s$
are uniformly bounded in $\s> 0$.

We choose a partition $\RR=\cup_m\X_{n,m}$ consisting of
two infinite intervals $(-\infty,-a]$ and $(a,\infty)$
and a regular partition of the interval $(-a,a]$ in such a
way that every partitioning set satisfies $G(\X_{n,m})\le \d$.
We can achieve this with a partition in $M_n=O(1/\d)$
sets.

Because $|K_\s|$ is bounded by a multiple of $\s^{-1}$,
condition (\ref{EqConditionFour}) is implied
by (\ref{EqConditionFourSimplified}), which takes
the form $\d/(\s n)\ra 0$, in view of 
Lemma~\ref{LemmaConditionFourSimplified}.

For an arbitrary partitioning set $\X_{n,m}$, 
\begin{align*}
\s\int_{\X_{n,m}}\!\!\int_{\X_{n,m}}K_\s^2\,d(G\times G)
&\le \int_{\X_{n,m}}\int \phi^2(v)\,  g(x-\s v)g(x)\,dv\,dx\\
&\le \|g\|_\infty\int \phi^2(v)\,dv\, G(\X_{n,m}).
\end{align*}
It follows that (\ref{EqConditionTwo}) is satisfied
as soon as $\d\ra0$.

Finally, we verify condition (\ref{EqConditionOne}) in two steps.
First, for any $\e\da0$,  by the change of variables
$x_1-x_2=v$, $x_2=x$,
\begin{align*}
\s\!\!\dint_{|x_1-x_2|>\e} K_\s^2\,d(G\times G)
&=\dint_{|v|>\e/\s}\phi^2(v)\,g(x-\s v)g(x)\,dx\,dv\\
&\le \|g\|_\infty\int_{|v|>\e/\s}\phi^2(v)\,dv.
\end{align*}
This converges to zero as $\s\ra0$ for any $\e=\e_\s>0$ with $\e\gg\s$.
Second, for $\e\ll\d$ the complement of the set
$\cup_m\X_{n,m}\times\X_{n,m}$ is contained
in $\{(x_1,x_2): |x_1-x_2|>\e\}$ except for a set of
$2(M_n-1)$ triangles, as indicated in Figure~\ref{FigurePartitionWithStrip}.
In order to verify (\ref{EqConditionOne}) it suffices to show that
$\s$ times the integral of $K_\s^2$ over the union of the triangles
is negligible. Each triangle has sides of length of the order $\e$,
whence, for a typical triangle $\Delta$, with projection $I$
on the $x_1$-axis,
$$\s\dint_\Delta K_\s^2\,d(G\times G)
\lesssim \int_I\!\!\int_{|v|<\e/\s} \phi^2(v)\,g(x-\s v)g(x)\,dv\,dx
\le  \e \|g\|_\infty \int \phi^2(v)\,dv.$$
The total contribution of all triangles is $2(M_n-1)$
times this expression. Hence (\ref{EqConditionOne}) is satisfied if
$2(M_n-1)\e\ra 0$, i.e.\ $\e\ll\d$.

The preceding requirements can be summarized
as $\s\ll\e\ll\d\ll\s n$, and are compatible.
\end{proof}


\begin{proof}[Proof of Proposition~\ref{TheoremSplines}]
Inequality (\ref{EqCj}) implies that $c_j(f)=0$ for every $f$ 
that vanishes outside the interval $(t_j',t_{j+r}')$,
whence the representing function $g_j$ is supported
on this interval. It follows that the function 
$(x_1,x_2)\mapsto N_j(x_1)c_j(x_2)$
vanishes outside the square $[t_j',t_{j+r}']\times[t_j',t_{j+r}']$,
which has area of the order $l^{-2}$. 
We form a partition $(0,1]=\cup_m\X_{n,m}$ by selecting
subsets $0=s_0^l<s_1^l<\cdots<s_{M_n}^l=1$ of the basic knot sequences
such that $M_n^{-1}\lesssim s_{i+1}^l-s_i^l\lesssim M_n^{-1}$ for every $i$
and define $\X_{n,m}=(s_{m-1}^l,s_{m}^l]$.
The numbers $M_n$ are chosen integers much smaller than $l_n$,
and we may set $s_i^l=t_{ip}^l$ for $p=\lfloor l_n/M_n\rfloor$.

Because $K_k$ is a projection on $S_r(T,d)$
and the function $x_1\mapsto K_k(x_1,x_2)$ is contained 
in $S_r(T,d)$ for every $x_2$,
it follows that $\int K_k(x_1,x_2)K_k(x_1,x_2)\,dx_1=K_k(x_2,x_2)$ for
every $x_2$, and hence
\begin{align*}
&\int\!\!\int K_k(x_1,x_2)^2\,dx_1\,dx_2
=\int K_k(x_1,x_1)\,d\LEB(x_1)\\
&\qqqquad=\int \sum_j N_j(x_1)c_j(x_1)\,dx_1
=\sum_jc_j(N_j)=\sum_j 1=k,
\end{align*}
because the identities $N_i=K_kN_i=\sum_j c_j(N_i)N_j$ imply that
$c_j(N_i)=\d_{ij}$ by the linear independence of the B-splines.
Because the density of $G$ and the function
$\m_2$ are bounded above and below the $L_2(G\times G)$-norm
$k_n$ as in (\ref{EqDefinitionkn}) is of the same order as the
dimension $k_n=r+l_nd$ of the spline space. 

Inequality (\ref{EqCj}) implies that the norm of the linear map $c_j$,
which is the infinity norm $\|c_j\|_\infty$ of the representing
function, is bounded above by a constant times $(t_{j+r}'-t_j')^{-1}$,
which is of the order $k_n$.  Therefore,
\begin{align*}
&{1\over k_n}\int\!\!\int_{(\cup_m\X_{n,m}\times\X_{n,m})^c} K_n^2\ 
(\m_2\times\m_2)\,d(G\times G)\\
&\qquad\lesssim {1\over k_n} \, k_n^2\,\|\m_2\|_\infty^2\,
\LEB\Bigl(\bigcup_j(t_j',t_{j+r}']\times (t_j',t_{j+r}']
-\bigcup_m (s_{m-1},s_m]\times (s_{m-1},s_m]\Bigr).
\end{align*}
The set in the right side is the union of $M_n$ cubes
of areas not bigger than the area of the sets 
$(t_j',t_{j+r}']\times (t_j',t_{j+r}']$, which is bounded
above by a constant times $k_n^{-2}$. 
(See Figure~\ref{FigurePartitionWithOverlap}.) The preceding display
is therefore bounded above by
$${1\over k_n}k_n^2\,\|\m_2\|_\infty^2\, M_n\,{1\over k_n^2}.$$
For $M_n/k_n\ra0$ this tends to zero. This completes
the verification of (\ref{EqConditionOne}).

The verification of the other conditions follows the same
lines as in the case of the wavelet basis.
\end{proof}


\bibliographystyle{elsarticle-num} 

\bibliography{normU}

\begin{thebibliography}{10}
\expandafter\ifx\csname url\endcsname\relax
  \def\url#1{\texttt{#1}}\fi
\expandafter\ifx\csname urlprefix\endcsname\relax\def\urlprefix{URL }\fi
\expandafter\ifx\csname href\endcsname\relax
  \def\href#1#2{#2} \def\path#1{#1}\fi

\bibitem{FreedmanFestschrift}
J.~Robins, L.~Li, E.~Tchetgen, A.~van~der Vaart,
  \href{http://dx.doi.org/10.1214/193940307000000527}{Higher order influence
  functions and minimax estimation of nonlinear functionals}, in: Probability
  and statistics: essays in honor of {D}avid {A}. {F}reedman, Vol.~2 of Inst.
  Math. Stat. Collect., Inst. Math. Statist., Beachwood, OH, 2008, pp.
  335--421.
\newblock \href {http://dx.doi.org/10.1214/193940307000000527}
  {\path{doi:10.1214/193940307000000527}}.
\newline\urlprefix\url{http://dx.doi.org/10.1214/193940307000000527}

\bibitem{BickelRitov88}
P.~J. Bickel, Y.~Ritov, Estimating integrated squared density derivatives:
  sharp best order of convergence estimates, Sankhy\=a Ser. A 50~(3) (1988)
  381--393.

\bibitem{BirgeMassart95}
L.~Birg{\'e}, P.~Massart, Estimation of integral functionals of a density, Ann.
  Statist. 23~(1) (1995) 11--29.

\bibitem{Laurent96}
B.~Laurent, Efficient estimation of integral functionals of a density, Ann.
  Statist. 24~(2) (1996) 659--681.

\bibitem{Laurent97}
B.~Laurent, Estimation of integral functionals of a density and its
  derivatives, Bernoulli 3~(2) (1997) 181--211.

\bibitem{LaurentMassart00}
B.~Laurent, P.~Massart, Adaptive estimation of a quadratic functional by model
  selection, Ann. Statist. 28~(5) (2000) 1302--1338.

\bibitem{RobinsMetrika}
J.~Robins, L.~Li, E.~Tchetgen, A.~W. van~der Vaart,
  \href{http://dx.doi.org/10.1007/s00184-008-0214-3}{Quadratic semiparametric
  von {M}ises calculus}, Metrika 69~(2-3) (2009) 227--247.
\newblock \href {http://dx.doi.org/10.1007/s00184-008-0214-3}
  {\path{doi:10.1007/s00184-008-0214-3}}.
\newline\urlprefix\url{http://dx.doi.org/10.1007/s00184-008-0214-3}

\bibitem{vdVStatSci}
A.~van~der Vaart, Higher order tangent spaces and influence functions, Statist.
  Sci. 29~(4) (2014) 679--686.
\newblock \href {http://dx.doi.org/10.1214/14-STS478}
  {\path{doi:10.1214/14-STS478}}.

\bibitem{RobinsAOS}
E.~Tchetgen, L.~Li, J.~Robins, A.~van~der Vaart, Higher order estimating
  equations for high-dimensional semiparametric models, preprint.

\bibitem{RobinsvdV06}
J.~Robins, A.~van~der Vaart, Adaptive nonparametric confidence sets, Ann.
  Statist. 34~(1) (2006) 229--253.

\bibitem{Weber}
N.~C. Weber, \href{http://dx.doi.org/10.1017/S0305004100061168}{Central limit
  theorems for a class of symmetric statistics}, Math. Proc. Cambridge Philos.
  Soc. 94~(2) (1983) 307--313.
\newblock \href {http://dx.doi.org/10.1017/S0305004100061168}
  {\path{doi:10.1017/S0305004100061168}}.
\newline\urlprefix\url{http://dx.doi.org/10.1017/S0305004100061168}

\bibitem{BhattacharyaGhosh}
R.~N. Bhattacharya, J.~K. Ghosh,
  \href{http://dx.doi.org/10.1016/0047-259X(92)90038-H}{A class of
  {$U$}-statistics and asymptotic normality of the number of {$k$}-clusters},
  J. Multivariate Anal. 43~(2) (1992) 300--330.
\newblock \href {http://dx.doi.org/10.1016/0047-259X(92)90038-H}
  {\path{doi:10.1016/0047-259X(92)90038-H}}.
\newline\urlprefix\url{http://dx.doi.org/10.1016/0047-259X(92)90038-H}

\bibitem{JammaladakaRaoJanson}
S.~R. Jammalamadaka, S.~Janson, Limit theorems for a triangular scheme of
  {$U$}-statistics with applications to inter-point distances, Ann. Probab.
  14~(4) (1986) 1347--1358.

\bibitem{deJong87}
P.~de~Jong, \href{http://dx.doi.org/10.1007/BF00354037}{A central limit theorem
  for generalized quadratic forms}, Probab. Theory Related Fields 75~(2) (1987)
  261--277.
\newblock \href {http://dx.doi.org/10.1007/BF00354037}
  {\path{doi:10.1007/BF00354037}}.
\newline\urlprefix\url{http://dx.doi.org/10.1007/BF00354037}

\bibitem{deJong90}
P.~de~Jong, \href{http://dx.doi.org/10.1016/0047-259X(90)90040-O}{A central
  limit theorem for generalized multilinear forms}, J. Multivariate Anal.
  34~(2) (1990) 275--289.
\newblock \href {http://dx.doi.org/10.1016/0047-259X(90)90040-O}
  {\path{doi:10.1016/0047-259X(90)90040-O}}.
\newline\urlprefix\url{http://dx.doi.org/10.1016/0047-259X(90)90040-O}

\bibitem{Mikosch}
T.~Mikosch, A weak invariance principle for weighted {$U$}-statistics with
  varying kernels, J. Multivariate Anal. 47~(1) (1993) 82--102.

\bibitem{Morris}
C.~Morris, Central limit theorems for multinomial sums, Ann. Statist. 3 (1975)
  165--188.

\bibitem{Ermakov}
M.~S. Ermakov, Asymptotic minimaxity of chi-squared tests, Teor. Veroyatnost. i
  Primenen. 42~(4) (1997) 668--695.

\bibitem{KerkPicard96}
G.~Kerkyacharian, D.~Picard, Estimating nonquadratic functionals of a density
  using {H}aar wavelets, Ann. Statist. 24~(2) (1996) 485--507.

\bibitem{RobinsMinimax}
J.~Robins, E.~Tchetgen~Tchetgen, L.~Li, A.~van~der Vaart,
  \href{http://dx.doi.org/10.1214/09-EJS479}{Semiparametric minimax rates},
  Electron. J. Stat. 3 (2009) 1305--1321.
\newblock \href {http://dx.doi.org/10.1214/09-EJS479}
  {\path{doi:10.1214/09-EJS479}}.
\newline\urlprefix\url{http://dx.doi.org/10.1214/09-EJS479}

\bibitem{Neweyetal}
W.~K. Newey, F.~Hsieh, J.~M. Robins, Twicing kernels and a small bias property
  of semiparametric estimators, Econometrica 72~(3) (2004) 947--962.

\bibitem{Daubechies}
I.~Daubechies, Ten lectures on wavelets, Vol.~61 of CBMS-NSF Regional
  Conference Series in Applied Mathematics, Society for Industrial and Applied
  Mathematics (SIAM), Philadelphia, PA, 1992.

\bibitem{DevoreLorentz}
R.~A. DeVore, G.~G. Lorentz, Constructive approximation, Vol. 303 of
  Grundlehren der Mathematischen Wissenschaften [Fundamental Principles of
  Mathematical Sciences], Springer-Verlag, Berlin, 1993.

\bibitem{vdVAS}
A.~W. van~der Vaart, Asymptotic statistics, Vol.~3 of Cambridge Series in
  Statistical and Probabilistic Mathematics, Cambridge University Press,
  Cambridge, 1998.

\bibitem{GineLatalaZinn}
E.~Gin{\'e}, R.~Latala, J.~Zinn, Exponential and moment inequalities for
  {$U$}-statistics, in: High dimensional probability, II (Seattle, WA, 1999),
  Vol.~47 of Progr. Probab., Birkh\"auser Boston, Boston, MA, 2000, pp. 13--38.

\end{thebibliography}

\end{document}